\newif\ifprocs
\title{Characterizing Demand Graphs for  (Fixed-Parameter) Shallow-Light Steiner Network}
\titlerunning{Characterizing Demand Graphs for Shallow-Light Steiner Network} 
\author[1]{Amy Babay}
\author[2]{Michael Dinitz}
\author[3]{Zeyu Zhang}
\affil[1]{Department of Computer Science, Johns Hopkins University\\
  \texttt{babay@cs.jhu.edu}}
\affil[2]{Department of Computer Science, Johns Hopkins University\\
  \texttt{mdinitz@cs.jhu.edu}}
\affil[3]{Department of Computer Science, Johns Hopkins University\\
  \texttt{zyzhang92@gmail.com}}
\authorrunning{A.~Babay, M.~Dinitz, and Z.~Zhang} 
\subjclass{Dummy classification -- please refer to \url{http://www.acm.org/about/class/ccs98-html}}
\keywords{keywords}
\theoremstyle{plain}
\newtheorem{claim}[theorem]{Claim}
\newtheorem{lemma}{Lemma}[section]
\newtheorem{theorem}[lemma]{Theorem}
\newtheorem{definition}[lemma]{Definition}
\newtheorem{corollary}[lemma]{Corollary}
\newtheorem{claim}[lemma]{Claim}
\title{Characterizing Demand Graphs for the (Fixed-Parameter) Shallow-Light Steiner Network Problem}
\author{Amy Babay \qquad  \qquad Michael Dinitz \qquad \qquad Zeyu Zhang\\Department of Computer Science\\Johns Hopkins University}
\newcounter{note}[section]
\begin{document}

\maketitle

\abstract
We consider the \textsc{Shallow-Light Steiner Network} problem from a fixed-parameter perspective. Given a graph $G$, a distance bound $L$, and $p$ pairs of vertices $(s_1,t_1),\mathellipsis,(s_p,t_p)$, the objective is to find a minimum-cost subgraph $G'$ such that $s_i$ and $t_i$ have distance at most $L$ in $G'$ (for every $i \in [p]$). Our main result is on the fixed-parameter tractability of this problem with parameter $p$. We exactly characterize the demand structures that make the problem ``easy'', and give \textsc{FPT} algorithms for those cases.  In all other cases, we show that the problem is \textsc{W}$[1]$-hard.  We also extend our results to handle general edge lengths and costs, precisely characterizing which demands allow for good \textsc{FPT} approximation algorithms and which demands remain \textsc{W}$[1]$-hard even to approximate.

\section{Introduction}\label{sec:intro}

In many network design problems we are given a graph $G = (V, E)$ and some demand pairs $(s_1, t_1), (s_2, t_2), \dots, (s_p, t_p) \subseteq V \times V$, and are asked to find the ``best'' (usually minimum-cost) subgraph in which every demand pair satisfies some type of connectivity requirement.  In the simplest case, if the demands are all pairs and the connectivity requirement is just to be connected, then this is the classical \textsc{Minimum Spanning Tree} problem.  If we consider other classes of demands, then we get more difficult but still classical problems.  Most notably, if the demands form a star (or any connected graph on $V$), then we have the famous \textsc{Steiner Tree} problem.  If the demands are completely arbitrary, then we have the \textsc{Steiner Forest} problem.  Both problems are known to be in \textsc{FPT} with parameter $p$~\cite{dreyfus1971steiner} (i.e., they can be solved in $f(p)\cdot poly(n)$ time for some function $f$).


There are many obvious generalizations of \textsc{Steiner Tree} and \textsc{Steiner Forest} of the general network design flavor given above.  We will be particularly concerned with \emph{length-bounded} variants, which are related to (but still quite different from) \emph{directed} variants.  In \textsc{Directed Steiner Tree} (DST) the input graph is directed and the demands are a directed star (either into or out of the root), while in \textsc{Directed Steiner Network} (DSN) the input graph and demands are both directed, but the demands are an arbitrary subset of $V \times V$.  Both have been well-studied (e.g., \cite{charikar1999approximation,zosin2002directed,chekuri2011set,chlamtac2017approximating,abboud2018reachability}), and in particular it is known that the same basic dynamic programming algorithm used for \textsc{Steiner Tree} will also give an \textsc{FPT} algorithm for \textsc{DST}.  However, \textsc{DSN} is known to be \textsc{W}$[1]$-hard, so it is not believed to be in \textsc{FPT}~\cite{feldmann2017complexity}.

In the length-bounded setting, we typically assume that the input graph and demands are undirected but each demand has a distance bound, and a solution is only feasible if every demand is connected within distance at most the given bound (rather than just being connected).  One of the most basic problems of this form is the \textsc{Shallow-Light Steiner Tree} problem (SLST), where the demands form a star with root $r = s_1 = s_2 = \dots = s_p$ and there is a global length bound $L$ (so in any feasible solution the distance from $r$ to $t_i$ is at most $L$ for all $i \in [p]$).  As with \textsc{DST} and \textsc{DSN}, \textsc{SLST} has been studied extensively~\cite{kortsarz1997approximating,naor1997improved,hajiaghayi2009approximating,guo2014shallow}.  If we generalize this problem to arbitrary demands, we get the \textsc{Shallow-Light Steiner Network} problem, which is the main problem we study in this paper.  Surprisingly, it has not received nearly the same amount of study (to the best of our knowledge, this paper is the first to consider it explicitly).  It is formally defined as follows (note that we focus on the special case of unit lengths, and will consider general lengths in Sections \ref{sec:approxAlgorithm} and \ref{sec:approxHard}):

\begin{definition}[\textsc{Shallow-Light Steiner Network}]
Given a graph $G=(V,E)$, a cost function $c:E\rightarrow\mathbb{R}^+$, a length function $l:E\rightarrow\mathbb{R}^+$, a distance bound $L$, and $p$ pairs of vertices $\{s_1,t_1\},\mathellipsis,\{s_p,t_p\}$. The objective of \textsc{SLSN} is to find a minimum cost subgraph $G'=(V,S)$, such that for every $i\in[p]$, there is a path between $s_i$ and $t_i$ in $G'$ with length less or equal to $L$.
\end{definition}

Let $H$ be the graph with vertex set $\{s_1,\mathellipsis,s_p,t_1,\mathellipsis,t_p\}$ and edge set $\{\{s_1,t_1\},\mathellipsis,\{s_p,t_p\}\}$. We call $H$ the \emph{demand graph} of the problem. We use $|H|$ to represent the number of edges in $H$.



Both the directed and the length-bounded settings share a dichotomy between considering either star demands (DST/SLST) or totally general demands (DSN/SLSN).  But this gives an obvious set of questions: what demand graphs make the problem ``easy'' (in \textsc{FPT}) and what demand graphs make the problem ``hard'' (\textsc{W}$[1]$-hard)?  Recently, Feldmann and Marx~\cite{feldmann2017complexity} gave a complete characterization for this for \textsc{DSN}. Informally, they proved that if the demand graph is transitively equivalent to an ``almost-caterpillar'' (the union of a constant number of stars where their centers form a path, as well as a constant number of extra edges), then the problem is in \textsc{FPT}, and otherwise the problem is \textsc{W}$[1]$-hard.

While \emph{a priori} there might not seem to be much of a relationship between the directed and the length-bounded problems, there are multiple folklore results that relate them, usually by means of some sort of layered graph.  For example, any \textsc{FPT} algorithm for the \textsc{DST} problem can be turned into an \textsc{FPT} algorithm for \textsc{SLST} (with unit edge lengths) and vice versa through such a reduction (though this is a known result, to the best of our knowledge it has not been written down before, so we include it for completeness in Section \ref{sec:star}).  Such a relationship is not known for more general demands, though.  

In light of these relationships between the directed and the length-bounded settings and the recent results of~\cite{feldmann2017complexity}, it is natural to attempt to characterize the demand graphs that make \textsc{SLSN} easy or hard.  We solve this problem, giving (as in~\cite{feldmann2017complexity}) a complete characterization of easy and hard demand graphs.  Our formal results are given in Section \ref{sec:result}, but informally we show that \textsc{SLSN} is significantly harder than \textsc{DSN}: the only ``easy'' demand graphs are stars (in which case the problem is just \textsc{SLST}) and constant-size graphs.  Even tiny modifications, like a star with a single independent edge, become \textsc{W}$[1]$-hard (despite being in \textsc{FPT} for \textsc{DSN}).  


%

\subsection{Connection to Overlay Routing}
SLSN is particularly interesting due to its connection to overlay routing protocols that use \emph{dissemination graphs} to support next-generation Internet services.  Many emerging applications (such as remote surgery) require extremely low-latency yet highly reliable communication, which the Internet does not natively support.  Babay et al.~\cite{babay2017timely} recently showed that such applications can be supported by using overlay networks to enable routing schemes based on \emph{subgraphs} (dissemination graphs) rather than paths.

Their extensive analysis of real-world data shows that two node-disjoint overlay paths effectively overcome any one fault in the middle of the network, but specialized dissemination graphs are needed to address problems at a flow's source or destination.
Because problems affecting a source typically involve probabilistic loss on that source's outgoing links, a natural approach to increase the probability of a packet being successfully transmitted is to increase the number of outgoing links on which it is sent. In~\cite{babay2017timely}, when a problem is detected at a particular flow's source, that source switches to use a dissemination graph that floods its packets to all of its overlay neighbors and then forwards them from these neighbors to the destination. The paths from the source's neighbors to the destination must meet the application's strict latency requirement, but since the bandwidth used on every edge a packet traverses must be paid for, the total number of edges used should be minimized. Thus, constructing the optimal dissemination graph in this setting is precisely the \textsc{Shallow-Light Steiner Tree} problem, where the root of the demands is the destination and the other endpoints are the neighbors of the source. 

While Babay et al.~\cite{babay2017timely} show that building an optimal \textsc{SLST} is an effective strategy for overcoming failures at either a source or destination, they find that simultaneous failures at both the source \emph{and} the destination of a flow must also be addressed.  Since it is not known in advance which neighbors of the source or destination will be reachable during a failure, the most resilient approach is to require a latency-bounded path from \emph{every} neighbor of the source to \emph{every} neighbor of the destination.  This is precisely \textsc{SLSN} with a complete bipartite demand graph.  Since no \textsc{FPT} algorithm for \textsc{SLSN} with complete bipartite demands was known, \cite{babay2017timely} relied on a heuristic that worked well in practice.  

In the context of dissemination-graph-construction problems, our results provide a good solution for problems affecting either a source or a destination: the \textsc{FPT} algorithm for the \textsc{SLST} problem is quite practical, since overlay topologies typically have bounded degree (and thus a bounded total number of demands). Note that while unit lengths are not typical in overlay networks, handling the true lengths which arise in practice (which are not arbitrary) is a simple modification.  The search for an \textsc{FPT} algorithm for more resilient dissemination graphs (e.g., \textsc{SLSN} with complete bipartite demands) motivated our work, but a trivial corollary of our main results is that this problem is unfortunately \textsc{W}$[1]$-hard.

\section{Our Results and Techniques}\label{sec:result}

In order to distinguish the easy from the hard cases of the \textsc{SLSN} problem with respect to the demand graph, we should first define the problem with respect to a class (set) of demand graphs.

\begin{definition}
Given a class $\mathcal{C}$ of graphs. The problem of \textsc{Shallow-Light Steiner Network} with restricted demand graph class $\mathcal{C}$ ($\textsc{SLSN}_\mathcal{C}$) is the \textsc{SLSN} problem with the additional restriction that the demand graph $H$ of the problem must be isomorphic to some graph in $\mathcal{C}$.
\end{definition}

We define $\mathcal{C}_\lambda$ as the class of all demand graphs with at most $\lambda$ edges, and $\mathcal{C}^*$ as the class of all  star demand graphs (there is a central vertex called the root, and every other vertex in the demand graph is adjacent to the root and only the root). Our main result is that these are \emph{precisely} the easy classes: $\textsc{SLSN}_{\mathcal{C}_\lambda}$ can be solved in polynomial time for fixed $\lambda$, and $\textsc{SLSN}_{\mathcal{C}^*}$ (while \textsc{NP}-hard) is in \textsc{FPT} for parameter $p$. And for any other class $\mathcal{C}$ (i.e., any class which is not just a subset of $\mathcal C^* \cup \mathcal C_{\lambda}$ for some constant $\lambda$), the problem $\textsc{SLSN}_\mathcal{C}$ is \textsc{W}$[1]$-hard with parameter $p$.  Note that $\textsc{SLSN}_{\mathcal{C}^*}$ is precisely the \textsc{SLST} problem, for which a folklore \textsc{FPT} algorithm exists (for completeness, we prove this result in Section \ref{sec:star}).  So our results imply that if we do not have a constant number of demands and are not just \textsc{SLST}, then the problem is actually \textsc{W}$[1]$-hard.

More formally, we prove the following theorems.  

\begin{theorem}\label{thm:const}
For any constant $\lambda>0$, there is a polynomial time algorithm for the unit-length arbitrary-cost $\textsc{SLSN}_{\mathcal{C}_\lambda}$ problem.
\end{theorem}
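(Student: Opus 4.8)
Since all edge lengths equal $1$, every simple path uses at most $|V|-1$ edges, so I would first assume $L \le |V|-1$ (otherwise replace $L$ by $|V|-1$). The plan is to combine a structural lemma about optimal solutions with a brute-force enumeration.

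For the structural part, start from an optimal solution using the fewest edges, say $F^*=(V,S^*)$. Replacing, for each demand $i$, its connection by a shortest $s_i$--$t_i$ path $P_i$ inside $F^*$ shows we may assume $S^* = \bigcup_{i} E(P_i)$, so $F^*$ is a union of $p \le \lambda$ simple paths. As $F^*$ uses the fewest edges among optimal solutions and costs are positive, every edge of $S^*$ is \emph{critical}: deleting it makes some demand $i$ infeasible, hence it lies on \emph{every} $s_i$--$t_i$ path of length at most $L$ in $F^*$, in particular on $P_i$; assigning each edge to such a demand partitions $S^*$ into sets $F_1,\dots,F_p$ with $F_i \subseteq E(P_i)$. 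The key claim I would establish is that (after possibly re-choosing the $P_i$) the \emph{topology} of $F^*$ --- the multigraph obtained by suppressing all non-terminal degree-$2$ vertices --- has at most $g(\lambda)$ edges and vertices for some function $g$ of $\lambda$ alone. The intuition: a maximal degree-$2$ chain of $F^*$ lies entirely inside one $P_i$ (its internal degree-$2$ vertices force $P_i$ through it), and because each $P_i$ is a shortest path in $F^*$, two of them cannot shortcut one another; re-choosing the $P_i$ to maximize pairwise overlap should force $E(P_i)\cap E(P_j)$ to be a single contiguous subpath of each, cutting each $P_i$ into $O(\lambda)$ segments and bounding the number of topology edges by $O(\lambda^2)$. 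I expect this structural lemma to be the main obstacle; everything after it is routine.

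Given the lemma, the algorithm enumerates: (i) the $O_\lambda(1)$ abstract topologies together with, for each demand $i$, the simple walk describing how $P_i$ traverses the topology and the identification of topology vertices with terminals; (ii) for each topology, an injection of its $\le g(\lambda)$ vertices into $V$ --- $n^{g(\lambda)}$ choices; and (iii) a length $\ell_\gamma \in \{1,\dots,L\}$ for each topology edge $\gamma$ --- $L^{g(\lambda)} \le n^{g(\lambda)}$ choices --- discarding any guess with $\sum_{\gamma \in P_i}\ell_\gamma > L$ for some $i$. For a fixed guess, realize each topology edge $\gamma$ independently: if its endpoints map to $u,w \in V$, compute a minimum-cost $u$--$w$ path in $G$ using at most $\ell_\gamma$ edges, via the standard shortest-path computation in the layered graph on $V \times \{0,\dots,\ell_\gamma\}$ (positivity of costs lets one restrict to simple paths there). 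Output the union of these realizing paths; over all guesses, return the cheapest \emph{feasible} subgraph.

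For correctness: every subgraph the algorithm outputs is feasible, since for each $i$ it contains a walk from $s_i$ to $t_i$ of length $\sum_{\gamma \in P_i}\ell_\gamma \le L$. Conversely, on the guess matching the structural-lemma solution $F^*$ (correct topology, correct placement, $\ell_\gamma$ equal to the true length of segment $\gamma$), each realizing path costs at most the corresponding segment of $F^*$, and since distinct segments are internally vertex-disjoint in $F^*$ their union costs at most $\sum_\gamma c(\gamma) = c(F^*)$; hence the output has cost at most $c(F^*)$ and is optimal. The running time is a product of $n^{O_\lambda(1)}$ guesses with polynomial work each, which is polynomial for fixed $\lambda$.
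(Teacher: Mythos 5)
Your algorithmic framework---enumerate the ``topology'' of the optimal solution, guess vertex placements and per-segment length budgets, realize each segment by a minimum-cost bounded-hop path (layered graph / Bellman--Ford), check feasibility, return the cheapest---is essentially the same as the paper's Algorithm~\ref{alg:const}, and your running-time count of $n^{\mathrm{poly}(\lambda)}$ guesses with polynomial work each is correct.

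The genuine gap is the structural lemma you flag yourself as ``the main obstacle'': that the witness paths $P_1,\dots,P_p$ can be chosen so that every pairwise intersection is a single contiguous subpath (this is what caps the topology at $O(\lambda^2)$ pieces). You propose re-choosing the $P_i$ to ``maximize pairwise overlap,'' but you do not prove this works, and it is not clear that a local exchange argument survives the interaction of three or more paths: replacing a subpath of $P_1$ to increase $|P_1\cap P_2|$ can fragment $P_1\cap P_3$, and you would have to show some potential function improves monotonically across all pairs at once, with tie-breaking that is itself consistent. The paper's Lemma~\ref{lem:intersect} sidesteps this entirely with a global perturbation: let $\Delta>0$ be the minimum nonzero difference between total $l$-lengths of any two subsets of $S$, set $g(e_i)=l(e_i)+\Delta\cdot2^{-i}$, and take each $P_i$ to be the $g$-shortest $s_i$--$t_i$ path in $S$. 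Then (i) distinct paths have distinct $g$-lengths, so $g$-shortest paths are unique; (ii) the perturbation is small enough that any $g$-shortest path is also $l$-shortest, hence has $l$-length $\le L$ because $S$ is feasible; and (iii) uniqueness immediately forces contiguity: if $P_i$ and $P_j$ both contain $u,v$ but disagree strictly between them, one of the two can be shortened, contradiction. That uniqueness is exactly the thing your overlap-maximization heuristic is trying to manufacture by hand, and it is the piece you would need to supply before the rest of your argument (segment edge-disjointness, cost accounting on the matching guess, feasibility of the assembled walk) goes through. Once it is in place, your realization and correctness steps line up with the paper's proof of Theorem~\ref{thm:const}; the ``critical edge'' partition into $F_1,\dots,F_p$ is an unnecessary detour but not an error.
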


By ``unit-length arbitrary-cost'' we mean that the length $l(e) = 1$ for all edges $e \in E$, while the cost $c$ is arbitrary. To prove this theorem, we first prove a structural lemma which shows that the optimal solution must be the union of several lowest cost paths with restricted length (these paths may be between steiner nodes, but we show that there cannot be too many). Then we just need to guess all the endpoints of these paths, as well as all the lengths of these paths. It can be proved that there are only $n^{O(p^4)}$ possibilities. Since $p\le\lambda$ is a constant, the running time is polynomial in $n$. \ifprocs The algorithm appears in Section \ref{sec:const}, with a full proof in Appendix \ref{asec:const}. \else The algorithm and proof is in Section \ref{sec:const}.\fi

\begin{theorem}\label{thm:star}
The unit-length arbitrary-cost $\textsc{SLSN}_{\mathcal{C}^*}$ problem has an \textsc{FPT} algorithm with parameter $p$.
\end{theorem}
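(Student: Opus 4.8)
The plan is to reduce the unit-length \textsc{SLST} problem (which is exactly $\textsc{SLSN}_{\mathcal{C}^*}$) to \textsc{Directed Steiner Tree} via the folklore layered-graph construction, and then invoke the known \textsc{FPT} algorithm for \textsc{DST} parameterized by the number of terminals: the Dreyfus--Wagner-style dynamic program of~\cite{dreyfus1971steiner} works verbatim for the directed out-arborescence problem and runs in time $3^p\cdot\mathrm{poly}(N)$ on an $N$-vertex instance. One may assume $L\le n-1$, since distances in any subgraph of $G$ are at most $n-1$, so a larger bound is vacuous and may be replaced by $n-1$.

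Write $r=s_1=\dots=s_p$ for the root of the star demand. I would form a directed graph $\tilde G$ on vertex set $\{v^{(j)}:v\in V,\ 0\le j\le L\}\cup\{\hat t_i:i\in[p]\}$: for every $\{u,v\}\in E$ and every $0\le j<L$, add arcs $u^{(j)}\to v^{(j+1)}$ and $v^{(j)}\to u^{(j+1)}$ of cost $c(\{u,v\})$; for every $i\in[p]$ and every $0\le j\le L$, add a zero-cost arc $t_i^{(j)}\to\hat t_i$. This yields a \textsc{DST} instance with root $r^{(0)}$ and terminals $\hat t_1,\dots,\hat t_p$, where $\tilde G$ has $O(n^2)$ vertices. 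The heart of the argument is that the optimum of this \textsc{DST} instance equals the optimum of the \textsc{SLST} instance. For the easy direction ($\le$): given a feasible \textsc{SLST} subgraph $G'$, take a BFS tree $T$ of $G'$ rooted at $r$; each $t_i$ sits at depth $\mathrm{dist}_{G'}(r,t_i)\le L$, so the subtree $T^\star$ of $T$ spanned by $r,t_1,\dots,t_p$ has all vertices at depth $\le L$, and mapping $w\mapsto w^{(\mathrm{depth}(w))}$ turns each edge of $T^\star$ into an equal-cost arc of $\tilde G$; together with the arcs $t_i^{(\mathrm{depth}(t_i))}\to\hat t_i$ this is an out-arborescence reaching every terminal with cost $\sum_{e\in T^\star}c(e)\le c(G')$. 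For the reverse direction ($\ge$): from any feasible \textsc{DST} solution, identify each $v^{(j)}$ with $v$ and discard the $\hat t_i$-arcs to get an edge set $S\subseteq E$ with $c(S)$ no larger; any $r^{(0)}$-to-$\hat t_i$ path in the solution uses exactly $j$ cost-bearing arcs for some $j\le L$, which project to a walk, hence a path, of length $\le L$ from $r$ to $t_i$ in $(V,S)$, so $(V,S)$ is \textsc{SLST}-feasible.

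Putting these together: run the \textsc{DST} algorithm on $\tilde G$, which also returns a witnessing subgraph; projecting that subgraph gives an \textsc{SLST} solution of cost at most the \textsc{DST} optimum, which by the equivalence is optimal. The running time is $3^p\cdot\mathrm{poly}(|V(\tilde G)|)=3^p\cdot\mathrm{poly}(n)$, which is \textsc{FPT} in $p$. I expect the only delicate point to be the ``$\ge$'' direction: a \textsc{DST} solution is neither an arborescence nor an injective image of an \textsc{SLST} solution, so one must verify that collapsing layer-copies still yields a genuinely feasible subgraph---in particular that reaching some $t_i^{(j)}$ with $j\le L$ truly certifies an $r$--$t_i$ path of length $\le L$---and that the cost only decreases under this projection. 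The remaining ingredients (capping $L$ at $n-1$, bounding $|V(\tilde G)|$ and $|E(\tilde G)|$ by $\mathrm{poly}(n)$, and observing that the Steiner-tree DP is oblivious to whether the costs form a metric) are routine.
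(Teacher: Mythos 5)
Your proposal is correct and follows essentially the same route as the paper: a layered-graph reduction from $\textsc{SLSN}_{\mathcal{C}^*}$ to \textsc{DST}, followed by invoking a known \textsc{FPT} algorithm for \textsc{DST}. The only cosmetic difference is that the paper adds zero-cost ``wait'' arcs $v^{(j-1)}\to v^{(j)}$ and places the terminals at layer $L$, whereas you collect all layer copies $t_i^{(j)}$ into a fresh sink $\hat t_i$ via zero-cost arcs; both devices serve the same purpose of allowing a terminal to be reached at any depth $\le L$, and the correctness argument is the same.
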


As mentioned, $\textsc{SLSN}_{\mathcal{C}^*}$ is exactly the same as \textsc{SLST}, so we use a folklore reduction between \textsc{SLST} and \textsc{DST} to prove this theorem. The detailed proof is in \ifprocs Appendix \ref{asec:star}. \else Section \ref{sec:star}. \fi

\begin{theorem}\label{thm:hard}
If $\mathcal{C}$ is a recursively enumerable class, and $\mathcal{C}\nsubseteq\mathcal{C}_\lambda\cup \mathcal{C}^*$ for any constant $\lambda$, then $\textsc{SLSN}_\mathcal{C}$ is \textsc{W}$[1]$-hard with parameter $p$, even in the unit-length and unit-cost case.
\end{theorem}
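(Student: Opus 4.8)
Throughout, read the hypothesis ``$\mathcal C\nsubseteq\mathcal C_\lambda\cup\mathcal C^*$ for every constant $\lambda$'' as saying exactly that $\mathcal C$ contains demand graphs that are not stars and have arbitrarily many edges. The plan is to (i) use a simple counting argument to reduce to two ``canonical'' families of demand graphs, and then (ii) give, for each family, a parameterized reduction from a standard W[1]-hard problem such as \textsc{Multicolored Clique} (parameterized by the number $k$ of color classes), where the demand pairs are placed so that the demand graph of the produced \textsc{SLSN} instance is exactly some $H\in\mathcal C$ with $|H|$ bounded by a function of $k$.

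\textbf{The structural dichotomy.} Let $\mu^*=\sup_{H\in\mathcal C}\mu(H)$, where $\mu$ denotes the matching number. If $\mu^*=\infty$, then for every $k$ there is $H_k\in\mathcal C$ with a matching of size $\ge k$ (and, having $\mu\ge 2$, $H_k$ is automatically not a star); call this the \emph{matching-rich} case. Otherwise $\mu^*=c<\infty$, so every $H\in\mathcal C$ has a vertex cover of size $\le 2c$; hence a non-star $H\in\mathcal C$ with $m$ edges has a vertex of degree $\ge m/(2c)$, and since $\mathcal C$ contains non-stars of unbounded size, for every $k$ there is a non-star $H_k\in\mathcal C$ with a vertex $v_k$ of degree $\ge k$ and (being non-star) an edge $\{a_k,b_k\}$ not incident to $v_k$; call this the \emph{near-star-rich} case. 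Since $\mathcal C$ is recursively enumerable, $H_k$ is computable from $k$ and $|H_k|$ is a function of $k$, so it suffices to give an \textsc{FPT}-time reduction for each family that outputs a demand graph isomorphic to $H_k$.

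\textbf{The two reductions.} In the matching-rich case I would use $k$ of the vertex-disjoint demand pairs as ``selectors'': between $s_i$ and $t_i$ I place, via standard vertex and edge gadgets, a family of $s_i$--$t_i$ routes (one per vertex of color class $V_i$) so that a cheapest feasible subgraph must buy one such route per class, and routes for classes $i$ and $j$ are made to share structure exactly when the corresponding vertices are adjacent in the instance, so that a consistent clique is strictly cheaper than any non-clique selection. In the near-star-rich case I would take $v_k$ as the root and $k$ of its neighbors as selectors (as in the folklore \textsc{SLST} gadget), and use the edge $\{a_k,b_k\}$ as a ``verification'' demand whose only route of length $\le L$ threads through all $\binom{k}{2}$ edge-checking gadgets in sequence and is completable iff the selected vertices are pairwise adjacent (the at most three ways $\{a_k,b_k\}$ can meet $N(v_k)$ are handled as minor variants). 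In both cases, encoding \textsc{Multicolored Clique} through $L$-bounded paths with unit lengths and unit costs is routine once the gadgets are in place; the one nonstandard ingredient is the next point.

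\textbf{The main obstacle: surplus demands.} Because $\mathcal C$ is arbitrary, $H_k$ will in general contain many ``surplus'' demand pairs beyond the $k$ selectors and the single verification pair, and these must be accommodated without the help of edge costs (everything is unit). The crux is to arrange the construction so that (a) the intended clique-encoding solution automatically satisfies every surplus pair within the length bound $L$, and (b) whatever additional slack structure and budget this forces us to add does not create a shorter route that would let a non-clique solution satisfy the selector or verification demands. Concretely, this amounts to embedding the entire gadget construction inside a low-diameter ``envelope'' that keeps all surplus endpoints mutually close while keeping the length budgets of the core demands tight, and handling with care the awkward situation where a surplus pair shares an endpoint with the verification pair (so that that endpoint cannot simply be pulled into the envelope). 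Making these competing requirements coexist is the technical heart of the argument; once it is done, the two correctness directions (yes-instance $\Rightarrow$ cheap feasible subgraph, and conversely, obtaining a clique from any cheap feasible subgraph by ignoring the surplus demands) follow by standard checking.
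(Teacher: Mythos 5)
Your high-level framework is the right one---reduce from \textsc{Multi-Colored Clique} after isolating a useful subgraph of some $H\in\mathcal C$---but your structural dichotomy is too coarse, and the gap this opens up is exactly the part of the proof that requires real work. The paper's Lemma~\ref{lem:hk} does not merely find a large matching or a high-degree vertex with a non-incident edge; it finds an \emph{induced} subgraph of $H$ isomorphic to one of a small list of clean patterns ($H_{k,0}^*$, $H_{k,1}^*$, $H_{k,2}^*$, a large induced matching $H_{k,k}$, or a $2\times k(k-1)$ complete-bipartite pattern $\mathcal H_{2,k}$). The ``induced'' property is what makes the surplus demands tractable: since $H^{(t)}$ is an induced subgraph, every demand in $H\setminus H^{(t)}$ has at least one endpoint outside the core gadget, so each such demand can be satisfied by a fresh $L$-hop path attached to a new vertex, with no interaction with the gadget (Section~\ref{sec:all}). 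Your counting argument only produces a matching (not necessarily induced) in one case and a high-degree vertex plus \emph{some} non-incident edge in the other, and you then delegate the surplus-demand interaction to a vague ``envelope'' that keeps surplus endpoints mutually close. That step is not clearly realizable, and it is not a side issue: keeping all surplus endpoints within distance $L$ of each other while keeping the selector/verifier length budgets tight tends to precisely create the shortcuts you are trying to exclude.

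There is also a concrete case your dichotomy mislabels as a ``minor variant.'' Consider $\mathcal C$ a class whose members contain large $K_{2,m}$ subgraphs. You would pick one center as the root $v_k$ and some edge $\{c_2,\ell_j\}$ (the other center to a leaf) as the verifier edge, since that is the only kind of edge not incident to $v_k$. But then $c_2$ is adjacent to \emph{every} selector leaf, so you have $k$ surplus demands each sharing an endpoint with a selector---this is not one of your ``at most three ways $\{a_k,b_k\}$ can meet $N(v_k)$,'' it is a qualitatively different structure. The paper treats this as its own case ($\mathcal H_{2,k}$, Section~\ref{sec:bipartite}) with a separate reduction adapted from Feldmann--Marx, where the two stars encode vertex- and edge-choices and extra demands on the leaf side are absorbed by dedicated length-$7$ edges. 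You would need either to discover this case separately (which requires something like the paper's ``is there a second vertex dominating many neighbors of $s$?'' test inside Lemma~\ref{lem:hk}) or to prove that your envelope construction survives a second high-degree center, and neither is done. In short: the missing idea is to insist on an \emph{induced} copy of a clean pattern before reducing, which both disposes of surplus demands cleanly and forces the $\mathcal H_{2,k}$ case into view.
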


Many \textsc{W}$[1]$-hardness results for network design problems reduce from the \textsc{Multi-Colored Clique} (\textsc{MCC}) problem, and we are no exception. We reduce from \textsc{MCC} to $\textsc{SLSN}_{\mathcal{C}'}$, where $\mathcal{C}'$ is a specific subset of $\mathcal{C}$ which has some particularly useful properties, and which we show must exist for any such $\mathcal C$. Since $\mathcal{C}'\subseteq\mathcal{C}$, this will imply the theorem.  The reduction is in Section \ref{sec:reduct}.  

All of these results were in the unit-length setting.  We extend both our upper bounds and hardness results to handle arbitrary lengths, but with some extra complications.  If $p=1$ (there is only one demand), then with arbitrary lengths and arbitrary costs the \textsc{SLSN} problem is equivalent to the \textsc{Restricted Shortest Path} problem, which is known to be \textsc{NP}-hard \cite{hassin1992approximation}. Therefore we can no longer hope for a polynomial time exact solution. Note that \textsc{FPT} with constant parameter is equivalent to \textsc{P}, so we change our notion of ``easy'' from ``solvable in \textsc{FPT}'' to ``arbitrarily approximable in \textsc{FPT}'': we show $(1+\varepsilon)$-approximation algorithms for the easy cases, and prove that there is no $\left(\frac{5}{4}-\varepsilon\right)$-approximation algorithm for the hard cases in $f(p)\cdot poly(n)$ time for any function $f$. \ifprocs We discuss these results in Section \ref{sec:overview}. \fi

\begin{theorem}\label{thm:approxConst}
For any constant $\lambda>0$, there is a fully polynomial time approximation scheme (\textsc{FPTAS}) for the arbitrary-length arbitrary-cost $\textsc{SLSN}_{\mathcal{C}_\lambda}$ problem.
\end{theorem}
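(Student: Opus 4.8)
The plan is to reduce to the structure already exploited in Theorem~\ref{thm:const} and then replace the brute-force enumeration over path lengths (which breaks down once lengths are arbitrary) by the cost-scaling idea behind the \textsc{FPTAS} for \textsc{Restricted Shortest Path}. First I would check that the structural lemma underlying Theorem~\ref{thm:const} never used unit lengths: its exchange steps only ever replace a subpath by one of no larger length and no larger cost, so the same proof shows that for any fixed $p\le\lambda$ there is an optimal solution that is the edge-disjoint union of at most $k=\mathrm{poly}(p)=O(1)$ paths $P_1,\dots,P_k$, where $P_j$ joins two (possibly Steiner) vertices $a_j,b_j$, each $P_j$ is a minimum-cost path between $a_j$ and $b_j$ among those of length at most $\ell_j:=\mathrm{len}(P_j)$, and each demand $\{s_i,t_i\}$ is realized by concatenating some sub-collection $\mathrm{route}(i)\subseteq[k]$ of these paths, with $\sum_{j\in\mathrm{route}(i)}\ell_j\le L$. (Edge-disjointness is what allows the costs of the $P_j$ to be summed without double counting, and it is already implicit in the proof of Theorem~\ref{thm:const}.)

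Next I would perform the same combinatorial guessing as in Theorem~\ref{thm:const}: enumerate the $2k$ endpoints $a_j,b_j$ --- that is $n^{O(k)}=n^{O(\mathrm{poly}(\lambda))}$, polynomial since $\lambda$ is constant --- together with the routing pattern, i.e.\ the sets $\mathrm{route}(i)$ and the order in which each demand traverses them (a number of possibilities bounded by a function of $\lambda$ alone). After fixing a guess, the remaining task is to choose the budgets $\ell_1,\dots,\ell_k\ge 0$ minimizing $\sum_{j=1}^{k}\gamma_j(\ell_j)$ subject to $\sum_{j\in\mathrm{route}(i)}\ell_j\le L$ for all $i$, where $\gamma_j(\ell)$ is the minimum cost of a path between $a_j$ and $b_j$ of length at most $\ell$. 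In the unit-length case one simply tries all integer vectors $(\ell_1,\dots,\ell_k)\in\{0,\dots,n\}^k$; with arbitrary lengths this is impossible, and this is the one step that needs a genuinely new idea.

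I would handle it by scaling costs. For the correct guesses $\mathrm{OPT}$ lies in $[c(e),(n-1)c(e)]$ for some edge $e$, so one of the polynomially many candidates of the form $(1+\varepsilon)^i c(e)$ is a value $B$ with $\mathrm{OPT}\le B<(1+\varepsilon)\mathrm{OPT}$; guess it. Delete every edge of cost more than $B$, and replace each remaining cost $c(e)$ by $\bar c(e):=\lfloor c(e)\,kn/(\varepsilon B)\rfloor\in\{0,1,\dots,V\}$ with $V=O(n/\varepsilon)$. For each $j$ and each value $v\in\{0,\dots,V\}$, a Bellman--Ford-style dynamic program over the $O(n/\varepsilon)$ scaled-cost levels computes $\mu_j(v)$, the minimum length of a path between $a_j$ and $b_j$ of scaled cost at most $v$, in polynomial time. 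Finally, since $k$ is constant, enumerate all $(v_1,\dots,v_k)\in\{0,\dots,V\}^k$ (polynomially many), discard those with $\sum_{j\in\mathrm{route}(i)}\mu_j(v_j)>L$ for some $i$, and among the survivors output the union of the length-minimizing witnesses of $\mu_1(v_1),\dots,\mu_k(v_k)$ that minimizes $\sum_j v_j$.

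For correctness, when all guesses are right the vector $v_j^\star:=\bar c(P_j^\star)$ survives the filter ($P_j^\star$ itself witnesses $\mu_j(v_j^\star)\le\ell_j^\star$, so the length constraints hold), hence the returned solution has $\sum_j v_j\le\sum_j v_j^\star=\bar c(\mathrm{OPT})\le\mathrm{OPT}\cdot kn/(\varepsilon B)$, using edge-disjointness of the $P_j^\star$ and $B\ge\mathrm{OPT}$. Since the output is a union of $k$ simple paths it has at most $kn$ edges, so its true cost is at most $(\bar c(\text{output})+kn)\cdot\varepsilon B/(kn)\le(\sum_j v_j)\,\varepsilon B/(kn)+\varepsilon B\le\mathrm{OPT}+\varepsilon B\le(1+\varepsilon+\varepsilon^2)\mathrm{OPT}$, which becomes $(1+\varepsilon)\mathrm{OPT}$ after a constant rescaling of $\varepsilon$; the length bounds hold exactly by the choice of the $\mu_j(v_j)$-witnesses. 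All steps run in time polynomial in $n$ and $1/\varepsilon$ (with $\lambda$ in the exponent), so this is an \textsc{FPTAS}. I expect the real work to be (i) verifying carefully that the structural lemma of Theorem~\ref{thm:const} survives the passage to arbitrary lengths with the paths still edge-disjoint, and (ii) the rounding bookkeeping above --- in particular the point that the output has only $O(kn)$ edges, which is exactly what keeps the accumulated rounding error down to $\varepsilon B$.
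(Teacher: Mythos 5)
Your proposal is correct and follows essentially the same route as the paper: reuse the structural Lemma~\ref{lem:intersect} (which, as you note, is length-agnostic and already yields the edge-disjoint decomposition into $O(\lambda^4)$ subpaths), do the same combinatorial guessing of subpath endpoints, and then replace the brute-force length enumeration by cost-scaling and a Bellman--Ford-type DP as in the Restricted Shortest Path \textsc{FPTAS}. The only differences from the paper are cosmetic implementation choices: you obtain a polynomial-range estimate of $\mathrm{OPT}$ by enumerating $(1+\varepsilon)^i c(e)$ over all edges $e$, whereas the paper uses a dedicated subroutine ($OptLow$) returning a $C$ with $C\le\mathrm{OPT}\le n^2C$; and you enumerate integer scaled-cost budgets $v_j\in\{0,\dots,O(n/\varepsilon)\}$ and compute $\mu_j(v)$ for every budget, whereas the paper guesses each subpath's cost on a geometric scale $(1+\varepsilon)^{c'}C$ with $c'\in\{-O(\log_{1+\varepsilon}n),\dots,O(\log_{1+\varepsilon}n)\}$ and calls a ``$MinDist$'' routine (which internally does the same scaled DP) once per guessed budget. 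Both variants give a polynomial enumeration because $k$ is bounded by a function of $\lambda$ alone, and your rounding bookkeeping (floor scaling, the $O(kn)$-edge bound on the output, the $\varepsilon B$ slack, and the final rescaling of $\varepsilon$) matches the paper's $(1+4\varepsilon)\to(1+\varepsilon)$ analysis. One small inaccuracy worth flagging: the window $[c(e),(n-1)c(e)]$ should be $[c(e),O(kn)\,c(e)]$ since the solution is a union of $k$ simple paths rather than a single path, but as $k=O(\lambda^4)$ is constant this only changes the number of exponents $i$ you try by an additive constant and does not affect the argument.
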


\begin{theorem}\label{thm:approxStar}
There is a $(1+\epsilon)$-approximation algorithm in $O(4^p\cdot poly(\frac{n}{\varepsilon}))$ time for the arbitrary-length arbitrary-cost $\textsc{SLSN}_{\mathcal{C}^*}$ problem.
\end{theorem}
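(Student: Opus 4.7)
The plan is to combine the unit-length SLST-to-DST reduction behind Theorem \ref{thm:star} with a Hassin-style length-discretization FPTAS. Since Theorem \ref{thm:star} already handles unit lengths in time $O(c^p\cdot poly(n))$ for an absolute constant $c$ (and thus certainly in $O(4^p\cdot poly(n))$), it suffices to show that after a polynomially bounded rescaling of lengths only a $(1+\varepsilon)$ loss in cost is incurred and the problem becomes essentially unit-length on a graph of size $poly(n/\varepsilon)$.

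First I would rescale and round edge lengths. Set $N = \lceil n/\varepsilon\rceil$, define $l'(e) = \lceil l(e)\cdot N/L\rceil$, and take the new integer length bound to be $L' = N+n$. Each edge length is rounded up by at most $1$, so any original feasible path of at most $n$ edges has rounded length at most $L'$; conversely any path of rounded length at most $L'$ has original length at most $L\cdot L'/N \le L(1+\varepsilon)$. Applied together with Hassin's standard cost/length swap (binary search on the optimum cost so that the $(1+\varepsilon)$ slack is absorbed into the cost rather than the length), this yields an equivalent instance in which all lengths are integers in $\{0,1,\dots,L'\}$ with $L' = O(n/\varepsilon)$, and an exact solution on it gives a $(1+\varepsilon)$-approximate cost with exact length feasibility on the original instance.

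Next I would build a layered auxiliary graph $G_{\mathrm{lay}}$. Its nodes are $V\times\{0,1,\dots,L'\}$ together with a fresh sink $t_j^*$ for each terminal $t_j$; for each edge $\{u,v\}\in E$ and each layer $i$ with $i+l'(u,v)\le L'$ it has directed arcs $(u,i)\to(v,i+l'(u,v))$ and $(v,i)\to(u,i+l'(u,v))$ of cost $c(u,v)$; finally it has zero-cost arcs $(t_j,i)\to t_j^*$ for every $i,j$. Because the star-demand optimum can always be taken to be a tree rooted at the center $r$, in which each chosen edge has a unique depth, any such solution lifts to a directed Steiner tree from $(r,0)$ to $\{t_j^*\}$ in $G_{\mathrm{lay}}$ of the same cost; conversely any DST solution in $G_{\mathrm{lay}}$ projects back (by forgetting layers and collapsing duplicates) to an SLSN solution of no greater cost. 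Running the DST-based algorithm of Theorem \ref{thm:star} on $G_{\mathrm{lay}}$, which has $O(n^2/\varepsilon)$ vertices, yields the claimed running time $O(4^p\cdot poly(n/\varepsilon))$.

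The main obstacle is verifying the two-sided correspondence between SLSN solutions in $G$ and DST solutions in $G_{\mathrm{lay}}$: a priori, a DST solution could pay twice for the same edge of $G$ if it uses that edge at two different depths, which would make the DST optimum strictly exceed the SLSN optimum. The tree-with-unique-depth observation rules this out in the direction SLSN $\Rightarrow$ DST, while the projection argument handles DST $\Rightarrow$ SLSN (duplicates only collapse and costs only decrease). The remaining technical check — that the Hassin-style rounding loses only a factor of $(1+\varepsilon)$ — follows from the per-edge rounding error of at most $1$ combined with paths having at most $n$ edges.
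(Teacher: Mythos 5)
Your overall plan — reduce star demands to something DST-like, then pay for arbitrary lengths with a Hassin-style discretization — is the right instinct, but the discretization goes in the wrong direction and the ``cost/length swap'' you invoke is never actually carried out, so the argument as written does not give the theorem.

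Concretely: you round \emph{lengths} to integers in $\{0,\dots,L'\}$ with $L'=O(n/\varepsilon)$, build a layered graph indexed by rounded length, and run DST. By your own analysis, a path of rounded length at most $L'$ has original length at most $L(1+\varepsilon)$. That is a bicriteria $(1,1+\varepsilon)$ guarantee (exact cost, length violated by $1+\varepsilon$), whereas the theorem asserts a $(1+\varepsilon,1)$ guarantee (cost within $1+\varepsilon$, length satisfied exactly). The phrase ``binary search on the optimum cost so that the slack is absorbed into the cost'' does not repair this: a binary search over cost budgets still checks feasibility on the length-rounded graph, and that check only certifies length $\le L(1+\varepsilon)$. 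There is no post-processing step that turns a length-violating solution into a length-feasible one at the price of extra cost. In the Restricted Shortest Path setting, ``swapping'' is not a wrapper around the algorithm; it means choosing at the DP-formulation stage which of the two quantities is indexed/rounded and which is stored/minimized. Your layered graph is irrevocably indexed by length, so the swap cannot be applied after the fact. And you cannot tighten $L'$ to compensate (rounding up with bound $N$ may exclude the true optimum; rounding down still allows an additive $n$ excess, giving the same $L(1+\varepsilon)$ overshoot).

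What the paper does instead is exactly the swap you allude to. It first runs a preprocessing routine ($OptLow$) to sandwich the optimum cost $C\le OPT\le n^2 C$, then rescales \emph{costs} to integers $c_e^*=\lceil n\,c(e)/(\varepsilon C)\rceil$ of magnitude $\mathrm{poly}(n/\varepsilon)$, and performs a Dreyfus--Wagner-style DP on triples $(v,R,j)$, where $j$ is a rounded-cost budget and $d(v,R,j)$ records the minimum \emph{true} height of a tree rooted at $v$ containing $R$ within that budget. Because lengths are never rounded, the final check $d(s,T,j)\le L$ is exact; because costs are rounded, the smallest such $j$ gives cost within $(1+\varepsilon)$ of optimum. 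The $2^p$ subsets with a $2^p$ split at each step give the $O(4^p\cdot\mathrm{poly}(n/\varepsilon))$ bound. If you want to salvage the layered-graph route, you would need the layers to be indexed by rounded cost rather than rounded length, but then the layer structure no longer encodes the per-path length constraint, and you lose the clean reduction to ordinary DST. That is why the paper abandons the layered-graph reduction for the arbitrary-length case and writes the DP directly, using the tree-structure lemma (analogous to your ``unique depth'' observation) to justify the subset decomposition.
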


For both upper bounds, we use basically the same algorithm as the unit-length arbitrary-cost case, with some changes inspired by the $(1+\varepsilon)$-approximation algorithm for the \textsc{Restricted Shortest Path} problem \cite{lorenz2001simple}. \ifprocs \else  These results can be found in Section \ref{sec:approxAlgorithm}. \fi

Our next theorem is analogous to Theorem \ref{thm:hard}, but since costs are allowed to be arbitrary we can prove stronger hardness of approximation (under stronger assumptions).

\begin{theorem}\label{thm:approxHard}
Assume that (randomized) Gap-Exponential Time
Hypothesis (Gap-ETH, see \cite{chalermsook2017gap}) holds. Let $\varepsilon>0$ be a small constant, and $\mathcal{C}$ be a recursively enumerable class where $\mathcal{C}\nsubseteq\mathcal{C}_\lambda\cup \mathcal{C}^*$ for any constant $\lambda$. Then, there is no $\left(\frac{5}{4}-\varepsilon\right)$-approximation algorithm in $f(p)\cdot n^{O(1)}$ time for $\textsc{SLSN}_\mathcal{C}$ for any function $f$, even in the unit-length and polynomial-cost case.
\end{theorem}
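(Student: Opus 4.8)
The plan is to reuse the construction behind Theorem~\ref{thm:hard} almost verbatim, changing the source problem to a \emph{gap} version of \textsc{MCC} and patching the target instance so that it is always feasible, with the \textsc{yes}/\textsc{no} distinction of the gap problem surfacing as a multiplicative gap in the optimal cost.

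First I would invoke, from the proof of Theorem~\ref{thm:hard}, both the structural extraction and the reduction itself: for any recursively enumerable $\mathcal{C}\nsubseteq\mathcal{C}_\lambda\cup\mathcal{C}^*$ there is a recursively enumerable subfamily $\mathcal{C}'\subseteq\mathcal{C}$ of ``large non-stars'' with the useful structure used there, together with a polynomial-time reduction taking a $k$-colored \textsc{MCC} instance to an $\textsc{SLSN}_{\mathcal{C}'}$ instance, with unit lengths, distance bound $L$, and demand graph $H\in\mathcal{C}'$ of size $p=g(k)$ for a fixed function $g$, in which a multicolored clique corresponds to a feasible solution built from ``selection'' gadgets (one vertex chosen per color class) and ``verification'' gadgets (one per pair of color classes). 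I will arrange the construction so that a verification gadget admits a path of length $L-1$ exactly when the corresponding edge is present in the \textsc{MCC} instance, and only paths of length $L$ otherwise, with $L=5$ (so $\frac{L}{L-1}=\frac54$). Since $\mathcal{C}'\subseteq\mathcal{C}$, any hardness for $\textsc{SLSN}_{\mathcal{C}'}$ transfers to $\textsc{SLSN}_{\mathcal{C}}$.

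Next I would replace the source \textsc{MCC} instance by a gap instance: a standard multicolored, gap consequence of \cite{chalermsook2017gap} is that, under Gap-ETH, no $f(k)\cdot n^{O(1)}$-time algorithm can distinguish a $k$-colored graph containing a multicolored clique from one in which every choice of one vertex per color class spans only an $o(1)$ fraction of the $\binom{k}{2}$ pairs. I then modify the \textsc{SLSN} instance in two ways: (i) add to each demand $\{s_i,t_i\}$ a private, internally vertex-disjoint backup path of length exactly $L$, so that every demand is always satisfiable and the optimum is finite; and (ii) scale costs within a polynomial range so that the $O(k)$ selection gadgets contribute only $o(p)$ to the cost, while the $\Theta(p)$ verification and backup edges dominate. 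Then if the gap instance has a multicolored clique one routes every verification gadget along its length-$(L-1)$ path, for total cost $(L-1)p+o(p)$; otherwise the solution induces one selection per color (the construction penalizes picking more), that selection spans only an $o(1)$ fraction of the pairs, so all but an $o(1)$ fraction of the verification demands must be carried by a length-$L$ path, giving cost at least $(L-o(1))p$. For $k$ large the ratio between the two cases exceeds $\frac54-\varepsilon$, so a $\left(\frac54-\varepsilon\right)$-approximation in $f(p)\cdot n^{O(1)}$ time would, through $p=g(k)$, decide the gap problem in $f(g(k))\cdot n^{O(1)}$ time, contradicting Gap-ETH.

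The step I expect to be the main obstacle is ensuring that neither the backup paths nor the cost scaling collapses the gap. I must verify that in the positive case no solution beats the intended one --- in particular that each backup path is strictly costlier than the corresponding length-$(L-1)$ verification path and that its private internal vertices create no shortcuts usable by other demands, and that no path of length at most $L-2$ can ever satisfy a verification demand (a girth-type property that the gadget of Theorem~\ref{thm:hard} should already enforce) --- and that in the negative case the fraction of cheaply satisfiable verification gadgets is genuinely $o(1)$, which is exactly where the strong, non-constant gap from Gap-ETH (rather than the weaker constant-factor clique inapproximability) is needed to reach $\frac54$. A minor further point is bookkeeping on $g$: as $\mathcal{C}'$ is recursively enumerable and contains graphs of the required sizes, bounding $p$ also bounds $k$, so an \textsc{FPT}-in-$p$ algorithm would yield an \textsc{FPT}-in-$k$ algorithm for the gap problem.
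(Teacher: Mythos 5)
The paper's proof of Theorem~\ref{thm:approxHard} does not touch the length bound: it reuses the exact gadget graphs and length bounds from the W[1]-hardness reduction (so $L=4k^2$ in the star-plus-edge cases and $L=7$ in the bipartite case) and instead inflates the \emph{costs} of a small number of designated edge classes to a large polynomial (for example, every edge of $E_2\cup E_4$ in the $H_{k,0}^*$ gadget is given cost $4k^4$). The $5/4$ factor then falls out of a counting argument: any solution of cost below $\left(\frac54-\frac{\varepsilon}{4}\right)g(H)$ can contain only about $\frac{15}{8}k(k-1)$ expensive edges, and once the count is that small an inclusion--exclusion over $|S\cap E_4|$ and $|S\cap E_2|$ forces at least $\varepsilon\binom{k}{2}$ expensive edges to be shared between a verification path and the selection path, each such sharing certifying an edge of $G$ among the selected vertices. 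Your proposal takes a genuinely different route: keep essentially unit costs, reset $L$ to $5$, and realize $5/4$ as the length ratio $L/(L-1)$ between a short verification path and an always-available private backup path.

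The concrete obstruction is in the star-plus-edge cases ($H_{k,0}^*$, $H_{k,1}^*$, $H_{k,2}^*$), which the paper singles out as its main contribution. There, the single extra-edge demand $\{y_0,y_k\}$ is the \emph{entire} selection mechanism: by Claim~\ref{claim:singlePath} its unique feasible path zig-zags through $k$ consecutive selection gadgets, which is exactly why $L$ must be $\Theta(k^2)$. With $L=5$ a single demand path simply cannot traverse $k$ gadgets, so the selection side of your scheme has nowhere to live in these cases (and the matching case $H_{k,k}$ has the same $L=4k^2$ issue). You say you would both ``invoke the reduction itself'' and ``arrange the construction so that $L=5$''; these are mutually exclusive, and the latter requires a replacement gadget for the star-plus-edge cases that you have not exhibited. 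A secondary gap is in the YES-case accounting: when you write the cost as $(L-1)p+o(p)$ you implicitly treat the $p$ verification paths as edge-disjoint, but in the paper's gadget it is precisely the edge-sharing among paths in the YES case that enforces consistency of the selected vertices; if you make the paths disjoint so the arithmetic is literally correct, you have also removed the mechanism that detects an inconsistent selection. Your choice of gap source problem (a gap version of \textsc{MCC} under Gap-ETH, which the paper instantiates as \textsc{Multi-Colored Densest} $k$-\textsc{Subgraph} via~\cite{chitnis2017parameterized}) and the bookkeeping relating $p$ to $k$ are both correct and match the paper.
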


Note that this theorem uses a much stronger assumption (Gap-ETH rather than \textsc{W}$[1]$ $\neq$ \textsc{FPT}), which assumes that there is no (possibly randomized) algorithm running in $2^{o(n)}$ time can distinguish whether a \textsc{3SAT} formula is satisfiable or at most a $(1-\varepsilon)$-fraction of its clauses can be satisfied. This enables us to utilize the hardness result for a generalized version of the \textsc{MCC} problem from \cite{chitnis2017parameterized}, which will allow us to modify our reduction from Theorem \ref{thm:hard} to get hardness of approximation. \ifprocs \else This result appears in Section \ref{sec:approxHard}. \fi

\subsection{Relationship to~\cite{feldmann2017complexity}}
As mentioned, our results and techniques are strongly motivated and influenced by the work of Feldmann and Marx~\cite{feldmann2017complexity}, who proved similar results in the directed setting.  Informally, they showed that \textsc{Directed Steiner Network} is in \textsc{FPT} if the demand graph is an ``almost-caterpillar'', and otherwise it is \textsc{W}$[1]$-hard.  So they had to show how to reduce from a \textsc{W}$[1]$ problem (MCC, as in our reduction) to \textsc{DSN} where the demand graph is not an almost-caterpillar, and like us had to consider a few different cases depending on the structure of the demand graph.  

The main case of their reduction (which was not already implied by prior work) is when the demand graph is a $2$-by-$k$ complete bipartite graph (i.e., two stars with the same leaf set). For this case, their reduction from \textsc{MCC} uses one star to control the choice of edges in the clique and another star to control the choice of vertices in the clique.  They set this up so that if there is a clique of the right size then the ``edge demands'' and the ``vertex demands'' can be satisfied with low cost by making choices corresponding to the clique, while if no such clique exists then any way of satisfying the two types of demands simultaneously must have larger cost.

The $2$-by-$k$ complete bipartite graph is also a hard demand graph in our setting, and the same reduction from~\cite{feldmann2017complexity} can be straightforwardly modified to prove this (this appears as one of our cases).  However, we prove that far simpler demand graphs are also hard.  Most notably, the ``main'' case of our proof is when the demand graph is a single star together with one extra edge.  Since we have only a single star in our demand graph, we cannot have two ``types'' of demands (vertex demands and edge demands) in our reduction.  So we instead use the star to correspond to ``edge demands'' and use the single extra edge to simultaneously simulate all of the ``vertex demands''.  This makes our reduction significantly more complicated. 

With respect to upper bounds, the algorithm of~\cite{feldmann2017complexity}  is quite complex in part due to the complexity of the demand graphs that it must solve.  Our hardness results for \textsc{SLSN} imply that we need only concern ourselves with demand graphs that are star or have constant size.  The star setting is relatively simple due to a reduction to \textsc{DST}, but it is not obvious how to use any adaptation of~\cite{feldmann2017complexity} (or the earlier~\cite{feldman2006directed}) to handle a constant number of demands for \textsc{SLSN}.  Our algorithm ends up being relatively simple, but requires a structural lemma which was not necessary in the \textsc{DSN} setting.




\section{Algorithms for Unit-Length Arbitrary-Cost \textsc{SLSN}}\label{sec:algorithm}

In this section we discuss the ``easy'' cases of \textsc{SLSN}. We present a
polynomial-time algorithm for \textsc{SLSN} with a constant number of demands
in Section \ref{sec:const}.
In Section \ref{sec:star}, we describe a reduction from \textsc{SLSN} with star
demand graphs to \textsc{DST}, which gives an \textsc{FPT} algorithm. \ifprocs Complete proofs of
Theorem \ref{thm:const} (giving a polynomial-time algorithm for a constant
number of demands) and Theorem \ref{thm:star} (giving an \textsc{FPT} algorithm for star
demands) appear in Appendix \ref{appendix:algorithms}.\fi

\subsection{Constant Number of Demands}\label{sec:const}

For any constant $\lambda$, we show that there is a polynomial-time algorithm
that solves $\textsc{SLSN}_{\mathcal{C}_\lambda}$ (Theorem \ref{thm:const}).
This algorithm relies on the following structural lemma\ifprocs (proved in Appendix  \ref{appendix:algorithms})\fi, which allows us to limit the structure of
the optimal solution\ifprocs:  \else. This lemma works not only for the unit-length case, but also for the arbitrary-length case. \fi

\begin{lemma}\label{lem:intersect}
In any feasible solution $S \subseteq E$ of the \textsc{SLSN} problem, there exists a way to assign a path $P_i$ between $s_i$ and $t_i$ in $S$ for each demand $\{s_i,t_i\}\in H$ such that:
\begin{itemize}
\item For each $i\in[p]$, the total length of $P_i$ is at most $L$ and there is no cycle in $P_i$.
\item For each $i,j\in[p]$ and $u,v\in P_i\cap P_j$, the paths between $u$ and $v$ in $P_i$ and $P_j$ are the same.
\end{itemize}
\end{lemma}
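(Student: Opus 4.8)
The plan is to produce the paths $P_i$ explicitly as \emph{canonical} shortest paths with respect to a fixed tie-breaking rule, and then show that such a canonical family is automatically consistent. First fix an arbitrary total order on $E$, let $r(e)\in\{1,\dots,|E|\}$ be the rank of $e$, and for a set of edges $A$ put $\Phi(A)=\sum_{e\in A}2^{-r(e)}$; the two properties we use are that $\Phi$ is injective on subsets of $E$ (uniqueness of binary expansions) and additive over disjoint unions. For each $i\in[p]$, let $P_i$ be the path from $s_i$ to $t_i$ in $(V,S)$ of minimum total length, breaking ties by minimum $\Phi$-value; this is well defined and unique since $\Phi$ is injective. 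Since $S$ is feasible there is a path from $s_i$ to $t_i$ in $S$ of length at most $L$, so $\mathrm{dist}_S(s_i,t_i)\le L$ and hence $\ell(P_i)\le L$; and since all edge lengths are strictly positive, a minimum-length walk cannot repeat a vertex, so $P_i$ is a simple path with no cycle. This gives the first bullet.

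The crux is the claim that for every $i$ and every two vertices $u,v$ lying on $P_i$, the subpath of $P_i$ between $u$ and $v$ equals the (unique) minimum-length, minimum-$\Phi$ path between $u$ and $v$ in $S$; in particular it depends only on $u$, $v$, and $S$, not on $i$. That this subpath is a shortest $u$--$v$ path is the standard ``a subpath of a shortest path is shortest'' argument. For the tie-break, suppose some $u$--$v$ path $R$ in $S$ has the same length but strictly smaller $\Phi$-value; splicing $R$ in place of the subpath yields an $s_i$--$t_i$ walk $W$ of the same (minimum) length, which by positivity of the edge lengths must itself be a simple path, and this forces $R$ to be internally vertex-disjoint from (hence edge-disjoint from) the two pieces of $P_i$ outside $[u,v]$. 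Additivity of $\Phi$ over this disjoint union then gives $\Phi(W)<\Phi(P_i)$, contradicting the choice of $P_i$. Given the claim, for any $i,j$ and any $u,v\in P_i\cap P_j$ both subpaths equal the same canonical $u$--$v$ path and therefore coincide, which is the second bullet.

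I expect the main technical point to be exactly the step asserting that the spliced walk $W$ is simple and the disjointness bookkeeping it yields: one must argue that $W$ has no repeated vertex (here positivity of $\ell$ is essential, since otherwise a zero-length detour could be inserted without changing the length), deduce that $R$ shares no edge with the portions of $P_i$ outside $[u,v]$, and only then invoke additivity of $\Phi$. Existence and uniqueness of the canonical paths, the length bound, acyclicity, and the final deduction of consistency are all routine. Note that the argument never uses unit lengths, so the lemma holds for arbitrary positive edge lengths, matching the remark that it will be reused in the arbitrary-length setting.
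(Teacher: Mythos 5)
Your proof is correct and rests on the same underlying idea as the paper's: construct a \emph{canonical} unique shortest $s_i$--$t_i$ path for each demand, determined only by its endpoints and the edge set $S$, and then observe that subpaths of canonical paths are themselves canonical, which forces the consistency in the second bullet. The packaging of the tie-break differs. The paper perturbs the length function additively, setting $g(e_i)=l(e_i)+\Delta\cdot 2^{-i}$ where $\Delta$ is the minimum positive gap between subset-length sums of $S$; it then proves that $g$-shortest paths are still $l$-shortest and that $g$-shortest paths are unique. You instead keep a two-tier lexicographic objective (minimize $l$-length, then minimize $\Phi(A)=\sum_{e\in A}2^{-r(e)}$), which dispenses with $\Delta$ entirely --- no argument is needed that the perturbation is small enough, since the primary objective is never touched. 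A second, more substantive point of difference: your treatment of the consistency step is more careful than the paper's. The paper asserts that if two subpaths between $u$ and $v$ differed, one of $P_i,P_j$ ``could be improved by changing the subpath,'' while you explicitly note that the spliced object is a priori only a walk, argue from positivity of lengths that it must in fact be a simple path, derive edge-disjointness of the three pieces, and only then invoke additivity of $\Phi$. This extra care is genuinely required for the bookkeeping to go through, and the paper elides it. Both proofs correctly observe that only positivity of edge lengths is used, so the lemma holds in the arbitrary-length setting.
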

\newcommand{\intersectproof}{
We give a constructive proof.  Let $m=|S|$ and $S=\{e_1,\mathellipsis,e_m\}$.  We first want to modify the lengths to ensure that there is always a unique shortest path. Let $\Delta$ denote the minimum length difference between any two subsets of $S$ with different total length, i.e., 
\begin{equation*}
\Delta=\min_{A,B\subseteq S,\sum_{e\in A}l(e)\ne\sum_{e\in B}l(e)}\left|\sum_{e\in A}l(e)-\sum_{e\in B}l(e)\right|.
\end{equation*}
We create a new length function $g$ where $g(e_i)=l(e_i)+\Delta\cdot2^{-i}$. Note that $\Delta$ is always non-zero for any $S$ which has at least $2$ edges, and the problem is trivial when $|S|=1$.

We now show that any two paths have different lengths under $g$. Consider any two different paths $P_x$ and $P_y$.  If $\sum_{e\in P_x}l(e)\ne\sum_{e\in P_y}l(e)$, then without loss of generality we assume $\sum_{e\in P_x}l(e)<\sum_{e\in P_y}l(e)$. Then
\begin{equation}\label{eqn:keep}
\sum_{e\in P_x}g(e)\le\sum_{e\in P_x}l(e)+\sum_{i=1}^m\Delta\cdot2^{-i}<\sum_{e\in P_x}l(e)+\Delta\le\sum_{e\in P_y}l(e)<\sum_{e\in P_y}g(e).
\end{equation}
Otherwise, if $\sum_{e\in P_x}l(e)=\sum_{e\in P_y}l(e)$, then
\[\sum_{e\in P_x}g(e)-\sum_{e\in P_y}g(e)=\sum_{i:e_i\in P_x}\Delta\cdot2^{-i}-\sum_{i:e_i\in P_y}\Delta\cdot2^{-i}\ne0.\]
Therefore in both cases $P_x$ and $P_y$ have different lengths under $g$.

For each demand $\{s_i,t_i\}\in H$, we let $P_i$ be the shortest path between $s_i$ and $t_i$ in $S$ under the new length function $g$. Because any two paths under $g$ have different length, the shortest path between each $\{s_i,t_i\}\in H$ is unique. In addition, because these are shortest paths and edge lengths are positive, they do not contain any cycles.

For each $i\in[p]$, we can see that $P_i$ is also one of the shortest paths between $s_i$ and $t_i$ under original length function $l$. This is because in equation (\ref{eqn:keep}) we proved that a shorter path under length function $l$ is still a shorter path under length function $g$. Since $S$ is a feasible solution, the shortest path between $s_i$ and $t_i$ in $S$ must have length at most $L$. Thus for each $i\in[p]$, we have $\sum_{e\in P_i}l(e)\le L$. 

For any two different paths $P_i$ and $P_j$, let $u,v \in P_i \cap P_j$.  If the subpath of $P_i$ between $u$ and $v$ is different from the subpath of $P_j$ between $u$ and $v$, then by the uniqueness of shortest paths under $g$ we know that either $P_i$ or $P_j$ is not a shortest path (since one of them could be improved by changing the subpath between $u$ and $v$). This contradicts our definition of $P_i$ and $P_j$, and hence they must use the same subpath between $u$ and $v$. 
}
\ifprocs
Lemma \ref{lem:intersect} implies that any two paths $P_i, P_j$ in the optimal
solution are either disjoint, or share exactly one (maximal) subpath. Since
there are only $p$ demands, the total number of shared subpaths is at most
$\binom{p}{2}$, so we can solve the unit-length arbitrary-cost
$\textsc{SLSN}_{\mathcal{C}_\lambda}$ by guessing these subpaths. Informally,
we guess the set of endpoints of all the ``maximal overlapping subpaths''
($Q$), guess how these endpoints are paired up to create the distinct subpaths
($E'$), guess the length of each subpath, and then find the lowest cost path
that connects the endpoints of each guessed subpath and is within the guessed
length. The full algorithm is given as Algorithm \ref{alg:const}. 
\else
\begin{proof}
\intersectproof
\end{proof}
Lemma \ref{lem:intersect} implies that for each two paths $P_i$ and $P_j$, either they do not share any edge, or they share exactly one (maximal) subpath. Since there are only $p$ demands, the total number of shared subpaths is at most $\binom{p}{2}$. Therefore we can solve the unit-length arbitrary-cost $\textsc{SLSN}_{\mathcal{C}_\lambda}$ by guessing these subpaths.

The first step of our algorithm is to guess the endpoints $Q$ of these subpaths, and let $Q'=Q\cup \left(\bigcup_{i=1}^p\{s_i,t_i\}\right)$. The second step is to guess a set $E'\subseteq\{\{u,v\}\mid u,v\in Q',u\ne v\}$. Intuitively, a pair $\{u,v\}\in E'$ means there is a path between $u$ and $v$ in the optimal solution such that only the endpoints of this path is in $Q'$. Then we also guess the length $l'(\{u,v\})$ of such path for each $\{u,v\}\in E'$. Finally, we connect each pair of $u,v\in V$ where $\{u,v\}\in E'$ by lowest cost paths with restricted length $l'(\{u,v\})$, check feasibility, and output the optimal solution. The detailed algorithm is in Algorithm \ref{alg:const} in Section \ref{sec:const}.
\fi

\begin{algorithm}
\caption{Unit-length arbitrary-cost $\textsc{SLSN}_{\mathcal{C}_\lambda}$}\label{alg:const}
\begin{algorithmic}
\State Let $M\gets\sum_{e\in E}c(e)$ and $S\gets E$
\For{$Q\subseteq V$ where $|Q|\le p(p-1)$}
	\State $Q'\gets Q\cup \left(\bigcup_{i=1}^p\{s_i,t_i\}\right)$
	\For{$E'\subseteq\{\{u,v\}\mid u,v\in Q',u\ne v\}$ and $l':E'\rightarrow[L]$}
		\State $T\gets\varnothing$
		\For{$\{u,v\}\in E'$}
			\State $T\gets T\cup\{$the lowest cost path between $u$ and $v$ with length at most $l'(\{u,v\})\}$\\\quad\quad\quad\quad\quad// if such path does not exist, $T$ remains the same
		\EndFor
		\If{$T$ is a feasible solution and $\sum_{e\in T}l'(e)<M$}
			\State $M\gets\sum_{e\in T}c(e)$ and $S\gets T$
		\EndIf
	\EndFor
\EndFor
\\
\Return $S$
\end{algorithmic}
\end{algorithm}

\begin{claim}\label{claim:constTime}
The running time of Algorithm \ref{alg:const} is $n^{O(p^4)}$.
\end{claim}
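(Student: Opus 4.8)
The plan is routine: bound the number of iterations of each of the three nested loops, bound the cost of one execution of the innermost loop body, and multiply. The only mild subtlety is arranging that the number of choices for the length function $l'$ stays polynomial in $n$ (rather than depending on $L$), which one gets for free in the unit-length setting.

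First I would handle the loops. The outer loop ranges over all $Q\subseteq V$ with $|Q|\le p(p-1)$, and there are at most $\sum_{k=0}^{p(p-1)}\binom{n}{k}\le (p(p-1)+1)\,n^{p(p-1)}=n^{O(p^2)}$ such sets. For each of them $|Q'|\le |Q|+2p\le p(p-1)+2p=p^2+p$, so the pair set $\{\{u,v\}:u,v\in Q',\,u\ne v\}$ has at most $\binom{p^2+p}{2}=O(p^4)$ elements; hence there are at most $2^{O(p^4)}$ choices of $E'$. For a fixed $E'$, the map $l':E'\to[L]$ may be restricted without loss of generality to values in $\{1,\dots,\min(L,n-1)\}$, since with unit lengths any simple path uses at most $n-1$ edges and a length budget exceeding $n-1$ is never binding; so there are at most $n^{|E'|}\le n^{O(p^4)}$ relevant choices of $l'$. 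Multiplying the three bounds gives $n^{O(p^2)}\cdot 2^{O(p^4)}\cdot n^{O(p^4)}=n^{O(p^4)}$ iterations of the innermost body.

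Next I would argue that one execution of the body costs only $poly(n)$. For each $\{u,v\}\in E'$ it computes a lowest-cost path between $u$ and $v$ using at most $l'(\{u,v\})$ edges; with unit lengths this is bounded-hop minimum-cost path, solvable by the standard dynamic program that records, for every vertex $w$ and every hop count $h\le L$, the minimum cost of a $u$--$w$ walk with $h$ edges (running time $O(nmL)=poly(n)$). Then it checks whether $T$ is a feasible solution, i.e.\ whether the distance between $s_i$ and $t_i$ in $T$ is at most $L$ for every $i$, which is $p$ BFS computations, again $poly(n)$. Thus the body is $poly(n)$, and the total running time is $n^{O(p^4)}\cdot poly(n)=n^{O(p^4)}$.

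I do not expect a genuine obstacle here, since the whole argument is counting; the two places that warrant a line of justification are (i) the reduction of the range of $l'$ to $\{1,\dots,n-1\}$, which is exactly what keeps the count of $l'$-choices polynomial in $n$, and (ii) the observation that the bounded-hop minimum-cost path subroutine used inside the body is polynomial-time, which again relies on the lengths being unit (equivalently, on hop counts being bounded by $n-1$). With both in hand the claimed bound $n^{O(p^4)}$ follows immediately.
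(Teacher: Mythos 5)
Your proof is correct and follows essentially the same route as the paper: count the choices of $Q$ ($n^{O(p^2)}$), of $E'$ ($2^{O(p^4)}$), and of $l'$ ($n^{O(p^4)}$), and note that the inner body (bounded-hop minimum-cost path and a feasibility check) is polynomial. The only cosmetic difference is that the paper passes from $L^{(p(p+1))^2}$ to $n^{(p(p+1))^2}$ by silently using the convention $L\le n$ in the unit-length setting, whereas you justify restricting the range of $l'$ to $\{1,\dots,\min(L,n-1)\}$ explicitly; this is a welcome small clarification but not a different argument.
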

\begin{proof}
Clearly there are at most $n^{p(p-1)}$ possibilities for $Q$, and for each $Q$ there are at most $2^{(p(p-1)+2p)^2}$ possible sets $E'$ and at most $L^{(p(p-1)+2p)^2}$ possible $l'$. Since we assume unit edge lengths, we can use the Bellman-Ford algorithm to find the lowest cost path within a given length bound in polynomial time. Checking feasibility also takes polynomial time using standard shortest path algorithms. Thus, the running time is at most $n^{p(p-1)}\cdot 2^{(p(p+1))^2}\cdot n^{(p(p+1))^2}\cdot poly(n)$.
\end{proof}

\newcommand{\constproof}{
By Claim \ref{claim:constTime}, the running time of Algorithm \ref{alg:const} is $n^{O(p^4)}$. Since $\lambda$ is constant and $p \leq \lambda$, this running time is polynomial in $n$. Now we will prove correctness. The algorithm always returns a feasible solution, because we replace $S$ by $T$ only if $T$ is feasible, and thus $S$ is always a feasible solution. Therefore, we only need to show that this algorithm returns a solution with cost at most the cost of the optimal solution.

Let the optimal solution be $S^*$. We assign $P_i^*$ for all $i\in[p]$ as in Lemma \ref{lem:intersect}. Recall that path $P_i^*$ and $P_j^*$ can share at most one (maximal) subpath for each $i,j\in[p]$ where $i\ne j$. Let $Q^*$ be the endpoint set of the (maximal) subpaths which are shared by some $P_i^*$ and $P_j^*$, and let $Q'^*=Q^*\cup\bigcup_{i=1}^p\{s_i,t_i\}$.

We can see that the optimal solution $S^*$ can be partitioned to a collection of paths by $Q^*$. We use $E'^*$ to represent whether two vertices in $Q'^*$ are ``adjacent'' on some path $P_i^*$: for any $u,v\in Q'^*$ where $u\ne v$, the set $E'^*$ contains $\{u,v\}$ if and only if there exists $i\in[p]$ such that $u,v\in P_i^*$, and there is no vertex $w\in Q'^*\setminus\{u,v\}$ which is in the subpath between $u$ and $v$ in $P_i^*$. For each $\{u,v\}\in E'^*$, let $P_{\{u,v\}}^*$ be the subpath between $u$ and $v$ on path $P_i^*$. This is well defined because by Lemma \ref{lem:intersect} the subpath is unique. We define $l'^*(\{u,v\})$ as the length of $P_{\{u,v\}}^*$ for each $\{u,v\}\in E'^*$

Note that for any $\{u,v\}\ne\{u',v'\}\in E'^*$, we also know that $P_{\{u,v\}}^*$ and $P_{\{u',v'\}}^*$ are edge-disjoint.  To see this, assume that they do share an edge, and let $u''$ and $v''$ be the endpoints of the (maximal) shared subpath between $P_{\{u,v\}}^*$ and $P_{\{u',v'\}}^*$. Then $u''$ and $v''$ are both in $Q'^*$, and at least one of them is in $Q'^*\setminus\{u,v\}$ or in $Q'^*\setminus\{u',v'\}$, which contradicts our definition of $E'^*$.

Since the algorithm iterates over all possibilities for $Q$, $E'$ and $l'$, there is some iteration in which $Q=Q'^*$, $E'=E'^*$, and $l'\equiv l'^*$. We will show that the algorithm also must find an optimal feasible solution in this iteration.

For each $i\in[p]$, the path $P_i^*$ is partitioned to edge-disjoint subpaths by $Q'^*$. Let $q_i$ be the number of subpaths, and let the endpoints be $s_i=v_{i,0},v_{i,1},\mathellipsis,v_{i,q_i-1},v_{i,q_i}=t_i$. We further let these subpaths be $P_{\{s_i,v_{i,1}\}}^*,P_{\{v_{i,1},v_{i,2}\}}^*,\mathellipsis,P_{\{v_{i,q_i-1},t_i\}}^*$. By the definition of $l'^*$, for each $j\in[q_i]$, there must be a path between $v_{i,j-1}$ and $v_{i,j}$ with length at most $l'^*(\{v_{i,j-1},v_{i,j}\})$ in graph $G$. Thus after the algorithm visited $\{v_{i,j-1},v_{i,j}\}\in E'^*$, the edge set $T$ must contains a path between $u$ and $v$ with length at most $l'^*(\{v_{i,j-1},v_{i,j}\})$. Therefore we know that the edge set $T$ in this iteration contains a path between $s_i$ and $t_i$ with length $\sum_{j=1}^{q_i}l'^*(\{v_{i,j-1},v_{i,j}\})\le L$, and thus it is a feasible solution.

Let $MinCost(u,v,d)$ be the lowest cost for a path between $u$ and $v$ with distance at most $d$ in graph $G$, then the total cost of this solution is $\sum_{\{u,v\}\in E'^*}MinCost(u,v,l'^*(\{u,v\}))$. Moreover, for each $\{u,v\}\in E'^*$ and $\{u',v'\}\in E'^*$ with $\{u,v\}\ne\{u',v'\}$, the paths $P_{\{u,v\}}^*$ and $P_{\{u',v'\}}^*$ are edge-disjoint, and each $P_{\{u,v\}}^*$ has cost at least $MinCost(u,v,l'^*(\{u,v\}))$.  Thus the cost of the optimal solution is at least $\sum_{\{u,v\}\in E'^*}MinCost(u,v,l'^*(\{u,v\}))$, and so the algorithm outputs an optimal solution and it runs in polynomial time.
\qed

\begin{corollary}
For any constant $\lambda>0$, there is a polynomial time algorithm for the arbitrary-length unit-cost $\textsc{SLSN}_{\mathcal{C}_\lambda}$.
\end{corollary}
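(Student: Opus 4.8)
The plan is to re-run Algorithm \ref{alg:const} almost verbatim, swapping the roles of ``length'' and ``cost''. Recall that in the unit-length arbitrary-cost case we guessed, for each pair $\{u,v\}\in E'$, the \emph{length} $l'(\{u,v\})\in[L]$ of the corresponding maximal shared subpath of the optimum and then connected $u$ and $v$ by a lowest-cost path of length at most $l'(\{u,v\})$; that enumeration was polynomial only because a simple path has at most $n-1$ edges, so we could assume $L\le n$. With arbitrary lengths this is no longer available, so instead I would guess, for each $\{u,v\}\in E'$, the \emph{number of edges} $k'(\{u,v\})\in\{0,1,\dots,n-1\}$ of that subpath, and connect $u$ and $v$ by a \emph{minimum-length} path that uses at most $k'(\{u,v\})$ edges, which is computable in polynomial time by the standard hop-indexed Bellman--Ford dynamic program (compute $D[v][h]$, the minimum length of a walk from $u$ to $v$ using at most $h$ edges, for $h=0,\dots,k'(\{u,v\})$; positivity of lengths makes the optimal walk a simple path). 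Since costs are unit, the running variable $M$ now counts edges of the best solution found so far, and the feasibility test ``every demand is connected within distance $L$ in $T$'' remains an ordinary shortest-path computation compared against $L$. As $|Q|\le p(p-1)$ and $|E'|\le(p(p+1))^2$ with each $k'(\{u,v\})$ ranging over at most $n$ values, the number of iterations is $n^{p(p-1)}\cdot 2^{O(p^4)}\cdot n^{O(p^4)}=n^{O(p^4)}$, which is polynomial since $p\le\lambda$ is a constant.

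For correctness I would follow the proof of Theorem \ref{thm:const} essentially line by line. Feasibility of the output is automatic (we only ever store a $T$ that passed the feasibility test), so the point is to exhibit an iteration producing a feasible $T$ with $|T|$ at most the optimum $|S^*|$. Apply Lemma \ref{lem:intersect} to $S^*$ to get the paths $P_i^*$, and define $Q^*$, $Q'^*$, $E'^*$ and the subpaths $P^*_{\{u,v\}}$ exactly as in that proof; set $k'^*(\{u,v\})$ to be the number of edges of $P^*_{\{u,v\}}$ (at most $n-1$, since $P^*_{\{u,v\}}$ is a simple subpath). Consider the iteration with $Q=Q^*$ (so $Q'=Q'^*$), $E'=E'^*$, $k'\equiv k'^*$. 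Each $P^*_{\{u,v\}}$ witnesses that a $u$-$v$ path with at most $k'^*(\{u,v\})$ edges exists, so $T$ acquires a $u$-$v$ path of length at most $\sum_{e\in P^*_{\{u,v\}}}l(e)$ and with at most $k'^*(\{u,v\})$ edges. Concatenating along the partition of each $P_i^*$ by $Q'^*$ shows $T$ contains a walk from $s_i$ to $t_i$ of total length at most $\sum_{e\in P_i^*}l(e)\le L$, so $T$ is feasible. For the cost bound, the maximal shared subpaths $P^*_{\{u,v\}}$ over $\{u,v\}\in E'^*$ are pairwise edge-disjoint (same argument as in Theorem \ref{thm:const}), hence $|S^*|\ge\sum_{\{u,v\}\in E'^*}k'^*(\{u,v\})$, while the $u$-$v$ path added to $T$ has at most $k'^*(\{u,v\})$ edges, so $|T|\le\sum_{\{u,v\}\in E'^*}k'^*(\{u,v\})\le|S^*|$. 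Thus the algorithm outputs a solution of cost exactly the optimum.

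I do not expect a genuine obstacle. Lemma \ref{lem:intersect} is already stated for arbitrary lengths, so the structural decomposition it provides --- edge-disjointness of the maximal shared subpaths and additivity of length along each demand path --- transfers unchanged, and the only new ingredient is the hop-bounded minimum-length path subroutine, which is textbook. The one point worth emphasizing is \emph{why} we guess hop counts rather than lengths: with arbitrary (possibly irrational) lengths there is no bounded candidate set of lengths to enumerate, whereas hop counts live in $\{0,\dots,n-1\}$, and it is precisely the unit-cost assumption that makes ``minimize length over paths of bounded hop count'' the correct analogue of ``minimize cost over paths of bounded length'' from Theorem \ref{thm:const}.
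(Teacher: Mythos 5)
Your proposal is correct and matches the paper's approach exactly: the paper's (very terse) proof says to guess the cost $c'$ of each shared subpath instead of its length and then use Bellman--Ford to find a shortest path under the cost bound, which in the unit-cost setting is precisely your ``guess the hop count $k'$ and find a minimum-length path using at most $k'$ edges.'' Your write-up just spells out the dual correctness argument (swap the roles of length and cost in the analysis of Theorem~\ref{thm:const}) that the paper leaves implicit.
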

\begin{proof}
We can use the same technique, but instead of guessing the length $l'$ we guess the cost $c'$, and then find shortest path under cost bound $c'$. We can also use Bellman-Ford algorithm in this step.
\end{proof}
}
\ifprocs
Since $\lambda$ is constant and $p \leq \lambda$, Claim \ref{claim:constTime}
proves that the running time of Algorithm \ref{alg:const} is polynomial in $n$.
A full proof of correctness appears in Appendix \ref{appendix:algorithms}.

Note that we can similarly construct a polynomial-time algorithm for
arbitrary-length unit-cost $\textsc{SLSN}_{\mathcal{C}_\lambda}$: instead of guessing the lengths $l'$ we guess the
costs $c'$, and then find shortest paths under cost bounds $c'$.
\else
\subsubsection{Proof of Theorem \ref{thm:const}:}
\constproof
\fi

\subsection{Star Demand Graphs ($\textsc{SLSN}_{\mathcal{C}^*}$)}\label{sec:star}

\newcommand{\starproof}{
We do a reduction from $\textsc{SLSN}_{\mathcal{C}^*}$ to the \textsc{DST} problem.  This is essentially folklore. We include it here for completeness.

\begin{definition}[\textsc{Directed Steiner Tree}]
Given a directed graph $G=(V,E)$, a cost function $c:E\rightarrow\mathbb{R}^+$, a root $s$, and $p$ vertices $t_1,\mathellipsis,t_p$, the objective of the \textsc{DST} problem is to find a minimum cost subgraph $G'=(V,S)$, such that for every $i\in[p]$, there is a path from $s$ to $t_i$ in $G'$.
\end{definition}

\begin{theorem}[\cite{feldman2006directed}]\label{thm:DST}
There is an \textsc{FPT} algorithm for the \textsc{DST} problem with parameter $p$.
\end{theorem}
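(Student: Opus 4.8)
The plan is to prove Theorem~\ref{thm:DST} by the standard Dreyfus--Wagner-style subset dynamic programming, adapted to directed graphs, which runs in $3^{p}\cdot\mathrm{poly}(n)$ time and is therefore \textsc{FPT} with parameter $p$. As preprocessing I would compute, for every ordered pair $u,v\in V$, the minimum cost $\mathrm{dist}(u,v)$ of a directed path from $u$ to $v$ in $G$ (for instance by running Dijkstra from each vertex, in total $O(n(m+n\log n))$ time; recall the costs are positive). Then I would build a table $f\colon V\times 2^{[p]}\to\mathbb{R}^{+}\cup\{\infty\}$ in which $f(v,D)$ is the minimum cost of an edge set $F\subseteq E$ such that in $(V,F)$ there is a directed path from $v$ to $t_{i}$ for every $i\in D$. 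The value we want is $f(s,[p])$, the cost of an optimal \textsc{DST} solution (and the solution itself is recovered by standard back-pointers).

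The base cases are $f(v,\emptyset)=0$ and $f(v,\{i\})=\mathrm{dist}(v,t_{i})$ for all $v\in V$, $i\in[p]$. For $|D|\ge 2$ I would use the recurrence
\[
f(v,D)\;=\;\min_{u\in V}\Bigl(\mathrm{dist}(v,u)\;+\;\min_{\emptyset\subsetneq D'\subsetneq D}\bigl(f(u,D')+f(u,D\setminus D')\bigr)\Bigr),
\]
evaluating the entries in increasing order of $|D|$ so that every value on the right is already available. The inequality ``$\le$'' is immediate: for any $u$ and any split $D'$, the union of a min-cost $v$-to-$u$ path with optimal edge sets witnessing $f(u,D')$ and $f(u,D\setminus D')$ is feasible for $(v,D)$ and costs at most that term, so $f(v,D)$ is at most the minimum over all terms. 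For the reverse inequality I would take an optimal edge set $F$ for $(v,D)$, extract from it a min-cost out-arborescence $T$ rooted at $v$ that still reaches all of $D$ (possible with $\mathrm{cost}(T)\le\mathrm{cost}(F)$ because the costs are positive), and walk from $v$ along $T$ to the first vertex $u$ that is either a terminal of $D$ or has out-degree at least $2$ in $T$; such a $u$ exists once $|D|\ge 2$, and possibly $u=v$. Splitting the terminals of $D$ lying strictly below $u$ into two nonempty parts $D'$ and $D\setminus D'$ according to $u$'s distinct child-subtrees --- with an internal terminal $t_{i}$ handled by the split $D'=\{i\}$, using $f(u,\{i\})=\mathrm{dist}(u,u)=0$ --- decomposes $T$ into a $v$-to-$u$ subpath plus two sub-arborescences whose costs add, giving $\mathrm{cost}(T)\ge\mathrm{dist}(v,u)+f(u,D')+f(u,D\setminus D')$, which is at least the right-hand side.

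For the running time, the nested minimizations cost $\sum_{D\subseteq[p]}2^{|D|}=3^{p}$ split operations in total, each combined in constant time after the $|V|$ choices of the branching vertex $u$, so filling the table takes $3^{p}\cdot\mathrm{poly}(n)$ time; with the all-pairs-distance preprocessing this is $f(p)\cdot\mathrm{poly}(n)$ for $f(p)=3^{p}$, which establishes the theorem. The part that needs the most care is the reverse inequality: one must justify that passing from a feasible edge set to an out-arborescence is cost-free (this uses positivity of the costs, so that no Steiner leaves survive and shortest-path subtrees are no more expensive), and that the ``first branching-or-terminal vertex'' $u$ always induces a genuine two-way partition of $D$ --- in particular, treating correctly the degenerate cases $u=v$ and $u$ an internal terminal of out-degree~$1$. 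The rest is routine bookkeeping.
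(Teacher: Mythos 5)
The paper does not prove this theorem; it is stated as a black-box citation to \cite{feldman2006directed}, with no argument of its own. Your proof is the standard Dreyfus--Wagner subset dynamic program carried over to directed graphs, and it is correct: the table $f(v,D)$ is well-founded, the $3^{p}\cdot\mathrm{poly}(n)$ accounting (via $\sum_{D\subseteq[p]}2^{|D|}=3^{p}$) is right, and your treatment of the reverse inequality --- pruning a feasible edge set down to an out-arborescence whose leaves are all terminals, walking to the first terminal-or-branching vertex $u$, and in the degenerate ``internal terminal of out-degree one'' case splitting off the singleton $\{i\}$ with $f(u,\{i\})=0$ --- covers exactly the cases that need care, including $u=v$ and $u$ simultaneously a terminal and a branching vertex. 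This is the canonical way to establish the theorem and supplies a self-contained proof that the paper elides.
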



\subsubsection{Proof of Theorem \ref{thm:star}:}
Let $(G=(V,E),c,l \equiv 1,\{\{s_1,t_1\},\mathellipsis,\{s_p,t_p\}\},L)$ be a unit-length arbitrary-cost \textsc{SLSN} instance with restricted demand graph class $\mathcal{C}^*$. Since $\mathcal{C}^*$ is the class of stars, we let $s=s_1=s_2=\mathellipsis=s_p$.

For the reduction, we first create a $(L+1)$-layered graph $G'$, where each layer has $|V|$ vertices. Let $v^{(i)}$ represent the vertex $v\in V$ in layer $i$. Then for each $i \in [L]$ and $u,v\in V$, we add an edge from $u^{(i-1)}$ to $v^{(i)}$ if $\{u,v\} \in E$, and we give this edge cost $c'(u^{(i-1)},v^{(i)})=c(u,v)$. For each $i\in[L]$ and each $v\in V$, we also add an edge $(v^{(i-1)},v^{(i)})$ with cost $c'(v^{(i-1)},v^{(i)})=0$.  

This gives us an instance $(G',c',s^{(0)},t_1^{(L)},\mathellipsis,t_p^{(L)})$ of \textsc{DST}.  Since this reduction clearly takes only polynomial time (since $L \leq n$ due to the unit-length setting), the only thing left is to show that the two instances have the same optimal cost.

Let $S$ be the optimal solution of our starting \textsc{SLSN} instance. Let $d_s(v)$ be the distance between $s$ and $v$ in $S$. We can construct a solution $S'$ for the \textsc{DST} instance of cost at most $c(S)$. First, for each $i\in[L]$ and $\{u,v\}\in S$ with $d_s(u)+1=d_s(v)$, we add $(u^{(d_s(u))},v^{(d_s(v))})$ to $S'$. Then, for each $j\in[p]$ and $i=d_s(t_j),\mathellipsis,L$, we add $(t_j^{(i-1)},t_j^{(i)})$ to $S'$. Note that the cost of $S'$ is at most the cost of $S$, since every non-zero cost edge in $S'$ corresponds to a different edge in $S$ with the same cost. $S'$ is also a feasible solution, because for every $i\in[p]$ there is a path $s$ -- $v_{i,1}$ -- $\mathellipsis$ -- $v_{i,d_s(t_i)-1}$ -- $t_i$ in $S$ with length at most $L$, such that $d_s(v_{i,j})=j$ for each $j\in[d_s(t_i)-1]$, and thus $S'$ contains path $s^{(0)}$ -- $v_{i,1}^{(1)}$ -- $\mathellipsis$ -- $v_{i,d_s(t_i)-1}^{(d_s(t_i)-1)}$ -- $t_i^{(d_s(t_i))}$ -- $\mathellipsis$ -- $t_i^{(L)}$.

Now let $S'$ be the optimal solution of the \textsc{DST} instance. We can construct a solution $S$ for our original $\textsc{SLSN}_{\mathcal{C}^*}$ instance as follows: for any $u,v\in V$ where $u\ne v$, we add $\{u,v\}$ to $S$ if there exists an $i$ such that $(u^{(i-1)},v^{(i)})\in S'$. Clearly the cost of $S$ is at most the cost of $S'$ because every edge in $S$ corresponds to a different edge in $S'$ with the same cost. $S$ is also a feasible solution, since for every $i\in[p]$ there is a path $s^{(0)}=v_{i,0}^{(0)}$ -- $v_{i,1}^{(1)}$ -- $\mathellipsis$ -- $v_{i,L-1}^{(L-1)}$ -- $v_{i,L}^{(L)}=t_i^{(L)}$ in $S'$, and thus $S$ contains path $s$ -- $v_{i,1}$ -- $\mathellipsis$ -- $v_{i,L-1}$ -- $t_i$ with length at most $L$. Notice that there may be $j\in[L]$ such that $v_{i,j}=v_{i,j-1}$, but this only decreases the length and has no effect on cost.

Therefore, the two instances have the same optimal cost. Combining this with Theorem \ref{thm:DST} allows us to get an \textsc{FPT} algorithm for the unit-length arbitrary-cost $\textsc{SLSN}_{\mathcal{C}^*}$ by first reducing to \textsc{DST} and then using the algorithm from Theorem \ref{thm:DST}.\qed
}

\ifprocs
We prove Theorem \ref{thm:star} by reducing
$\textsc{SLSN}_{\mathcal{C}^*}$ to \textsc{DST}, which has a known
FPT algorithm~\cite{feldman2006directed}. This reduction is
essentially folklore, but is included in Appendix \ref{appendix:algorithms}
for completeness. This reduction transforms a unit-length arbitrary-cost
$\textsc{SLSN}_{\mathcal{C}^*}$ instance $(G=(V,E),c,l \equiv
1,\{\{s_1,t_1\},\mathellipsis,\{s_p,t_p\}\},L)$ into a \textsc{DST} instance by
creating a layered graph $G'$ with $L+1$ layers. Each layer includes $|V|$
vertices (one for each vertex in $G$). Letting $v^{(i)}$ represent vertex $v$
in layer $i$, each vertex $v^{(i)}$ (for $i \in [0,L]$) is connected to vertex
$v^{(i+1)}$ with a 0-cost edge $(v^{(i)}, v^{(i+1)})$. Each such $v^{(i)}$ is
also connected to each vertex $u^{(i+1)}$ such that $(v,u) \in E(G)$ by an edge
$(v^{(i)}, u^{(i+1)})$ with cost $c(u,v)$. For the demands of the \textsc{DST}
instance, we require the demand-source $s = s_1, \dots s_p$ of the
$\textsc{SLSN}_{\mathcal{C}^*}$ instance in layer $0$ (i.e., $s^{(0)}$) to be
connected to layer-$L$ endpoints $t_1^{(L)}, \dots t_p^{(L)}$, giving us an
instance $(G',c',s^{(0)},t_1^{(L)},\mathellipsis,t_p^{(L)})$ of \textsc{DST}.
We solve this \textsc{DST} instance using the algorithm
of~\cite{feldman2006directed} and construct a solution to the
$\textsc{SLSN}_{\mathcal{C}^*}$ by including each edge $(v,u)$ such that edge
$(v^{(i)}, u^{(i+1)})$ for some layer $i$ appears in the \textsc{DST} solution.
\else
\starproof
\fi

\section{\textsc{W}$[1]$-Hardness for Unit-Length Unit-Cost \textsc{SLSN}}\label{sec:hard}

In this section we prove our main hardness result, Theorem \ref{thm:hard}. We begin with some preliminaries, then give our reduction and proof.

\subsection{Preliminaries}\label{sec:pre}

We prove Theorem \ref{thm:hard} by constructing an \textsc{FPT} reduction from the \textsc{Multi-Colored Clique (MCC)} problem to the unit-length unit-cost $\textsc{SLSN}_\mathcal{C}$ problem for any $\mathcal{C}\nsubseteq\mathcal{C}_\lambda\cup \mathcal{C}^*$.  We begin with the \textsc{MCC} problem.

\begin{definition}[\textsc{Multi-Colored Clique}]
Given a graph $G=(V,E)$, a number $k\in\mathbb{N}$ and a coloring function $c:V\rightarrow[k]$. The objective of the \textsc{MCC} problem is to determine whether there is a clique $T \subseteq V$ in $G$ with $|T| = k$ where $c(x) \neq c(y)$ for all $x,y \in T$.
\end{definition}

For each $i\in[k]$, we define $C_i = \{v \in V : c(v) = i\}$ to be the vertices of color $i$. We can assume that the graph does not contain edges where both endpoints have the same color, since those edges do not affect the existence of a multi-colored clique. It has been proven that the \textsc{MCC} problem is \textsc{W}$[1]$-complete.

\begin{theorem}[\cite{fellows2009parameterized}]\label{thm:mcc}
The \textsc{MCC} problem is \textsc{W}$[1]$-complete with parameter $k$.
\end{theorem}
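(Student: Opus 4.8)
The plan is to prove both halves of ``$\textsc{W}[1]$-complete'' by relating \textsc{MCC} to the canonical $\textsc{W}[1]$-complete problem \textsc{Clique} (decide whether a graph has a clique of size $k$, parameterized by $k$). Membership in $\textsc{W}[1]$ is essentially immediate, while $\textsc{W}[1]$-hardness needs a short parameterized (\textsc{FPT}) reduction from \textsc{Clique}.

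For membership, I would exploit the normalization we are already allowed to assume, namely that the input graph $G$ of an \textsc{MCC} instance has no edge whose two endpoints share a color. Under this assumption every clique of $G$ consists of pairwise distinctly colored vertices, so a clique of size $k$ must use $k$ distinct colors, i.e.\ exactly one vertex of each of the $k$ color classes, and is therefore automatically multi-colored; conversely a multi-colored clique of size $k$ is in particular a clique of size $k$. Hence the identity map (forgetting the coloring, keeping the parameter $k$) is a parameterized reduction from \textsc{MCC} to \textsc{Clique}, which places \textsc{MCC} in $\textsc{W}[1]$.

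For hardness, I would reduce \textsc{Clique} to \textsc{MCC} as follows. Given $(G,k)$ with $V(G)=\{v_1,\dots,v_n\}$, build $G'$ on vertex set $V(G)\times[k]$, give $(v,i)$ the color $i$, and add an edge between $(u,i)$ and $(w,j)$ exactly when $i\ne j$ and $\{u,w\}\in E(G)$. This takes polynomial time and leaves the parameter equal to $k$. The correctness claim is that $G$ has a clique of size $k$ iff $G'$ has a multi-colored clique of size $k$: from a $k$-clique $\{u_1,\dots,u_k\}$ in $G$ the set $\{(u_1,1),\dots,(u_k,k)\}$ is multi-colored and, since all of its edges are present in $G'$, a clique; and a multi-colored $k$-clique of $G'$ hits each color exactly once, say $\{(u_1,1),\dots,(u_k,k)\}$, and since $G$ has no self-loops the presence of the edge between $(u_a,a)$ and $(u_b,b)$ forces both $u_a\ne u_b$ and $\{u_a,u_b\}\in E(G)$, so $\{u_1,\dots,u_k\}$ is a $k$-clique of $G$. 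Since \textsc{Clique} is $\textsc{W}[1]$-hard, so is \textsc{MCC}.

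The only subtle point — and the one I would be most careful about — is the backward direction of the hardness reduction: one must rule out the degenerate possibility that a multi-colored clique in $G'$ secretly reuses a single original vertex $u$ across two color classes. Here there is no self-loop $\{u,u\}$ to witness an edge between $(u,a)$ and $(u,b)$, so the case cannot occur; the analogous observation (no monochromatic edges) is exactly what makes the membership direction work as well. Combining the two reductions yields $\textsc{W}[1]$-completeness, which is the result of Fellows et al.~\cite{fellows2009parameterized}; I include this sketch only for completeness.
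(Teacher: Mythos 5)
Your proof is correct, but there is nothing in the paper to compare it against: Theorem~\ref{thm:mcc} is stated as a black-box citation to Fellows et al.~\cite{fellows2009parameterized}, with no argument given. Your sketch is the standard argument and would serve as a self-contained replacement for that citation. The membership direction correctly exploits the harmless \textsc{FPT} preprocessing of deleting monochromatic edges, after which any $k$-clique necessarily hits $k$ distinct colors, so forgetting the coloring is a parameter-preserving reduction to \textsc{Clique}. The hardness direction correctly builds $G'$ on $V(G)\times[k]$ with color given by the second coordinate and the natural edge rule, and correctly flags the one subtlety, namely that a multi-colored clique in $G'$ cannot reuse a single original vertex across two color classes because $G$ has no self-loops, so $(u,a)$ and $(u,b)$ are never adjacent. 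Both reductions run in polynomial time and keep the parameter equal to $k$, so together with the $\textsc{W}[1]$-completeness of \textsc{Clique} they establish the theorem.
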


We first define a few important classes of graphs.  These are the major classes that fall outside of $\mathcal{C}^* \cup\mathcal{C}_{\lambda}$, so we will need to be able to reduce \textsc{MCC} to \textsc{SLSN} where the demand graphs are in these classes, and then this will allow us to can prove the hardness for general $C\nsubseteq\mathcal{C}^* \cup\mathcal{C}_{\lambda}$.  For every $k \in \mathbb{N}$, we define the following graph classes. Each of the first four classes is just one graph up to isomorphism, but classes 5 and 6 are sets of graphs, so we use the notation $\mathcal{H}$ instead of $H$ for these classes. Note that each of the first three classes is just a star with an additional edge, so we use $^*$ to make this clear.  

\begin{enumerate}
\item $H_{k,0}^*$: a star with $k(k-1)$ leaves and an edge with both endpoints \emph{not} in the star.\label{case:disjoint}
\item $H_{k,1}^*$: a star with $(k(k-1)+1)$ leaves and an edge $\{u,v\}$ where $u$ is a leaf of the star and $v$ is not in the star. \label{case:one}
\item $H_{k,2}^*$: a star with $(k(k-1)+2)$ leaves, and an edge $\{u,v\}$ where both $u$ and $v$ are leaves of the star.\label{case:two}
\item $H_{k,k}$: $k(k-1)+1$ edges where all the endpoints are different (i.e., a matching of size $k(k-1)+1$).\label{case:matching}
\item $\mathcal{H}_{2,k}$: the class of graphs that have exactly $k(k-1)+2$ vertices, and contain a $2$ by $k(k-1)$ complete bipartite subgraph (not necessarily an induced subgraph). \label{case:bipartite}
\item $\mathcal{H}_k$: the class of graphs that contain at least one of the graphs in previous five classes as an induced subgraph.\label{case:all}
\end{enumerate}

\ifprocs
The proof of the following lemma can be found in Appendix \ref{sec:hk}.
\else
We first prove the following lemma.
\fi

\begin{lemma}\label{lem:hk}
For any $k\ge2$, if a graph $H$ is not a star and $H$ has at least $8k^{10}$ edges, then $H\in\mathcal{H}_k$, and we can find an induced subgraph which is isomorphic to a graph in $\{H_{k,0}^*,H_{k,1}^*,H_{k,2}^*,H_{k,k}\} \cup \mathcal{H}_{2,k} \cup \mathcal{H}_k$ in $poly(|H|)$ time. 
\end{lemma}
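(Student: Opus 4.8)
The plan is to dispatch $H$ according to its maximum degree $\Delta(H)$, using a threshold near $k^4$, and in each regime exhibit an induced copy of one of $H_{k,0}^*,H_{k,1}^*,H_{k,2}^*,H_{k,k}$ or of some member of $\mathcal H_{2,k}$ (any of which witnesses $H\in\mathcal H_k$). Throughout write $N=k(k-1)$. First suppose $\Delta(H)\le k^4$. I would greedily build a maximal matching $M$; since every edge of $H$ meets $V(M)$ we get $|M|\ge |H|/(2\Delta(H))$. A maximal matching need not be induced, so I pass to the \emph{conflict graph} on vertex set $M$, joining two matching edges whenever $H$ has an edge between them. Its maximum degree is at most $2\Delta(H)$, so a greedy independent set in the conflict graph — which is exactly an induced matching of $H$ — has size at least $|M|/(2\Delta(H)+1)=\Omega(|H|/\Delta(H)^2)$, and for $|H|\ge 8k^{10}$ and $\Delta(H)\le k^4$ this exceeds $N+1$. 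Any $N+1$ of those edges induce a copy of $H_{k,k}$.

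Otherwise fix a vertex $v$ with $\deg(v)>k^4$ and put $A=N(v)$. If some $v'\ne v$ satisfies $|N(v)\cap N(v')|\ge N$, then $v$, $v'$, and $N$ of their common neighbours form a set of exactly $N+2$ vertices whose induced subgraph contains $K_{2,N}$ as a subgraph, hence lies in $\mathcal H_{2,k}$; note that no "cleanup" is required here, because $\mathcal H_{2,k}$ is closed under adding edges. If no such $v'$ exists, then every vertex has fewer than $N$ neighbours inside $A$, i.e.\ $H[A]$ has maximum degree $<N$. Since $H$ is not a star (and, being a demand graph, has no isolated vertices), some edge $e=\{x,y\}$ avoids $v$. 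Deleting $N(x)\cup N(y)\cup\{x,y\}$ from $A$ removes at most $2N$ vertices, and $H[A]$ still has maximum degree $<N$ on the remainder; since $|A|>k^4$ and $k^4\ge (N+1)^2$ (the latter amounting to $k^2\ge k^2-k+1$), a greedy independent set there yields a set $L'\subseteq A\setminus(N(x)\cup N(y)\cup\{x,y\})$ with $|L'|=N$.

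Then I would take the vertex set $\{v\}\cup L'\cup\{x,y\}$ and check that it induces $H_{k,0}^*$, $H_{k,1}^*$, or $H_{k,2}^*$ according to whether $0$, $1$, or $2$ of $x,y$ lie in $A$: every edge from $v$ to $L'\cup(\{x,y\}\cap A)$ is present since that set lies in $N(v)$; the edge $\{x,y\}$ is present; and no other edge can appear, because $L'$ is independent in $H[A]$ and disjoint from $N(x)\cup N(y)$, while any endpoint of $e$ outside $A$ is non-adjacent to $v$. Counting vertices and edges then matches the three definitions (leaf set $L'$ with $x,y$ a disjoint edge; leaf set $L'\cup\{x\}$ with pendant $y$; leaf set $L'\cup\{x,y\}$). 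All of the computations used — degree sequence, common-neighbourhood sizes, greedy maximal matching, greedy independent sets — run in $poly(|H|)$ time, so the construction is constructive as required.

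The genuine obstacle is not \emph{finding} a large substructure — the edge bound $|H|\ge 8k^{10}$ immediately gives a large star, a large matching, or two vertices with many common neighbours — but making the chosen substructure \emph{induced}, since $H_{k,i}^*$ and $H_{k,k}$ forbid any stray edge among the leaves or between the leaves and the endpoints of the extra edge. Two devices make this work: the conflict-graph trick, which upgrades a matching to an induced matching, and the observation that in the high-degree regime the failure of the $\mathcal H_{2,k}$ case is \emph{exactly} what forces $H[A]$ to be sparse ($\max$ degree $<N$), so that a clean independent leaf set is always available; $\mathcal H_{2,k}$ is the one target graph that need not be induced, which is precisely what allows the case analysis to close. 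The constant $8$ in $8k^{10}$ is chosen to leave enough slack for the two greedy counting arguments.
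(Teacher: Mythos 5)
Your proof is correct, and it follows the same high-level dichotomy as the paper: threshold on the maximum degree, using a large induced matching in the low-degree regime and a large star in the high-degree regime. Two of the details, though, are handled genuinely differently and are worth comparing. For the induced matching, the paper greedily picks an edge and deletes its entire neighborhood (removing at most $8k^8$ edges per step), which directly produces an induced matching; you instead build an arbitrary maximal matching and then take an independent set in a conflict graph on its edges. Both give the same $\Omega(|H|/\Delta^2)$ bound, so this is a cosmetic difference. The more interesting divergence is in the high-degree case when no pair of vertices has $\geq k(k-1)$ common neighbours. The paper splits this into two further subcases --- ``no edge inside $N(s)$'' and ``some edge inside $N(s)$'' --- and the second subcase is handled with an iterative min-degree peeling that must simultaneously build an independent set and certify at each step that an edge survives in what remains. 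You collapse all three target graphs $H_{k,0}^*, H_{k,1}^*, H_{k,2}^*$ into a single argument: the failure of the $\mathcal{H}_{2,k}$ case says exactly that $H[N(v)]$ has maximum degree below $k(k-1)$, so after fixing any edge $\{x,y\}$ avoiding $v$ (which exists precisely because $H$ is not a star) and deleting the closed neighbourhoods of $x$ and $y$ from $N(v)$, a single greedy independent-set step suffices, and the number of endpoints of $\{x,y\}$ lying in $N(v)$ decides which of the three graphs you obtain. This unification avoids the paper's invariant-maintaining induction entirely and is, in my view, a cleaner exposition. Your constant bookkeeping (threshold $k^4$ vs.\ the paper's $2k^4$, the $2N$ bound on removed vertices, the inequality $k^4 \geq (N+1)^2$) all checks out with the given slack from $|H| \geq 8k^{10}$.
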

\newcommand{\hkproof}{
We give a constructive proof. We first claim that either there is a vertex in $H$ which has degree at least $2k^4$ or there is an induced matching in $H$ of size $k^2$.  Suppose that all vertices have degree less than $2k^4$.  Then we can create an induced matching by adding an arbitrary edge $\{u,v\}\in H$ to a edge set $M$, removing all vertices that are adjacent to either $u$ or $v$ from $H$, and repeating until there are no more edges in $H$.  In each iteration we reduce the total number of edges by at most $2\cdot2k^4\cdot2k^4$, thus $|M| \geq \frac{8k^{10}}{8k^8}=k^2$.  Since when we add an edge $\{u,v\}$ we also remove all vertices adjacent to $u$ or $v$, every future edge we add to $M$ will have endpoints which are not adjacent to $u$ or $v$, and thus $M$ is an induced matching of $H$ with size $k^2$.

If $H$ has an induced matching of size $k^2$, then $H\in\mathcal{H}_k$ because it contains $H_{k,k}$ as an induced subgraph, and thus we are done.

Otherwise, $H$ has a vertex $s$ with degree at least $2k^4$.  Let $S$ be the neighbors of $s$. If there is any vertex other than $s$ that is adjacent to at least $k(k-1)$ vertices in $S$, then $H$ contains a $2$ by $k(k-1)$ complete bipartite subgraph, so it contains an induced subgraph $H'\in\mathcal{H}_{2,k}$ and thus is in $\mathcal H_k$.

So suppose that there is no vertex other than $s$ that is adjacent to at least $k(k-1)$ vertices in $S$. Consider the case that there is no edge between any pair of vertices in $S$; then, because $H$ is not a star, there must be an edge $\{u,v\}\in H$ with at least one of $u,v$ not in $S\cup\{s\}$. 
Since both $u$ and $v$ are adjacent to at most $k(k-1)$ vertices in $S$, there are at least $k^4-2\cdot k(k-1)\ge k(k-1)$ vertices in $S$ that are not adjacent to either $u$ or $v$. Let the set of these vertices be $T$. Then the induced subgraph on vertex set $T\cup\{s,u,v\}$ is either $H_{k,0}^*$ or $H_{k,1}^*$, depending on whether $\{u,v\}\cap T$ is an empty set.

Now the only remaining case is that there is at least one edge in $H$ with both endpoints in $S$. In this case, we can find $H_{k,2}^*$ as an induced subgraph as follows: We first let $S_0=S$. Then, in each iteration $t$ we let $v_t$ be a vertex in $S_{t-1}$ that is adjacent to the fewest number of other vertices in $S_{t-1}$. We add $v_t$ to the vertex set $T$, and then delete $v_t$ and all the vertices in $S_{t-1}$ that are adjacent to $v_t$ to get $S_t$. This process repeats until we have $|T|=k(k-1)$.

We can use induction to show that, after each iteration $t\le k(k-1)$, there is always at least one edge in $H$ where both endpoints are in $S_t$. The base case is $t=0$, where such an edge clearly exists. Assume the claim holds for iteration $t-1$, consider the iteration $t\le k(k-1)$. If $v_t$ is not adjacent to any other vertex in $S_{t-1}$, then removing $v_t$ does not affect the fact that there is at least one edge left, and thus the claim still holds. Otherwise, $v_t$ is adjacent to at least one vertex in $S_{t-1}$. Thus, each vertex in $S_{t-1}$ must be adjacent to at least one vertex in $S_{t-1}$. Since there is no vertex other than $s$ which is adjacent to at least $k(k-1)$ vertices in $S$, we know that at most $k^2$ vertices are deleted in each iteration, and thus there are still at least $2k^4-k^2\cdot k(k-1)\ge k^4$ vertices in $S_{t-1}$. Because removing $v_t$ and its neighbors can only affect the degree of at most $k^2(k-1)^2$ vertices in $S_{t-1}$, there must still be an edge left between the vertices in $S_t$.

Let $\{u,v\}$ be one of the edges in $H$ where both endpoints are in $S_t$, then the induced subgraph on vertex set $T\cup\{s,u,v\}$ is $H_{k,2}^*$.  Thus $H\in\mathcal{H}_k$.

It is easy to see that all the previous steps directly find an induced subgraph which is isomorphic to a graph in $\{H_{k,0}^*,H_{k,1}^*,H_{k,2}^*,H_{k,k}\} \cup \mathcal{H}_{2,k} \cup \mathcal{H}_k$ and takes polynomial time, thus the lemma is proved.
}
\ifprocs
\else
\begin{proof}
\hkproof
\end{proof}
\fi

\subsection{Reduction}\label{sec:reduct}

In this subsection, we will prove the following reduction theorem.

\begin{theorem}\label{thm:reduct}
Let $(G=(V,E),c)$ be an \textsc{MCC} instance with parameter $k$, and let $H\in\mathcal{H}_k$ be a demand graph. Then a unit-length unit-cost \textsc{SLSN} instance $(G',L)$ with demand graph $H$ can be constructed in $poly(|V||H|)$ time, and there exists a function $g$ (computable in time $poly(|H|)$) such that the \textsc{MCC} instance has a clique with size $k$ if and only if the \textsc{SLSN} instance has a solution with cost $g(H)$. 
\end{theorem}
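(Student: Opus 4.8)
The plan is to prove the theorem by case analysis over the six types of demand graphs in $\mathcal{H}_k$, building the hardest single case carefully and reducing the others to it. Since $\mathcal{H}_k$ is defined so that every $H \in \mathcal{H}_k$ contains one of $H_{k,0}^*, H_{k,1}^*, H_{k,2}^*, H_{k,k}$, or a graph in $\mathcal{H}_{2,k}$ as an induced subgraph, it suffices to handle each of these ``core'' demand graphs and then argue that the extra demands of an arbitrary $H$ can be satisfied cheaply (and in a controlled, additive way) by padding the \textsc{SLSN} instance $G'$ so that the unused demand edges contribute a fixed amount $g(H) - g(H_{\mathrm{core}})$ to the optimum. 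The value $g(H)$ will then be the optimal cost of the core gadget plus this fixed padding, and will clearly be computable in $poly(|H|)$ time.

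The main case, and the one I expect to be the principal obstacle, is the demand graph $H_{k,2}^*$ (a star with $k(k-1)+2$ leaves plus one edge between two of the leaves) — and essentially equivalently $H_{k,0}^*$ and $H_{k,1}^*$. Here the strategy is as follows. The $k(k-1)$ ``ordinary'' star leaves are indexed by ordered color pairs $(i,j)$ with $i \neq j$; the corresponding demand path from the root will be forced to route through a vertex-selection gadget that picks, for each color $i$, some vertex $v_i \in C_i$, and through an edge-selection gadget for the pair $(i,j)$ that picks an edge of $G$ between $C_i$ and $C_j$, with distances calibrated (using unit lengths and a carefully chosen bound $L$) so that the edge picked for $(i,j)$ is cheap only if its endpoints agree with $v_i$ and $v_j$. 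The crux is that in $H_{k,2}^*$ we have only one star, so — unlike the two-star construction of Feldmann--Marx \cite{feldmann2017complexity} — we cannot devote a second star to enforcing consistency of the vertex choices across all the pair-demands. Instead the single extra edge $\{u,v\}$ between two leaves must simultaneously ``read off'' and verify all $k$ vertex selections at once. I would implement this by making the path realizing the $\{u,v\}$ demand snake through all $k$ vertex-selection gadgets in sequence, so that it is short enough to meet the bound $L$ precisely when every gadget's internal choice is consistent with the choices forced by the star demands. Getting the length budget to line up so that (clique $\Rightarrow$ cost exactly $g(H)$) and (cost $\le g(H)$ $\Rightarrow$ all gadget choices consistent and mutually adjacent, hence a clique) is the delicate accounting, and is where most of the work lies.

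The remaining cases are easier. For $H_{k,k}$ (a matching of size $k(k-1)+1$) there is no root to share structure, so I would instead use the $k(k-1)$ matching edges as the pair-demands $(i,j)$ and use the one extra matching edge to carry the global consistency check, again routing its realizing path through all vertex gadgets; the absence of a common root actually simplifies the distance bookkeeping. For $\mathcal{H}_{2,k}$ (graphs containing a $2 \times k(k-1)$ complete bipartite subgraph) the construction of \cite{feldmann2017complexity} transfers almost verbatim: one side of the bipartition plays the role of ``edge demands'' and the other of ``vertex demands,'' with length bounds replacing the directed structure, exactly as in their main case. Finally, for a general $H \in \mathcal{H}_k$ I fix an induced core subgraph $H' \in \{H_{k,0}^*, H_{k,1}^*, H_{k,2}^*, H_{k,k}\} \cup \mathcal{H}_{2,k}$ (found in $poly(|H|)$ time by Lemma \ref{lem:hk}), build the gadget $G'$ for $H'$, and then for each demand edge of $H$ not in $H'$ attach a private zero-interference length-$L$ path (of fixed cost) between its two endpoints; summing these fixed costs onto the core optimum defines $g(H)$, and the ``if and only if'' for $H$ follows immediately from the one for $H'$ since the padding edges are always satisfiable at their fixed cost and never help satisfy a core demand. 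Throughout, all gadgets have size $poly(|V|)$ and there are $poly(|H|)$ demands, so $G'$ is constructed in $poly(|V||H|)$ time as required.
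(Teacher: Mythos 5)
Your plan matches the paper's structure closely: case analysis over the five core demand graphs, a single-star-plus-one-edge gadget as the ``main'' case, a Feldmann--Marx variant for $\mathcal{H}_{2,k}$, and padding a general $H\in\mathcal{H}_k$ with private $L$-hop paths for the non-core demand edges (the paper takes $g(H)=g^{(t)}(H^{(t)})+L\cdot(|H|-|H^{(t)}|)$, exactly as you propose). One point worth flagging: you describe the extra-edge demand path as being ``short enough to meet the bound $L$ precisely when every gadget's internal choice is consistent with the choices forced by the star demands.'' In the paper's gadget the zig-zag path between $y_0$ and $y_k$ has length \emph{exactly} $L=4k^2$ for any choice of one vertex per color; the length bound only constrains the \emph{shape} of feasible paths (Claims~\ref{claim:leafPath} and~\ref{claim:singlePath}). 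Consistency between the vertex choices on the zig-zag path and the edge choices on the star paths is enforced entirely by the \emph{cost} budget $g(H)$, which is set so tightly that the star paths are forced to reuse the $E_4$ edges of the zig-zag path and the two star paths $P_{i,j},P_{j,i}$ are forced to share an $E_2$ edge, which together pin down that $\{v_i,v_j\}\in E$. You do invoke ``cost $\le g(H)\Rightarrow$ consistent'' a sentence later, so the conflation is likely just loose phrasing, but the length bound alone cannot couple two separate demand paths, so the accounting really must go through the cost budget as the paper does. A second, smaller discrepancy: for $H_{k,k}$ the paper does not dispense with a shared bottleneck --- it adds a pre-root layer $V_0$ of new demand endpoints $l'_{i,j}$ that all route through the same internal vertex $r$, so the edge-sharing argument (now across $\{r,z_{\{i,j\}}\}$ and $\{z_{\{i,j\}},z_e\}$) survives; ``no common root'' would actually break that argument.
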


In order to prove this theorem, we first introduce a construction for any demand graph $H\in\{H_{k,0}^*,H_{k,1}^*,H_{k,2}^*,H_{k,k}\}\cup\mathcal{H}_{2,k}$, and then use the instances constructed in these cases to construct the instance for general $H\in\mathcal{H}_k$.

The construction for $H\in\mathcal{H}_{2,k}$ is similar to \cite{feldmann2017complexity}, which proves the \textsc{W}$[1]$-hardness of the \textsc{DSN} problem. We change all the directed edges in their construction to undirected, and add some edges and dummy vertices. This construction is presented in \ifprocs Appendix \else Section \fi \ref{sec:bipartite}. To handle $H_{k,0}^*$, $H_{k,1}^*$, $H_{k,2}^*$, and $H_{k,k}$, we need to change this basic construction due to the simplicity of the demand graphs. Because the constructions for these four graphs are quite similar, we first introduce the construction for $H_{k,0}^*$ in Section \ref{sec:disjoint}, and then show how to modify it for $H_{k,1}^*$, $H_{k,2}^*$, and $H_{k,k}$ in \ifprocs Appendix \else Section \fi \ref{sec:H234}.

\subsubsection{Case \ref{case:disjoint}: $H_{k,0}^*$}\label{sec:disjoint}

Given an \textsc{MCC} instance $(G=(V,E),c)$ with parameter $k$, we create a unit-length and unit-cost \textsc{SLSN} instance $(G',L)$ with demand graph $H_{k,0}^*$ as follows.

We first create a graph $G_k^*$ with integer edge lengths (we will later replace all non-unit length edges by paths). See Figure \ref{fig:G_k^*} for an overview of this graph. The vertex set $V_k^*$ contains $6$ layers of vertices and another group of vertices. The first layer $V_1$ is just a root $r$. The second layer $V_2$ contains a vertex $z_{\{i,j\}}$ for each $1\le i<j\le k$, so there are $\tbinom{k}{2}$ vertices. The third layer $V_3$ contains a vertex $z_{e}$ for each $e\in E$, so there are $|E|$ vertices. The fourth layer $V_4$ contains a vertex $x_{v,j}$ for each $v\in V$ and $j \in [k]$ with $j\ne c(v)$, so there are $|V|\cdot(k-1)$ vertices. The fifth layer $V_5$ again contains a vertex $x_{v,j}'$ for each $v\in V$ and $j \in [k]$ with $j\ne c(v)$. The sixth layer $V_6$ contains a vertex $l_{i,j}$ for each $i,j\in[k]$ where $i\ne j$, so there are $k(k-1)$ vertices. Finally, we have a vertex $y_i$ for $i=0,\mathellipsis,k$, so there are $k+1$ vertices in the set $V_y$.

Let $f_i : \mathbb{N} \rightarrow \mathbb{N}$ be the function defined by $f_i(j) = j+1$ if $j+1 \neq i$ and $f_i(j) = j+2$ if $j+1 = i$. 
This function gives the next integer after $j$, but skips $i$. Let $f_i^t(j)=f_i(f_i(\mathellipsis f_i(j)))$ denote this function repeated $t$ times. Recall that $C_i = \{v \in V : c(v) = i\}$. The edge set $E_k^*$ contains following edges, with lengths as indicated:

\begin{itemize}
\item $E_1=\{\{r,z_{\{i,j\}}\}\mid1\le i<j\le k\}$, each edge in $E_1$ has length $2$.\label{edge:level12}
\item $E_2=\{\{z_{\{c(u),c(v)\}},z_e\}\mid e=\{u,v\}\in E\}$, each edge in $E_2$ has length $1$.\label{edge:level23}
\item $E_3=\{\{z_e,x_{u,c(v)}\}\mid e=\{u,v\}\in E\}$, each edge in $E_3$ has length $2k^2-2$. Note that if $\{z_e,x_{u,c(v)}\}\in E_3$, then $\{z_e,x_{v,c(u)}\}\in E_3$\label{edge:level34}
\item $E_4=\{\{x_{v,j},x_{v,j}'\}\mid v\in V,j\ne c(v)\}$, each edge in $E_4$ has length $1$.\label{edge:level45}
\item $E_5=\{\{x_{v,j}',l_{c(v),j}\}\mid v\in V,j\ne c(v)\}$, each edge in $E_5$ has length $2k^2-2$.\label{edge:level56}
\item $E_{yx}=\{\{y_{i-1},x_{v,f_i(0)}\}\mid i\in[k],v\in C_i\}$, each edge in $E_{yx}$ has length $4$.\label{edge:yx}
\item $E_{xx}=\{\{x_{v,j}',x_{v,f_{c(v)}(j)}\}\mid v\in V,j\in[k]\setminus\{c(v),f_{c(v)}^{k-1}(0)\}\}$, each edge in $E_{xx}$ has length $3$.\label{edge:xx}
\item $E_{xy}=\{\{x_{v,f_i^{k-1}(0)}',y_i\}\mid i\in[k],v\in C_i\}$, each edge in $E_{xy}$ has length $3$.\label{edge:xy}
\end{itemize}

\begin{figure}
\centering
\includegraphics[width=5.5in]{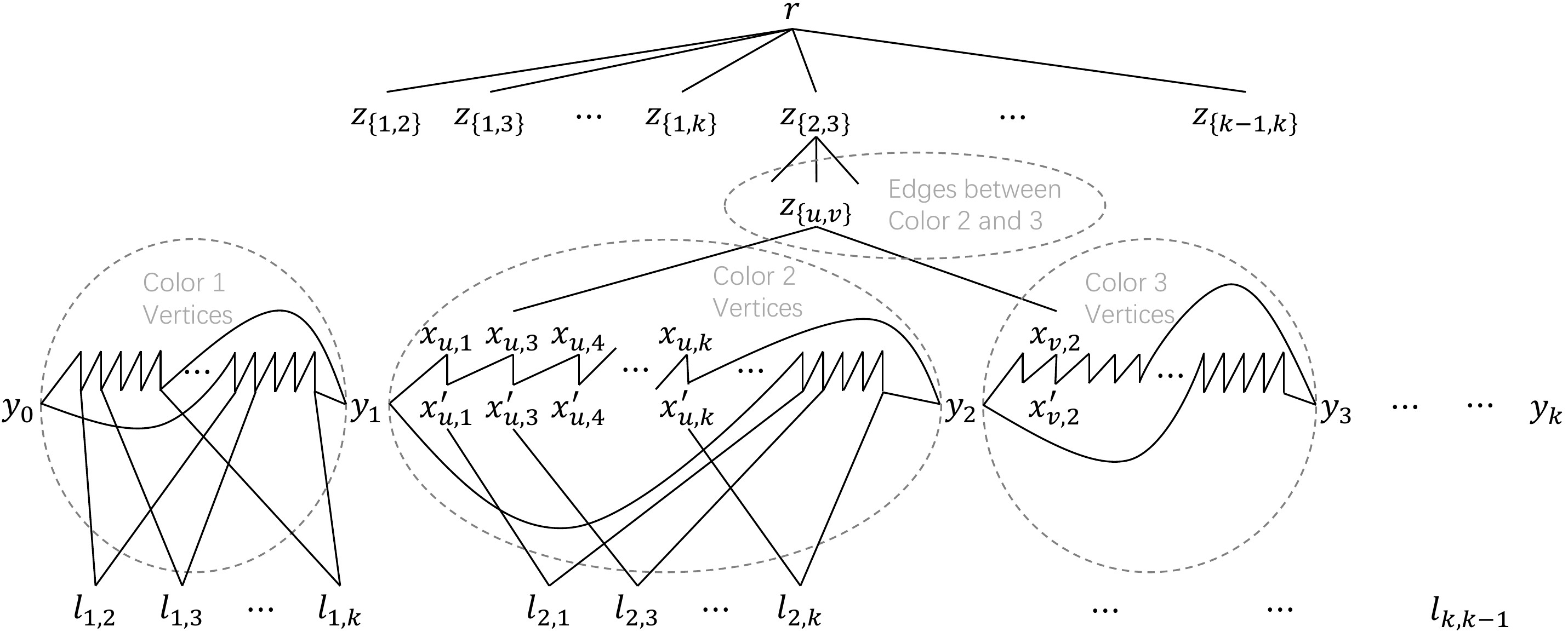}
\caption{$G_k^*$}\label{fig:G_k^*}
\end{figure}

Let $G'$ be the graph obtained from $G_k^*$ by replacing each edge $e\in E_k^*$ by a $length(e)$-hop path.  We create an instance of \textsc{SLSN} on $G'$ by setting the demands to be $\{r,l_{i,j}\}$ for all $i,j\in[k]$ where $i\ne j$, as well as $\{y_0,y_k\}$. Note that these demands form a star with $k(k-1)$ leaves and an edge with both endpoints not in the star, so it is isomorphic to $H_{k,0}^*$. We set the distance bound $L$ to be $4k^2$.

This construction clearly takes $poly(|V||H_{k,0}^*|)$ time. Let $g(H_{k,0}^*)=4k^4-4k^3+\frac{3}{2}k^2+\frac{5}{2}k$, which is clearly computable in $poly(H_{k,0}^*)$ time.  We will first prove the easy direction in the correctness of the construction.

\begin{lemma}\label{lem:disjointCost}
If there is a multi-colored clique of size $k$ in $G$, then there is a solution $S$ for the \textsc{SLSN} instance $(G',L)$ with demand graph $H_{k,0}^*$, and the total cost of $S$ is $g(H_{k,0}^*)$.  
\end{lemma}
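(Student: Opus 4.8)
The plan is to exhibit an explicit feasible solution $S$ of cost exactly $g(H_{k,0}^*)$, built directly from the clique. Fix a multi-colored clique $T=\{v_1,\dots,v_k\}$ with $c(v_i)=i$, and for $1\le i<j\le k$ let $e_{ij}=\{v_i,v_j\}\in E$ be the clique edge between colors $i$ and $j$ (it exists precisely because $T$ is a clique). I will take $S$ to be the union, as a set of edges of $G'$, of a family of paths that I first describe as walks in $G_k^*$ and then lift to $G'$ by replacing each length-$\ell$ edge by its $\ell$-hop path. The ``star'' part consists, for each ordered pair $(i,j)$ with $i\ne j$, of the path $P_{ij}$ going $r - z_{\{i,j\}} - z_{e_{ij}} - x_{v_i,j} - x_{v_i,j}' - l_{i,j}$, which uses exactly one edge from each of $E_1,E_2,E_3,E_4,E_5$; all of these exist since $z_{\{c(v_i),c(v_j)\}}=z_{\{i,j\}}$ is $E_2$-adjacent to $z_{e_{ij}}$, which is $E_3$-adjacent to $x_{v_i,c(v_j)}=x_{v_i,j}$. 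The ``selection'' part is a single path $P_y$ from $y_0$ to $y_k$ visiting $y_1,\dots,y_{k-1}$ in order, where the segment from $y_{i-1}$ to $y_i$ encodes $v_i$: it leaves $y_{i-1}$ along the $E_{yx}$ edge to $x_{v_i,f_i(0)}$, then for $t=1,\dots,k-2$ alternately takes the $E_4$ edge $x_{v_i,f_i^t(0)}-x_{v_i,f_i^t(0)}'$ and the $E_{xx}$ edge $x_{v_i,f_i^t(0)}'-x_{v_i,f_i^{t+1}(0)}$ (legal since $c(v_i)=i$), and finally takes the $E_4$ edge into $x_{v_i,f_i^{k-1}(0)}'$ and the $E_{xy}$ edge to $y_i$.

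For feasibility, the key combinatorial fact is that $\{f_i^t(0):1\le t\le k-1\}=[k]\setminus\{i\}$, and in particular $f_i^{k-1}(0)=k$; I would prove this by a one-line induction on $t$. This shows every vertex named above is well-defined, that the $E_{yx}$, $E_{xy}$, and $E_{xx}$ edges used by $P_y$ really are edges of $G_k^*$, and (by counting one $E_{yx}$-edge, $k-1$ $E_4$-edges, $k-2$ $E_{xx}$-edges, and one $E_{xy}$-edge per segment) that the segment from $y_{i-1}$ to $y_i$ has length $4+(k-1)+3(k-2)+3=4k$, so $P_y$ has total length $k\cdot 4k=4k^2=L$. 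Likewise each $P_{ij}$ has length $2+1+(2k^2-2)+1+(2k^2-2)=4k^2=L$. Since each demand $\{r,l_{i,j}\}$ is satisfied by $P_{ij}$ and $\{y_0,y_k\}$ by $P_y$, all within distance $L$, the set $S$ is feasible.

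For the cost, since $G'$ subdivides each length-$\ell$ edge of $G_k^*$ into $\ell$ unit edges, $c(S)$ equals the sum of the lengths of the distinct $G_k^*$-edges appearing in the chosen paths, so I would tally class by class while tracking every overlap: $P_{ij}$ and $P_{ji}$ share exactly their $E_1$ edge $\{r,z_{\{i,j\}}\}$ and their $E_2$ edge $\{z_{\{i,j\}},z_{e_{ij}}\}$; the $E_4$ edges $\{x_{v_i,j},x_{v_i,j}'\}$ used by the star paths are exactly the $E_4$ edges used by $P_y$ (since, by the identity above, $P_y$ passes through $x_{v_i,j}$ for every $j\ne i$), so they are counted once; the $E_3$ edge $\{z_{e_{ij}},x_{v_i,j}\}$ and the $E_5$ edge $\{x_{v_i,j}',l_{i,j}\}$ of $P_{ij}$ are each determined by the ordered pair $(i,j)$ and hence pairwise distinct across all $k(k-1)$ star paths; and $P_y$ uses no $E_1$, $E_2$, $E_3$, or $E_5$ edge. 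Altogether $S$ contains $\tbinom{k}{2}$ edges of $E_1$, $\tbinom{k}{2}$ of $E_2$, $k(k-1)$ of $E_3$, $k(k-1)$ of $E_4$, $k(k-1)$ of $E_5$, $k$ of $E_{yx}$, $k(k-2)$ of $E_{xx}$, and $k$ of $E_{xy}$; weighting by the respective edge lengths $2,1,2k^2-2,1,2k^2-2,4,3,3$ and summing gives
\[
2\tbinom{k}{2}+\tbinom{k}{2}+2k(k-1)(2k^2-2)+k(k-1)+4k+3k(k-2)+3k=4k^4-4k^3+\tfrac{3}{2}k^2+\tfrac{5}{2}k=g(H_{k,0}^*).
\]

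The main obstacle is really just disciplined bookkeeping in the last step: the entire content of the lemma is that the paths I picked overlap in exactly the places claimed and nowhere else, so I must argue explicitly that distinct star paths use disjoint sets of $E_3$ and of $E_5$ edges, and that the only reuse is the $E_1$/$E_2$ sharing between $P_{ij}$ and $P_{ji}$ together with the $E_4$ sharing between the star part and $P_y$. Cleanly nailing down $\{f_i^t(0):1\le t\le k-1\}=[k]\setminus\{i\}$ is the one non-bookkeeping ingredient, and it feeds both the length computation for $P_y$ and the coincidence of $P_y$'s $E_4$ edges with those of the star paths; everything else is the routine arithmetic carried out above.
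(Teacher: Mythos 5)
Your proof is correct and follows exactly the same construction and bookkeeping as the paper's: the same star paths $P_{ij}$, the same zig-zag path $P_y$, the same per-class edge count weighted by lengths, and the same feasibility check via $4k^2=L$ for each path. One minor slip that is not load-bearing: $f_i^{k-1}(0)=k$ only for $i<k$ (for $i=k$ it equals $k-1$); your argument actually only uses $\{f_i^t(0):1\le t\le k-1\}=[k]\setminus\{i\}$ and the fact that the $E_{xy}$ edge $\{x_{v,f_i^{k-1}(0)}',y_i\}$ exists by definition, so nothing breaks.
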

\begin{proof}
Let $v_1,\mathellipsis,v_k$ be a multi-colored clique of size $k$ in $G$, where $v_i\in C_i$ for all $i\in[k]$. We create a feasible solution $S$ to our \textsc{SLSN} instance, which contains following paths in $G'$ (i.e., edges in $G_k^*$):
\begin{itemize}
\item $\{r,z_{\{i,j\}}\}$ for each $1\le i<j\le k$. The total cost of these edges is $2\cdot\tbinom{k}{2}=k^2-k$.
\item $\{z_{\{i,j\}},z_{\{v_i,v_j\}}\}$ for each $1\le i<j\le k$. The total cost of these edges is $\tbinom{k}{2}=\frac{k^2-k}{2}$.
\item $\{z_{\{v_i,v_j\}},x_{v_i,j}\}$ and $\{z_{\{v_i,v_j\}},x_{v_j,i}\}$ for each $1\le i<j\le k$. The total cost of these edges is $2\cdot(2k^2-2)\cdot\tbinom{k}{2}=2k^4-2k^3-2k^2+2k$.
\item $\{x_{v_i,j},x_{v_i,j}'\}$ for each $i,j\in[k]$ where $i\ne j$. The total cost of these edges is $2\cdot\tbinom{k}{2}=k^2-k$.
\item $\{x_{v_i,j}',l_{i,j}\}$ for each $i,j\in[k]$ where $i\ne j$. The total cost of these edges is $2\cdot(2k^2-2)\cdot\tbinom{k}{2}=2k^4-2k^3-2k^2+2k$.
\item $\{y_{i-1},x_{v_i,f_i(0)}\}$ for each $i\in[k]$. The total cost of these edges is $4k$.
\item $\{x_{v_i,j}',x_{v_i,f_i(j)}\}$ for each $i\in[k]$ and $j\in[k]\setminus\{i,f_i^{k-1}(0)\}$. The total cost of these edges is $3\cdot k(k-2)=3k^2-6k$.
\item $\{x_{v_i,f_i^{k-1}(0)}',y_i\}$ for each $i\in[k]$. The total cost of these edges is $3k$.
\end{itemize}

Therefore, the total cost is $k^2-k+\frac{k^2-k}{2}+2k^4-2k^3-2k^2+2k+k^2-k+2k^4-2k^3-2k^2+2k+4k+3k^2-6k+3k=4k^4-4k^3+\frac{3}{2}k^2+\frac{5}{2}k = g(H_{k,0}^*)$.

Now we show the feasibility of this solution. For each $i,j\in[k]$ where $i\ne j$, the path between $r$ and $l_{i,j}$ is $r$ -- $z_{\{i,j\}}$ -- $z_{\{v_i,v_j\}}$ -- $x_{v_i,j}$ -- $x_{v_i,j}'$ -- $l_{i,j}$. The length of this path is $2+1+2k^2-2+1+2k^2-2=4k^2$, thus it is a feasible path.

The path between $y_0$ and $y_k$ is $y_0$ -- $x_{v_1,2}$ -- $x_{v_1,2}'$ -- $x_{v_1,3}$ -- $x_{v_1,3}'$ -- $\mathellipsis$ -- $x_{v_1,k}$ -- $x_{v_1,k}'$ -- $y_1$ -- $x_{v_2,1}$ -- $x_{v_2,1}'$ -- $x_{v_2,3}$ -- $x_{v_2,3}'$ -- $\mathellipsis$ -- $y_2$ -- $\mathellipsis$ -- $y_k$. The length of this path is $(4+1\cdot(k-1)+3\cdot(k-2)+3)\cdot k=4k^2$, thus it is a feasible path.
\end{proof}

For the other direction, we begin the proof with a few claims. We first show that the only feasible way to connect $r$ and $l_{i,j}$ is to pick one edge between every two adjacent layers. We can also see in Figure \ref{fig:G_k^*} that for each $i\in[k]$, there are $|C_i|$ disjoint ``zig-zag'' paths between $y_{i-1}$ and $y_i$, and each path corresponds to a vertex with color $i$. We will also show that the only feasible way to connect $y_0$ and $y_k$ is to pick one zig-zag path between each $y_{i-1}$ and $y_i$. \ifprocs The proof of these claims are in Appendix \ref{sec:claim}.\fi From these claims we can then prove that, if the cost of the optimal solution is at most $g(H_{k,0}^*)$, then there is a multi-colored clique in $G$.

\begin{claim}\label{claim:leafPath}
For all $i,j\in[k]$ where $i\ne j$, any path $P_{i,j}$ between $r$ and $l_{i,j}$ with length at most $4k^2$ must be of the form $r$ -- $z_{\{i,j\}}$ -- $z_{\{u,v\}}$ -- $x_{u,j}$ -- $x_{u,j}'$ -- $l_{i,j}$, where $u\in C_i$, $v\in C_j$ and $\{u,v\}\in E$.
\end{claim}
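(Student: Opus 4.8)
The plan is to exploit the ``layered'' structure of $G_k^*$ together with the fact that the bound $L=4k^2$ is exactly tight for the natural path, so that no detour is possible. Throughout we take $P_{i,j}$ to be a simple path (which is what the problem requires of a feasible connection). For each edge type $t\in\{1,2,3,4,5,yx,xx,xy\}$ let $n_t$ be the number of edges of $E_t$ used by $P_{i,j}$, so that the length of $P_{i,j}$ is $2n_1+n_2+(2k^2-2)n_3+n_4+(2k^2-2)n_5+4n_{yx}+3n_{xx}+3n_{xy}$. The two ``expensive'' edge types are $E_3$ and $E_5$, each of length $2k^2-2$; all other types have length at most $4$.

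First I would lower bound the number of expensive edges. Note that $E_3$ is the only edge type joining $\{V_1,V_2,V_3\}$ to $\{V_4,V_5,V_6\}\cup V_y$; since $r\in V_1$ and $l_{i,j}\in V_6$ lie on opposite sides of this cut, $n_3\ge 1$. Also $l_{i,j}$ is incident only to edges of $E_5$, so $n_5\ge 1$. Since $3(2k^2-2)>4k^2$ for $k\ge 2$, the length bound forces $n_3+n_5=2$, i.e.\ $n_3=n_5=1$, and leaves a budget of at most $4$ for the cheap edges: $2n_1+n_2+n_4+4n_{yx}+3n_{xx}+3n_{xy}\le 4$.

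Next I would pin down the cheap edges via degree arguments at the endpoints of the two expensive edges. Since $r$ is incident only to $E_1$, $n_1\ge 1$. The $V_3$-endpoint of the unique $E_3$ edge is an internal vertex of $P_{i,j}$ (it is neither $r$ nor $l_{i,j}$), hence has degree $2$; its only possible incident edge types are $E_2$ and $E_3$, so $n_2\ge 1$. If $n_1=2$, the budget would force $n_2=0$, a contradiction, so $n_1=1$, leaving $n_2+n_4+4n_{yx}+3n_{xx}+3n_{xy}\le 2$. The $V_5$-endpoint of the unique $E_5$ edge is likewise internal, so it needs a second incident edge from $E_4\cup E_{xx}\cup E_{xy}$; an $E_{xx}$ or $E_{xy}$ edge (length $3$) together with $n_2\ge 1$ would exceed the budget of $2$, so $n_{xx}=n_{xy}=0$ and $n_4\ge 1$; then $n_2+n_4\ge 2$ forces $n_{yx}=0$ and $n_2=n_4=1$. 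Hence $P_{i,j}$ has exactly five edges, one from each of $E_1,\dots,E_5$ (and the five layers are traversed in order, since $r$ forces the first edge into $E_1$, $l_{i,j}$ forces the last into $E_5$, and each $V_t$-vertex with $2\le t\le 5$ has only the ``adjacent-layer'' edge types available once the others are used up).

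Finally I would identify this five-edge path by reading off incidences, working back from $l_{i,j}$: the $E_5$ edge at $l_{i,j}$ must be $\{x_{u,j}',l_{i,j}\}$ for some $u$ with $c(u)=i$; the $E_4$ edge at $x_{u,j}'$ must be $\{x_{u,j},x_{u,j}'\}$; the $E_3$ edge at $x_{u,j}$ must be $\{z_{\{u,v\}},x_{u,j}\}$ for some $v$ with $\{u,v\}\in E$ and $c(v)=j$; the $E_2$ edge at $z_{\{u,v\}}$ must be $\{z_{\{c(u),c(v)\}},z_{\{u,v\}}\}=\{z_{\{i,j\}},z_{\{u,v\}}\}$; and the $E_1$ edge at $z_{\{i,j\}}$ must be $\{r,z_{\{i,j\}}\}$. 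This is exactly the claimed form, with $u\in C_i$, $v\in C_j$, and $\{u,v\}\in E$. The main obstacle I expect is the casework of the third paragraph: carefully ruling out $n_1=2$ and the $E_{xx}/E_{xy}$ options using the degree conditions at the internal endpoints of the two expensive edges, while keeping exact track of the tight cheap-edge budget of $4$.
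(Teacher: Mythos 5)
Your proof is correct. It establishes exactly the same conclusion as the paper's, and the underlying engine is the same: the budget $L=4k^2$ is tight, so the two expensive $(2k^2-2)$-length edge types ($E_3$ and $E_5$) must appear exactly once each, and the leftover budget of $4$ forces a unique minimal crossing pattern. The paper argues this by viewing $G_k^*$ as a $6$-layer graph, asserting that $P_{i,j}$ crosses each adjacent-layer cut at least once, summing the required lengths to $4k^2-1$, and concluding only a single length-$1$ ($E_4$) edge remains in the ``middle.'' Your version replaces the paper's informal ``must cross each layer boundary'' step with a cleaner two-part argument: a cut argument (only $E_3$ separates $\{V_1,V_2,V_3\}$ from the rest, so $n_3\ge1$) and an incidence argument at $l_{i,j}$ (only $E_5$ touches $V_6$, so $n_5\ge1$), followed by degree-$2$ constraints at the internal endpoints of the unique $E_3$ and $E_5$ edges to pin down $n_1=n_2=n_4=1$ and rule out $E_{yx},E_{xx},E_{xy}$. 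This extra care explicitly handles the $V_y$ vertices and the ``zig-zag'' edge types, which the paper's layered-graph phrasing glosses over; the paper gets the same result but leans on the reader to see that those detours are too expensive. So: same route, slightly tighter bookkeeping on your end.
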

\newcommand{\leafpathproof}{
We can see that $G_k^*$ is a $6$-layer graph with a few additional paths between the fourth layer and the fifth layer. Thus $P_{i,j}$ must contain at least one edge between each two adjacent layers. From the construction of $G_k^*$, all the edges between two adjacent layers have the same length. If we sum up the length from $r$ to the fourth layer plus the length from the fifth layer to $l_{i,j}$, it is already $2+1+2k^2-2+2k^2-2=4k^2-1$. Thus, between the fourth layer and the fifth layer we can only choose one length $1$ edge.

We know that the vertex in the fifth layer must adjacent to $l_{i,j}$, so it must be $x_{u,j}'$ for some $u\in C_i$. Thus, the edge between the fourth layer and the fifth layer must be $\{x_{u,j},x_{u,j}'\}$, because this is the only length $1$ edge adjacent to $x_{u,j}'$. In addition, the only way to go from $r$ to $x_{u,j}$ with one edge per layer is to pass through vertex $z_{\{i,j\}}$ and $z_{\{u,v\}}$ for some $v\in C_j$ and $\{u,v\}\in E$. Therefore $P_{i,j}$ must correspond to an edge $\{u,v\}\in E$ where $u\in C_i$ and $v\in C_j$, and it has form $r$ -- $z_{\{i,j\}}$ -- $z_{\{u,v\}}$ -- $x_{u,j}$ -- $x_{u,j}'$ -- $l_{i,j}$.
}
\ifprocs
\else
\begin{proof}
\leafpathproof
\end{proof}
\fi

\newcommand{\excludeclaim}{
\begin{claim}\label{claim:exclude}
Any path $P_y$ between $y_0$ and $y_k$ with length at most $4k^2$ does not contain any edge in $E_1\cup E_2\cup E_3\cup E_5$.
\end{claim}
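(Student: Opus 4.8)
\emph{Proof proposal for Claim~\ref{claim:exclude}.}
The plan is to isolate two ``peripheral'' regions of $G_k^*$ that are attached to the rest of the graph only through the long $(2k^2-2)$-length edges in $E_3$ and $E_5$, and then argue that any $y_0$--$y_k$ path touching $E_1\cup E_2\cup E_3$ or $E_5$ must pay for at least two of these long edges, which already busts the budget $4k^2$. Concretely, set $R = V_1\cup V_2\cup V_3 = \{r\}\cup\{z_{\{i,j\}}\}\cup\{z_e\}$. First I would record two structural facts, immediate from the edge list: (i) in $G_k^*$ the only edges with exactly one endpoint in $R$ are the edges of $E_3$ (the sets $E_1,E_2$ lie inside $R$; $E_4,E_5,E_{yx},E_{xx},E_{xy}$ avoid $R$ entirely; and $E_3$ joins $V_3\subseteq R$ to $V_4$); and (ii) every edge incident to a vertex of $V_6$ lies in $E_5$, and no two vertices of $V_6$ are adjacent. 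I would also note the subdivision bookkeeping: since $G'$ replaces each $e\in E_k^*$ by a path of $length(e)$ degree-$2$ internal vertices, a path of $G'$ between two original vertices that uses any hop of a subdivided edge $e$ in fact traverses all of $e$, contributing exactly $length(e)$ to its length; in particular ``$P_y$ uses an edge of $E_3$'' means ``$P_y$ traverses a whole subdivided $E_3$ edge'', of length $2k^2-2$, and likewise for $E_5$ ($2k^2-2$), $E_{yx}$ ($4$), and $E_{xy}$ ($3$).

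Next I would observe that the endpoints force two mandatory edges: the only edges of $G_k^*$ incident to $y_0$ belong to $E_{yx}$ (since $y_0=y_{1-1}$ occurs only in $E_{yx}$), and the only ones incident to $y_k$ belong to $E_{xy}$ (since $y_k$ occurs only in $E_{xy}$). Hence every $y_0$--$y_k$ path traverses at least one full subdivided $E_{yx}$ edge and one full subdivided $E_{xy}$ edge, contributing at least $4+3=7$ to its length, and this contribution is edge-disjoint from (hence additive with) any contribution from $E_3$ or $E_5$.

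Now suppose $P_y$ uses some edge of $E_1\cup E_2\cup E_3$. Using an edge of $E_1$ or $E_2$ forces $P_y$ to visit a vertex of $R$, and using an edge of $E_3$ does too (it has an endpoint in $V_3\subseteq R$); either way $P_y$ meets $R$. Since $P_y$ starts at $y_0\notin R$ and ends at $y_k\notin R$, it crosses the boundary of $R$ a positive even number of times, and by fact (i) each crossing uses a distinct edge of $E_3$; thus $P_y$ traverses at least two subdivided $E_3$ edges, of total length at least $2(2k^2-2)=4k^2-4$. Adding the mandatory $E_{yx}$- and $E_{xy}$-edges gives $\mathrm{len}(P_y)\ge 4k^2-4+7=4k^2+3>4k^2$, a contradiction. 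The case where $P_y$ uses an edge of $E_5$ is identical with $V_6$ in place of $R$: such a $P_y$ visits some $l_{i,j}\in V_6$, and since the vertices of $V_6$ are pairwise non-adjacent and meet only $E_5$-edges (fact (ii)), $P_y$ must enter and leave $V_6$, using at least two subdivided $E_5$ edges, again giving $\mathrm{len}(P_y)\ge 4k^2-4+7>4k^2$. This proves the claim.

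I expect the only delicate points to be the two pieces of bookkeeping — translating ``uses an edge of a subdivided class'' into ``pays the full $length(e)$'' (handled by the degree-$2$ property of subdivision vertices together with $P_y$'s endpoints being original vertices), and the parity argument that a path starting and ending outside $R$ must cross $\partial R$ at least twice — rather than any genuinely hard step; everything else is just reading off the edge list, and the slack $4k^2+3>4k^2$ is comfortable (we only need the $+7$ from $E_{yx},E_{xy}$ to beat the $-4$).
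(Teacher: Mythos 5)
Your proposal is correct and follows the same approach as the paper's proof: any excursion into the ``outer'' parts of the graph (the $z$-layers or the $l_{i,j}$ layer) must traverse at least two of the long $(2k^2-2)$-length $E_3$ or $E_5$ edges, and this together with the forced $E_{yx}$ edge at $y_0$ (length $4$) and $E_{xy}$ edge at $y_k$ (length $3$) already gives length $4k^2+3>4k^2$. You simply spell out the boundary-crossing argument (via the regions $R$ and $V_6$) and the subdivision bookkeeping that the paper leaves implicit.
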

\begin{proof}
We prove the claim by contradiction. If $P_y$ contains an edge in $E_1\cup E_2\cup E_3\cup E_5$, it must contain at least two edges with length $2k^2-2$ (one edge to go out of the fourth and the fifth layer, and another one to go back). Since any edge which has endpoint $y_0$ has length $4$ and any edge which has endpoint $y_k$ has length $3$, the total length $2\cdot(2k^2-2)+4+3=4k^2+3$ already exceeds the length bound $4k^2$, giving a contradiction.
\end{proof}
}
\ifprocs
\else
\excludeclaim
\fi

\begin{claim}\label{claim:singlePath}
Any path $P_y$ between $y_0$ and $y_k$ with length at most $4k^2$ can be divided to $k$ subpaths as follows. For each $i\in[k]$, there is a subpath $P_{v_i}$ between $y_{i-1}$ and $y_i$ with length $4k$, of the form $y_{i-1}$ -- $x_{v_i,f_i(0)}$ -- $x_{v_i,f_i(0)}'$ -- $x_{v_i,f_i^2(0)}$ -- $x_{v_i,f_i^2(0)}'$ -- $\mathellipsis$ -- $x_{v_i,f_i^{k-1}(0)}$ -- $x_{v_i,f_i^{k-1}(0)}'$ -- $y_1$, where $v_i\in C_i$.
\end{claim}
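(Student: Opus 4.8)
The plan is to combine Claim~\ref{claim:exclude} with a structural analysis of the subgraph $G''$ of $G_k^*$ consisting of exactly the edges in $E_4\cup E_{yx}\cup E_{xx}\cup E_{xy}$ (equivalently, the corresponding subgraph of the subdivided graph $G'$). The crucial fact I would establish first --- by directly inspecting these four edge lists together with the identity $\{f_i(0),f_i^2(0),\dots,f_i^{k-1}(0)\}=[k]\setminus\{i\}$ --- is that \emph{every} vertex $x_{v,j}$ or $x_{v,j}'$ has degree exactly $2$ in $G''$: each $x_{v,j}$ lies on the $E_4$-edge $\{x_{v,j},x_{v,j}'\}$ and on exactly one more edge of $G''$, namely an $E_{yx}$-edge if $j=f_{c(v)}(0)$ and an $E_{xx}$-edge otherwise; symmetrically each $x_{v,j}'$ lies on $\{x_{v,j},x_{v,j}'\}$ and on exactly one more edge, an $E_{xy}$-edge if $j=f_{c(v)}^{k-1}(0)$ and an $E_{xx}$-edge otherwise. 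From this it follows that for each $v\in V$ with $c(v)=i$ the edges of $G''$ incident to $\{x_{v,j},x_{v,j}':j\ne i\}$ form a single simple path
\[\pi_v\colon\quad y_{i-1}-x_{v,f_i(0)}-x_{v,f_i(0)}'-x_{v,f_i^2(0)}-\dots-x_{v,f_i^{k-1}(0)}-x_{v,f_i^{k-1}(0)}'-y_i,\]
and that distinct $\pi_v$'s are internally vertex-disjoint, meeting only among $y_0,\dots,y_k$. In other words $G''$ is a ``series of bundles'': between $y_{i-1}$ and $y_i$ there is a bundle of $|C_i|$ internally disjoint $\pi_v$'s (for $v\in C_i$), and these bundles are strung along the line $y_0,y_1,\dots,y_k$; a direct count shows each $\pi_v$ has length $4+(k-1)+3(k-2)+3=4k$.

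Given this picture, the remainder is a forced-traversal argument. By Claim~\ref{claim:exclude}, $P_y$ avoids $E_1\cup E_2\cup E_3\cup E_5$, so it lies entirely inside $G''$, and I may take $P_y$ to be simple (it arises from Lemma~\ref{lem:intersect} in the intended application; alternatively, since the minimum length of a $y_0$--$y_k$ path in $G''$ is exactly $4k^2$, any walk of length at most $4k^2$ is a shortest path and hence simple). Start at $y_0$: its only $G''$-neighbors are the first vertices of the color-$1$ $\pi_v$'s, so $P_y$ enters some $\pi_{v_1}$ with $v_1\in C_1$; since every internal vertex of $\pi_{v_1}$ has degree $2$ in $G''$ and $P_y$ is simple, $P_y$ is forced to run along $\pi_{v_1}$ to its other endpoint $y_1$. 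Inductively, if $P_y$ has reached $y_i$ (for $1\le i\le k-1$) after traversing $\pi_{v_1},\dots,\pi_{v_i}$ with $v_j\in C_j$ --- so that exactly $y_0,\dots,y_i$ have been visited --- then the $G''$-neighbors of $y_i$ are the final vertices of the color-$i$ bundle and the first vertices of the color-$(i+1)$ bundle; entering any color-$i$ path would force $P_y$ back to the already-visited $y_{i-1}$ or would revisit a vertex of $\pi_{v_i}$, so $P_y$ must continue into some $\pi_{v_{i+1}}$ with $v_{i+1}\in C_{i+1}$ and, again by the degree-$2$ property, traverse it fully to $y_{i+1}$. Iterating gives $P_y=\pi_{v_1}\cdot\pi_{v_2}\cdots\pi_{v_k}$ with $v_i\in C_i$, which is exactly the claimed decomposition (the $\pi_{v_i}$ being the $P_{v_i}$ of the statement), and since each $\pi_{v_i}$ has length $4k$ the length of $P_y$ is forced to be exactly $4k^2$.

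The main obstacle is getting the degree count in the first paragraph exactly right: one must verify carefully that the ``skip-$i$'' map makes $\{f_i(0),\dots,f_i^{k-1}(0)\}$ precisely $[k]\setminus\{i\}$, so that the $E_4$- and $E_{xx}$-edges of a single $v$ chain into one path with no gaps and no branching, with its two ends pinned (via $E_{yx}$ and $E_{xy}$) to $y_{c(v)-1}$ and $y_{c(v)}$ respectively; one should also spell out the degenerate case $k=2$, where there are no $E_{xx}$-edges and $\pi_v$ is simply $y_{i-1}-x_{v,f_i(0)}-x_{v,f_i(0)}'-y_i$, of length $8=4k$. A minor bookkeeping point is the passage between $G_k^*$ and $G'$: since edge-subdivision only inserts degree-$2$ vertices, a simple $y_0$--$y_k$ path in $G'$ corresponds to a simple $y_0$--$y_k$ path in $G_k^*$ whose number of hops equals its total edge-length, so the whole argument may legitimately be carried out on $G_k^*$.
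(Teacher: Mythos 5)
Your proof is correct and follows essentially the same route as the paper: after invoking Claim~\ref{claim:exclude}, both arguments establish that the edges of $E_4\cup E_{yx}\cup E_{xx}\cup E_{xy}$ decompose into internally vertex-disjoint paths $\pi_v$ of length $4k$, one per $v\in V$, joining consecutive $y_i$'s, and then conclude by a forced-traversal argument that $P_y$ must be a concatenation of $k$ of these paths. The only difference is that you supply the explicit degree-$2$ check on every $x_{v,j}$ and $x_{v,j}'$ that justifies this path decomposition, whereas the paper simply asserts it; your added rigor is welcome but the underlying idea is identical.
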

\newcommand{\singlepathproof}{
Since we have Claim \ref{claim:exclude}, it suffices to consider the edge set $E_4\cup E_{yx}\cup E_{xx}\cup E_{xy}$. We can see that $E_4\cup E_{yx}\cup E_{xx}\cup E_{xy}$ can be partitioned to $k|V|$ paths, where for each $i\in[k]$ and each $v\in C_i$, there is a path $P_v$ which connects $y_{i-1}$ and $y_i$ with length $4k$. The path is $y_{i-1}$ -- $x_{v,f_i(0)}$ -- $x_{v,f_i(0)}'$ -- $x_{v,f_i^2(0)}$ -- $x_{v,f_i^2(0)}'$ -- $\mathellipsis$ -- $x_{v,f_i^{k-1}(0)}$ -- $x_{v,f_i^{k-1}(0)}'$ -- $y_1$. We can see that these paths are vertex disjoint except for the endpoints $y_0,y_1,\mathellipsis,y_k$.

Therefore, the only way to go from $y_0$ to $y_k$ is by passing through $y_0,y_1,\mathellipsis,y_k$ one-by-one. Thus, for each $i\in[k]$, $P_y$ must contain a subpath $P_{v_i}$ where $v_i\in C_i$. Because each of these subpaths has length $4k$, the total cost is already $4k\cdot k=4k^2$, which is exactly the length bound. Therefore, $P_y$ can not contain any other edge, which proves the lemma.
}
\ifprocs
\else
\begin{proof}
\singlepathproof
\end{proof}
\fi

Now, we can prove the other direction in the correctness of the construction.

\begin{lemma}\label{lem:disjointClique}
Let $S$ be an optimal solution for the \textsc{SLSN} instance $(G',L)$ with demand graph $H_{k,0}^*$. If $S$ has cost at most $ g(H_{k,0}^*) = 4k^4-4k^3+\frac{3}{2}k^2+\frac{5}{2}k$, then there is a multi-colored clique of size $k$ in $G$.
\end{lemma}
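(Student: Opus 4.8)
The plan is to apply the structural Lemma~\ref{lem:intersect} to $S$ to obtain a system of paths: one path $P_{i,j}$ for each demand $\{r,l_{i,j}\}$ (over all $i\ne j$) and one path $P_y$ for the demand $\{y_0,y_k\}$, each of length at most $L=4k^2$, internally simple, and pairwise agreeing on their overlaps. Claim~\ref{claim:leafPath} pins down each $P_{i,j}$ to the canonical shape $r$ -- $z_{\{i,j\}}$ -- $z_{e_{i,j}}$ -- $x_{u_{i,j},j}$ -- $x_{u_{i,j},j}'$ -- $l_{i,j}$ for some edge $e_{i,j}=\{u_{i,j},v_{i,j}\}\in E$ with $u_{i,j}\in C_i$ and $v_{i,j}\in C_j$; and Claims~\ref{claim:exclude} and~\ref{claim:singlePath} pin down $P_y$ to a concatenation of $k$ vertex-disjoint ``zig-zags'', the $i$-th passing through all $x_{v_i,\cdot}$-vertices for a single $v_i\in C_i$. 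The vertices $v_1,\dots,v_k$ will be our candidate clique.

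Next I would work out the edge-counting bookkeeping for the subgraph $\bigcup_{i\ne j} P_{i,j}\cup P_y\subseteq S$. Because $u_{i,j}\in C_i$, the color-$i$ endpoint of $e_{i,j}$ already determines $i$, so the segments of the $P_{i,j}$ lying in $E_3$ (and likewise those in $E_4$ and in $E_5$) are pairwise \emph{distinct} edges of $G_k^*$ over all $i\ne j$; the segment $\{r,z_{\{i,j\}}\}\in E_1$ is used exactly by $P_{i,j}$ and $P_{j,i}$; and the $E_2$-segment of $P_{i,j}$ equals that of $P_{j,i}$ if and only if $e_{i,j}=e_{j,i}$. Moreover $P_y$ uses no edge of $E_1\cup E_2\cup E_3\cup E_5$ (Claim~\ref{claim:exclude}), so the only edges $P_y$ shares with any $P_{i,j}$ are $E_4$-edges, and the $E_4$-segment $\{x_{u_{i,j},j},x_{u_{i,j},j}'\}$ of $P_{i,j}$ lies on $P_y$ exactly when $u_{i,j}=v_i$. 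Let $b$ be the number of distinct $E_2$-segments used by the $P_{i,j}$ and let $m$ be the number of pairs $(i,j)$, $i\ne j$, with $u_{i,j}=v_i$. Summing the sizes of all these pairwise edge-disjoint pieces (with each non-unit-length edge of $G_k^*$ replaced by its hop-path, and using $|P_y|=4k^2$) gives
\[
\mathrm{cost}(S)\ \ge\ 4k^2+\underbrace{(k^2-k)}_{E_1}+\underbrace{b}_{E_2}+\underbrace{2\,k(k-1)(2k^2-2)}_{E_3\cup E_5}+\underbrace{(k(k-1)-m)}_{E_4\text{ not on }P_y}\ =\ 4k^4-4k^3+2k^2+2k+b-m .
\]

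Comparing with the hypothesis $\mathrm{cost}(S)\le g(H_{k,0}^*)=4k^4-4k^3+\tfrac{3}{2}k^2+\tfrac{5}{2}k$ yields $m-b\ge\binom{k}{2}$. But $m\le k(k-1)$ trivially, and $b\ge\binom{k}{2}$ since each unordered pair $\{i,j\}$ contributes at least one $E_2$-segment; hence $m-b\le k(k-1)-\binom{k}{2}=\binom{k}{2}$, forcing $m=k(k-1)$ and $b=\binom{k}{2}$. Thus $u_{i,j}=v_i$ for every $i\ne j$, and $e_{i,j}=e_{j,i}$ for every $i<j$. For $i<j$ the common edge $e_{i,j}=e_{j,i}$ then has color-$i$ endpoint $v_i$ and color-$j$ endpoint $v_j$, so $e_{i,j}=\{v_i,v_j\}\in E$; therefore $\{v_1,\dots,v_k\}$ is a set of $k$ pairwise-adjacent vertices of pairwise-distinct colors, i.e.\ a multi-colored clique of size $k$ in $G$. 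I expect the main obstacle to be exactly the disjointness/counting in the middle paragraph: one must verify carefully that distinct demand paths use distinct $E_3,E_4,E_5$ segments (which rests on $u_{i,j}$ determining its color class), that $E_2$-sharing occurs only within the pairs $\{P_{i,j},P_{j,i}\}$, and that $P_y$ meets the demand paths only in $E_4$ — any slip here changes the coefficient of $k^2$, and the whole argument hinges on the $\binom{k}{2}$-sized slack in the budget $g(H_{k,0}^*)$.
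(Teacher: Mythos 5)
Your proof is correct and takes essentially the same approach as the paper: lower-bound the cost of the union of demand paths, match it against the budget $g(H_{k,0}^*)$, force the sharing to be tight, and read off the clique from the forced intersections. The one cosmetic difference is that you parametrize the slack explicitly with $b$ (distinct $E_2$-segments) and $m$ (pairs with $u_{i,j}=v_i$), derive $\mathrm{cost}(S)\ge 4k^4-4k^3+2k^2+2k+b-m$, and then squeeze $m=k(k-1)$, $b=\binom{k}{2}$ from the trivial bounds $m\le k(k-1)$, $b\ge\binom{k}{2}$ together with $m-b\ge\binom{k}{2}$ — whereas the paper computes the two lower bounds separately (one for $\bigcup\mathcal{P}$ assuming maximal $E_1,E_2$-sharing, one for $P_y\cap(E_{yx}\cup E_{xx}\cup E_{xy})$), observes their sum equals $g(H_{k,0}^*)$, and then argues informally that no further edges fit, so sharing must be exactly maximal. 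Your accounting is a bit more explicit but logically equivalent. One nit: the appeal to Lemma~\ref{lem:intersect} is superfluous here — Claims~\ref{claim:leafPath}, \ref{claim:exclude}, and~\ref{claim:singlePath} already pin down the path shapes for arbitrary feasible paths, and the overlap-consistency property from Lemma~\ref{lem:intersect} is never actually used.
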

\begin{proof}
For each $i,j\in[k]$ with $i\ne j$, let $P_{i,j}$ be a (arbitrarily chosen) path in $S$ which connects $r$ and $l_{i,j}$ with length at most $L=4k^2$. Let $\mathcal{P}=\{P_{i,j}\mid i,j\in[k],i\ne j\}$ be the set of all these paths. We also let $P_y$ be a (arbitrary) path in $S$ of length at most $L$ which connects $y_0$ and $y_k$.

From Claim \ref{claim:singlePath}, $P_y$ can be divided to $k$ subpaths, each of which corresponds to a vertex $v_i$. We will show that $v_1,\mathellipsis,v_k$ form a clique in $G$ (i.e., for each $1\le i<j\le k$, we have $\{v_i,v_j\}\in E$).

We first prove that these paths must share certain edges due to the cost bound of the optimal solution. From Claim \ref{claim:leafPath}, we know that each $P_{i,j}$ costs exactly $2+1+2k^2-2+1+2k^2-2=4k^2$. In addition, from the form of $P_{i,j}$ we can also see that these paths are almost disjoint, except that $P_{i,j}$ and $P_{j,i}$ may share a length $2$ edge $\{r,z_{\{i,j\}}\}\in E_1$ and a length $1$ edge $\{z_{\{i,j\}},z_e\}\in E_2$. Therefore, in order to satisfy the demands between $r$ and all of the $l_{i,j}$'s, the total cost of the edges in $S\cap(E_1\cup E_2\cup E_3\cup E_4\cup E_5)$ is at least $4k^2\cdot k(k-1)-\tbinom{k}{2}\cdot(2+1)=4k^4-4k^3-\frac{3}{2}k^2+\frac{3}{2}k$, even if every $P_{i,j}$ and $P_{j,i}$ do share edge $\{r,z_{\{i,j\}}\}$ and edge $\{z_{\{i,j\}},z_e\}$.

We now calculate the cost of the edges in $S\cap(E_{yx}\cup E_{xx}\cup E_{xy})$. From Claim \ref{claim:singlePath}, the total cost of edges in $P_y\cap(E_{yx}\cup E_{xx}\cup E_{xy})$ is at least $(4+3\cdot(k-1)+3)\cdot k=3k^2+k$. Thus, the total cost is already at least $\left(4k^4-4k^3-\frac{3}{2}k^2+\frac{3}{2}k\right)+(3k^2+k)=4k^4-4k^3+\frac{3}{2}k^2+\frac{5}{2}k=g(H_{k,0}^*)$, so $S$ cannot contain any edge which has not been counted yet.

Therefore, every edge in $P_y\cap E_4$ must appear in some path in $\mathcal{P}$. In fact, by the form of the paths in $\mathcal{P}$, we can see that for each $i,j\in[k]$ where $i\ne j$, the edge $\{x_{v_i,j},x_{v_i,j}'\}\in P_y\cap E_4$ can only appear in path $P_{i,j}$, rather than any other $P_{i',j'}$. Thus $x_{v_i,j}$ is in path $P_{i,j}$, and similarly $x_{v_j,i}$ is in path $P_{j,i}$. Recall that $P_{i,j}$ and $P_{j,i}$ must share an edge $\{z_{\{i,j\}},z_e\}$ for some $e\in E$ because of the cost bound, and $z_{\{v_i,v_j\}}$ is the only vertex which adjacent to both $x_{v_i,j}$ and $x_{v_j,i}$, we can see that $e$ can only be $\{v_i,v_j\}$. Therefore $\{v_i,v_j\}\in E$, which proves the lemma.
\end{proof}

\newcommand{\casess}{
\ifprocs
\subsection{Case \ref{case:one}, \ref{case:two}, and \ref{case:matching}:}\label{sec:H234}
\else
\subsubsection{Case \ref{case:one}, \ref{case:two}, and \ref{case:matching}:}\label{sec:H234}
\fi

Cases \ref{case:one}, \ref{case:two}, and \ref{case:matching} are basically the same as Case \ref{case:disjoint}, so we discuss them in the same subsection.

\text{}\\
\textbf{Case \ref{case:one}}: $H_{k,1}^*$

We use the same $G_k^*$, $G'$, and $L$ in the construction of the \textsc{SLSN} instance for demand graph $H_{k,0}^*$, and also set $g(H_{k,1}^*)=4k^4-4k^3+\frac{3}{2}k^2+\frac{5}{2}k$. The only difference is the demand graph. Besides the demand of $\{r,l_{i,j}\}$ for all $i,j\in[k]$ where $i\ne j$, and $\{y_0,y_k\}$, there is a new demand $\{r,y_0\}$. Clearly this new demand graph is a star with $(k(k-1)+1)$ leaves, and an edge in which exactly one of the endpoints is a leaf of the star, so it is isomorphic to $H_{k,1}^*$.

Assume there is a multi-colored clique of size $k$ in $G$. The paths connecting previous demands in the solution of the \textsc{SLSN} instance are the same as Case \ref{case:disjoint}. The path between $r$ and $y_0$ is $r$ -- $z_{\{1,2\}}$ -- $z_{\{v_1,v_2\}}$ -- $x_{v_1,2}$ -- $y_0$. All the edges in this path are already in the previous paths, so the cost remains the same. The length of this path is $2+1+2k^2-2+4=2k^2+5<4k^2$, which satisfies the length bound.

Assume there is a solution for the \textsc{SLSN} instance $(G',L,H_{k,1}^*)$ with cost $4k^4-4k^3+\frac{3}{2}k^2+\frac{5}{2}k$. The proof that there exists a multi-colored clique of size $k$ in $G$ is the same as Case \ref{case:disjoint}.

\text{}\\
\textbf{Case \ref{case:two}}: $H_{k,2}^*$

As in Case \ref{case:one}, only the demand graph changes. The new demand graph is the same as in Case \ref{case:one} but again with a new demand $\{r,y_k\}$. Since $\{r, y_0\}$ was already a demand, our new demand graph is a star with $(k(k-1)+2)$ leaves (the $l_{i,j}$'s and $y_0$ and $y_k$), and an edge between two of its leaves ($y_0$ and $y_k$), which is isomorphic to $H_{k,2}^*$.

Assume there is a multi-colored clique of size $k$ in $G$. The paths connecting previous demands in the solution of the \textsc{SLSN} instance are the same as Case \ref{case:one}. The path between $r$ and $y_k$ is $r$ -- $z_{\{k-1,k\}}$ -- $z_{\{v_{k-1},v_k\}}$ -- $x_{v_k,k-1}$ -- $y_k$. All the edges in this path are already in the previous paths, so the cost stays the same. The length of this path is $2+1+2k^2-2+4=2k^2+5<4k^2$, which satisfies the length bound.

Assume there is a solution for the \textsc{SLSN} instance $(G',L,H_{k,2}^*)$ with cost $4k^4-4k^3+\frac{3}{2}k^2+\frac{5}{2}k$. The proof that there exists a multi-colored clique of size $k$ in $G$ is the same as Case \ref{case:disjoint}.

\text{}\\
\textbf{Case \ref{case:matching}}: $H_{k,k}$

In order to get $H_{k,k}$ as our demand graph, we have to slightly change the construction from Case \ref{case:disjoint}. We still first make a weighted graph $G_{k,k}=(V_{k,k},E_{k,k})$ and then transform it to the unit-length unit-cost graph $G'$. For the vertex set $V_{k,k}$, we add another layer of vertices $V_0=\{l_{i,j}'\mid i,j\in[k],i\ne j\}$ to $V_k^*$ before the first layer $V_1$. For the edge set $E_{k,k}$, we include all the edges in $E_k^*$, but change the length of edges in $E_1$ to length $1$. We also add another edge set $E_0=\{\{l_{i,j}',r\}\mid i,j\in[k],i\ne j\}$. Each edge in $E_0$ has length $1$.

The demands are $\{l_{i,j}',l_{i,j}\}$ for each $i,j\in[k]$ where $i\ne j$, as well as $\{y_0,y_k\}$. This is a matching of size $k(k-1)+1$, which is isomorphic to $H_{k,k}$. We still set the length bound to be $L=4k^2$, and set $g(H_{k,k})=4k^4-4k^3+2k^2+2k$.

If there is a multi-colored clique of size $k$ in $G$, the construction for the solution in $G'$ is similar to Case \ref{case:disjoint}. For each $i,j\in[k]$ where $i\ne j$, the path between $l_{i,j}'$ and $l_{i,j}$ becomes $l_{i,j}'$ -- $r$ -- $z_{\{i,j\}}$ -- $z_{\{v_i,v_j\}}$ -- $x_{v_i,j}$ -- $x_{v_i,j}'$ -- $l_{i,j}$ (i.e., one more layer before the root $r$).  It is easy to see that the length bound and size bound are still satisfied.

Assume there is a solution for the \textsc{SLSN} instance $(G',L,H_{k,k})$ with cost $4k^4-4k^3+2k^2+2k$. The proof that there exists a multi-colored clique of size $k$ in $G$ is essentially the same as Case \ref{case:disjoint}, except the path between $l_{i,j}'$ and $l_{i,j}$ has one more layer.

\ifprocs
\subsection{Case \ref{case:bipartite}: $\mathcal{H}_{2,k}$}\label{sec:bipartite}
\else
\subsubsection{Case \ref{case:bipartite}: $\mathcal{H}_{2,k}$}\label{sec:bipartite}
\fi

In this case, we slightly modify the reduction of \cite{feldmann2017complexity}. We first change all the edges from directed to undirected. In addition, in \cite{feldmann2017complexity} the demand graph is precisely a $2$-by-$k(k-1)$ bipartite graph, but we also handle the generalization in which there may be more demands between vertices on each sides (i.e., the $2$-by-$k(k-1)$ bipartite graph is just a subgraph of our demands). In order to do this, we add some dummy vertices and some edges.

Given an \textsc{MCC} instance $(G=(V,E),c)$ with parameter $k$, and a demand graph $H\in \mathcal{H}_{2,k}$, we create a unit-length and unit-cost \textsc{SLSN} instance $G'$ with demand isomorphic to $H$ as follows.

We first create a weighted graph $G_{2,k}=(V_{2,k},E_{2,k})$. The vertex set $V_{2,k}$ contains $5$ layers of vertices. The first layer $V_1$ is just two roots $r_1,r_2$. The second layer $V_2$ contains a vertex $z_{\{i,j\}}$ for each $1\le i<j\le k$, and a vertex $y_i$ for each $i\in[k]$. The third layer $V_3$ contains a vertex $z_{e}$ for each $e\in E$, and a vertex $y_v$ for each $v\in V$. The fourth layer $V_4$ contains a vertex $x_{v,j}$ for each $v\in V$ and $j\ne c(v)$. The fifth layer $V_5$ contains a vertex $l_{i,j}$ for each $i,j\in[k]$ where $i\ne j$.

The edge set $E_{2,k}$ contains the following edges:
\begin{itemize}
\item $E_{11}=\{\{r_1,z_{\{i,j\}}\},1\le i<j\le k\}$, each edge in $E_{11}$ has length $1$.\label{edge:z12}
\item $E_{12}=\{\{z_{\{c(u),c(v)\}},z_e\}\mid e=\{u,v\}\in E\}$, each edge in $E_{12}$ has length $1$.\label{edge:z23}
\item $E_{13}=\{\{z_e,x_{u,c(v)}\}\mid e=\{u,v\}\in E\}$, each edge in $E_{13}$ has length $1$. Note that if $\{z_e,x_{u,c(v)}\}\in E_{13}$, then $\{z_e,x_{v,c(u)}\}\in E_{13}$\label{edge:z34}
\item $E_{21}=\{\{r_2,y_i\}\mid i\in[k]\}$, each edge in $E_{21}$ has length $1$.\label{edge:y12}
\item $E_{22}=\{\{y_{c(v)},y_v\}\mid v\in V\}$, each edge in $E_{22}$ has length $1$.\label{edge:y23}
\item $E_{23}=\{\{y_v,x_{v,j}\}\mid v\in V,j\ne c(v)\}$, each edge in $E_{23}$ has length $1$.\label{edge:y34}
\item $E_{xl}=\{\{x_{v,j},l_{c(v),j}\}\mid v\in V,j\ne c(v)\}$, each edge in $E_{xl}$ has length $4$.\label{edge:xl}
\item $E_{ll}=\{\{l_{i,j},l_{i',j'}\}\mid i,j,i',j'\in[k],i\ne j,i'\ne j',(i,j)\ne(i',j')\}$, each edge in $E_{ll}$ has length $7$.\label{edge:l}
\end{itemize}

\begin{figure}
\centering
\includegraphics[width=5.5in]{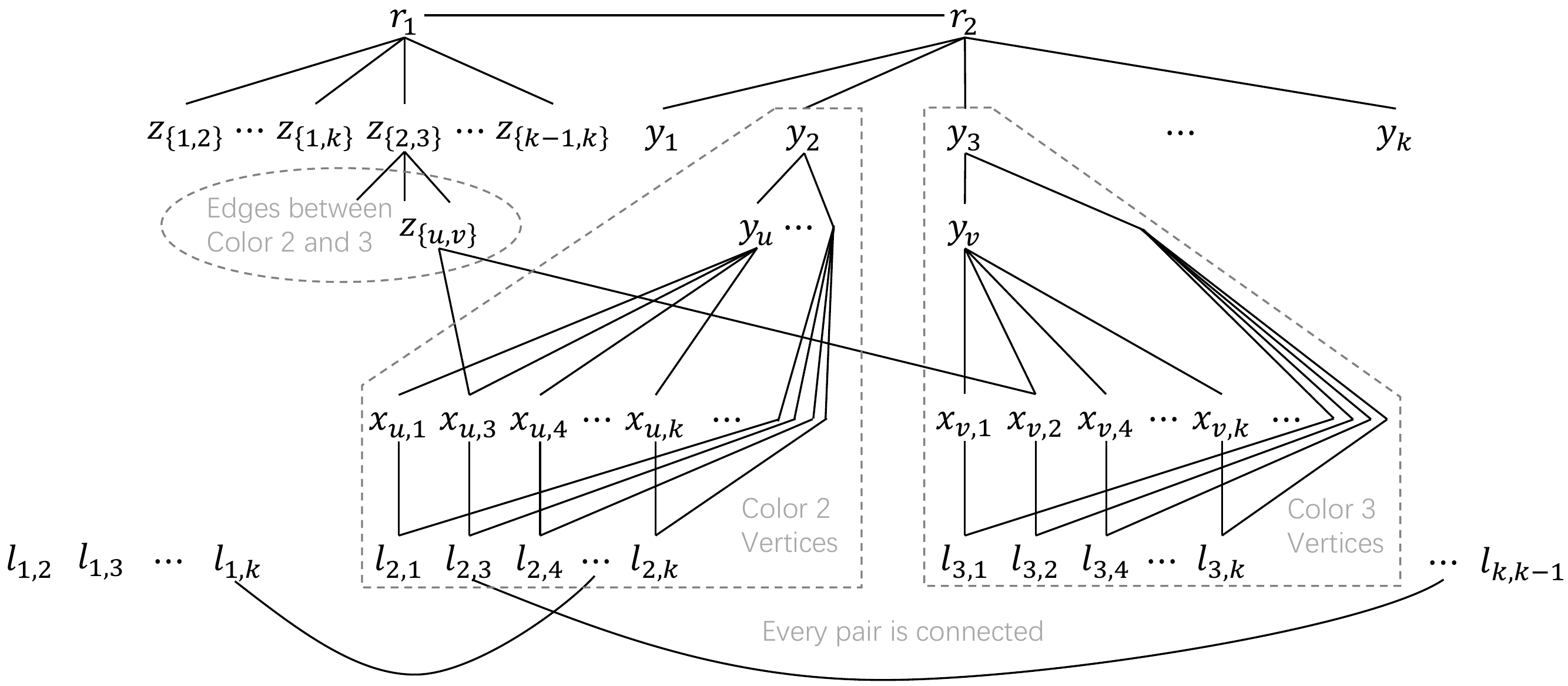}
\caption{$G_{2,k}$}\label{fig:G2k}
\end{figure}

We get a unit-length graph $G'$ from $G_{2,k}$ by replacing every edge $e\in E_{2,k}$ by a $length(e)$-hop path.  Our \textsc{SLSN} instance consists of the graph $G'$, length bound $L = 7$, and the following demands (which will be isomorphic to $H$).  For each $r \in\{r_1,r_2\}$ and $i,j \in [k]$ with $i \neq j$, there is a demand between $r$ and $l_{i,j}$ (note that these demands form a $2$ by $k(k-1)$ complete bipartite graph. Let this complete bipartite subgraph be $B$. For the rest of the demands, we arbitrarily choose a mapping between $V_1=\{r_1,r_2\}$ and the $2$-side of the bipartite graph in $H$, as well as a mapping between $V_5=\{l_{i,j}\mid i,j\in[k],i\ne j\}$ and the $k(k-1)$-side. There is a demand between two vertices $u,v\in V_1\cup V_5$ if there is an edge between $u,v$ in $H$.

This construction clearly takes $poly(|V||H|)$ time. Let $g(H)=7|H|-7k^2+9k-7\cdot\mathds{1}_{\{r_1,r_2\}\in H}$, where $\mathds{1}_{\{r_1,r_2\}\in H}$ is an indicator variable for $\{r_1, r_2\}$ being a demand in $H$. This function is also computable in time $poly(|H|)$. We first prove the easy direction in the correctness of the reduction.

\begin{lemma}\label{lem:bipartiteCost}
If there is a multi-colored clique of size $k$ in $G$, then there is a solution $S$ for the \textsc{SLSN} instance $(G',L)$ with demand graph $H\in\mathcal{H}_{2,k}$, and the total cost of $S$ is $7|H|-7k^2+9k-7\cdot\mathds{1}_{\{r_1,r_2\}\in H}$.
\end{lemma}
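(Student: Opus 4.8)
The plan is to follow the template of Lemma~\ref{lem:disjointCost}: fix a multi-colored clique $v_1,\dots,v_k$ with $v_i\in C_i$ for all $i\in[k]$, exhibit an explicit feasible subgraph $S$ of $G'$ built from this clique, and then add up the $G'$-costs of the distinct edges it uses. Throughout it is convenient to describe $S$ at the level of the weighted graph $G_{2,k}$: including a weighted edge $e$ means including all $\mathrm{length}(e)$ unit-cost edges of the path that replaced it in $G'$, so the $G'$-cost of $S$ is just the sum of $\mathrm{length}(e)$ over the distinct weighted edges chosen.

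First I would specify $S$. For each ordered pair $i\neq j$, to satisfy the demand $\{r_1,l_{i,j}\}$ I take the path $r_1 - z_{\{i,j\}} - z_{\{v_i,v_j\}} - x_{v_i,j} - l_{i,j}$; this is well-defined since $\{v_i,v_j\}\in E$ (the clique), and it uses one edge each from $E_{11},E_{12},E_{13},E_{xl}$, for total length $1+1+1+4=7=L$. To satisfy $\{r_2,l_{i,j}\}$ I take $r_2 - y_i - y_{v_i} - x_{v_i,j} - l_{i,j}$, using one edge each from $E_{21},E_{22},E_{23},E_{xl}$, again of length $7$; note its final edge $\{x_{v_i,j},l_{i,j}\}$ is the \emph{same} $E_{xl}$ edge used by the $r_1$-path. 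For every further demand of $H$ whose endpoints both lie in $V_5=\{l_{i,j}\}$, say $\{l_{i,j},l_{i',j'}\}$, I take the single length-$7$ edge of $E_{ll}$. Finally, if $\{r_1,r_2\}$ is a demand, it is already satisfied (within length $6$) by the path $r_1 - z_{\{i,j\}} - z_{\{v_i,v_j\}} - x_{v_i,j} - y_{v_i} - y_i - r_2$, every edge of which is already in $S$, so this demand adds nothing. Since every edge of $H$ has either both endpoints in the $2$-side (so it must be $\{r_1,r_2\}$), both in the $k(k-1)$-side, or one in each side — and every cross-side pair is already an edge of the complete bipartite graph $B$ — this $S$ realizes every demand of the instance within $L$.

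Second I would compute the cost by counting distinct edges. Per unordered pair $\{i,j\}$ we used one $E_{11}$-edge and one $E_{12}$-edge, giving $2\binom{k}{2}=k(k-1)$ length-$1$ edges from those two sets; $E_{13}$ contributes the $2\binom{k}{2}=k(k-1)$ edges of the form $\{z_{\{v_i,v_j\}},x_{v_i,j}\}$; $E_{21}$ and $E_{22}$ together contribute the $2k$ edges incident to the $y_i$'s; $E_{23}$ contributes $k(k-1)$ length-$1$ edges; and $E_{xl}$ contributes $k(k-1)$ length-$4$ edges. Letting $m$ be the number of $H$-edges inside $V_5$, we have $|H|=2k(k-1)+m+\mathds{1}_{\{r_1,r_2\}\in H}$, so $E_{ll}$ contributes $7m=7(|H|-2k(k-1)-\mathds{1}_{\{r_1,r_2\}\in H})$. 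Summing these (using $2\binom{k}{2}=k(k-1)$) gives $k(k-1)+k(k-1)+k(k-1)+2k+4k(k-1)+7|H|-14k(k-1)-7\,\mathds{1}_{\{r_1,r_2\}\in H}=7|H|-7k^2+9k-7\,\mathds{1}_{\{r_1,r_2\}\in H}=g(H)$.

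The part I would be most careful about is the shared-edge bookkeeping: the $r_1$-paths to $l_{i,j}$ and $l_{j,i}$ share their $E_{11}$- and $E_{12}$-edges, the $r_2$-paths to all $l_{i,j}$ with a fixed first index $i$ share $\{r_2,y_i\}$ and $\{y_i,y_{v_i}\}$, and the $r_1$- and $r_2$-paths to the same $l_{i,j}$ share one $E_{xl}$-edge, so each such edge must be counted exactly once in the tally above (and no edges outside the listed sets are used). The second thing to verify explicitly is that no ``extra'' demand of $H$ can fall between the two sides and demand a nonexistent edge; this is immediate since $H$ has exactly $k(k-1)+2$ vertices — two identified with $\{r_1,r_2\}$ — and $B$ already contains every cross-side pair. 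Everything else is the arithmetic above, which is routine.
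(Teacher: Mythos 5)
Your proposal is correct and follows essentially the same approach as the paper: the same clique-induced paths, the same edge inventory by level (with the same careful handling of shared edges across paths), and the same arithmetic yielding $g(H)$. The only cosmetic difference is that the paper fixes $i=1,j=2$ for the $\{r_1,r_2\}$ path where you leave $i,j$ generic, but that has no bearing on correctness.
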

\begin{proof}
Let $v_1,\mathellipsis,v_k$ be a multi-colored clique of size $k$ in $G$, where $v_i\in C_i$ for all $i\in[k]$. We create a feasible solution $S$ to our \textsc{SLSN} instance, which contains following paths in $G'$ (i.e., edges in $G_{2,k}$):

\begin{itemize}
\item $\{r_1,z_{\{i,j\}}\}$ for each $1\le i<j\le k$. The total cost of these edges is $\tbinom{k}{2}=\frac{k^2-k}{2}$.
\item $\{z_{\{i,j\}},z_{\{v_i,v_j\}}\}$ for each $1\le i<j\le k$. The total cost of these edges is $\tbinom{k}{2}=\frac{k^2-k}{2}$.
\item $\{z_{\{v_i,v_j\}},x_{v_i,j}\}$ and $\{z_{\{v_i,v_j\}},x_{v_j,i}\}$ for each $1\le i<j\le k$. The total cost of these edges is $2\cdot\tbinom{k}{2}=k^2-k$.
\item $\{r_2,y_i\}$ for each $i\in[k]$. The total cost of these edges is $k$.
\item $\{y_i,y_{v_i}\}$ for each $i\in[k]$. The total cost of these edges is $k$.
\item $\{y_{v_i},x_{v_i,j}\}$ for each $i,j\in[k]$ where $i\ne j$. The total cost of these edges is $2\cdot\tbinom{k}{2}=k^2-k$.
\item $\{x_{v_i,j},l_{i,j}\}$ for each $i,j\in[k]$ where $i\ne j$. The total cost of these edges is $4\cdot2\cdot\tbinom{k}{2}=4k^2-4k$.
\item $\{u,v\}$ for each $\{u,v\}\in H\setminus(B\cup\{\{r_1,r_2\}\})$. The total cost of these edges is $7\cdot(|H|-2\cdot k(k-1)-\mathds{1}_{\{r_1,r_2\}\in H})=7|H|-14k^2+14k-7\cdot\mathds{1}_{\{r_1,r_2\}\in H}$.
\end{itemize}

Therefore, the total cost is $\frac{k^2-k}{2}+\frac{k^2-k}{2}+k^2-k+k+k+k^2-k+4k^2-4k+7|H|-14k^2+14k-7\cdot\mathds{1}_{\{r_1,r_2\}\in H}=7|H|-7k^2+9k-7\cdot\mathds{1}_{\{r_1,r_2\}\in H}$.

Now we show the feasibility of this solution. For each $i,j\in[k]$ where $i\ne j$, the path between $r_1$ and $l_{i,j}$ is $r_1$ -- $z_{\{i,j\}}$ -- $z_{\{v_i,v_j\}}$ -- $x_{v_i,j}$ -- $l_{i,j}$, and the path between $r_2$ and $l_{i,j}$ is $r_2$ -- $y_i$ -- $y_{v_i}$ -- $x_{v_i,j}$ -- $l_{i,j}$. Both paths have length $7$, which is within the length bound. For each $\{u,v\}\in H\setminus(B\cup\{\{r_1,r_2\}\})$, $u$ and $v$ have an edge with length $7$, thus a path under the length bound exists. Finally, if there exists a demand between $r_1$ and $r_2$, we can follow the path $r_1$ -- $z_{\{1,2\}}$ -- $z_{\{v_1,v_2\}}$ -- $x_{v_1,2}$ -- $y_{v_1}$ -- $y_1$ -- $r_2$, which has length $6$.
\end{proof}

Now we prove the other direction.

Let $S$ be an optimal solution for the \textsc{SLSN} instance $(G',L)$ with demand graph $H_{k,0}^*$. If $S$ has cost at most $4k^4-4k^3+\frac{3}{2}k^2+\frac{5}{2}k$, then there is a multi-colored clique of size $k$ in $G$.

\begin{lemma}\label{lem:bipartiteClique}
Let $S$ be an optimal solution for the \textsc{SLSN} instance $(G',L)$ with demand graph $H\in\mathcal{H}_{2,k}$. If $S$ has cost at most $7|H|-7k^2+9k-7\cdot\mathds{1}_{\{r_1,r_2\}\in H}$, then there is a multi-colored clique of size $k$ in $G$.
\end{lemma}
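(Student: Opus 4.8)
The plan is to run the same kind of ``rigid path $+$ tight counting'' argument used for $H_{k,0}^*$ (Claims~\ref{claim:leafPath}--\ref{claim:singlePath} and Lemma~\ref{lem:disjointClique}), but now tracking both ``sides'' of $G_{2,k}$ at once. First I would record the structure of feasible paths. Since $G_{2,k}$ is essentially layered, an argument identical to Claim~\ref{claim:leafPath} shows: every $r_1$--$l_{i,j}$ path of length at most $7$ must take exactly one edge of each of $E_{11},E_{12},E_{13}$ together with one length-$4$ edge of $E_{xl}$, hence has the form $r_1$ -- $z_{\{i,j\}}$ -- $z_e$ -- $x_{u,j}$ -- $l_{i,j}$ with $u\in C_i$, $e=\{u,w\}\in E$ and $c(w)=j$; every $r_2$--$l_{i,j}$ path of length at most $7$ has the form $r_2$ -- $y_i$ -- $y_u$ -- $x_{u,j}$ -- $l_{i,j}$ with $u\in C_i$; and any path between two $l$-vertices of length at most $7$ is the single length-$7$ $E_{ll}$-edge joining them (leaving an $l$-vertex already costs $4$ or $7$, so nothing else fits). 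Because every internal vertex of a replaced edge has degree $2$ in $G'$ and the demand endpoints $r_1,r_2,l_{i,j}$ are not such vertices, each of these paths is a union of whole edges of $G_{2,k}$. Since $S$ is optimal, hence minimal (or by Lemma~\ref{lem:intersect}), $S$ is exactly the union of one such path $P_d$ per demand $d$; equivalently $S$ corresponds to a subgraph $S'$ of $G_{2,k}$ whose replacement paths are edge-disjoint, so $\mathrm{cost}(S)=\sum_{e\in S'}\mathrm{length}(e)$.

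Next I would lower-bound $\mathrm{cost}(S)$ by $g(H)$. Fix length-$\le 7$ paths $P^1_{i,j},P^2_{i,j}\subseteq S$ for the demands $\{r_1,l_{i,j}\}$ and $\{r_2,l_{i,j}\}$, and write $e_{i,j}\in E$, $u_{i,j}\in C_i$ for the choices of $P^1_{i,j}$ and $u'_{i,j}\in C_i$ for the choice of $P^2_{i,j}$. A short case analysis of which edges may coincide gives: all $\binom{k}{2}$ edges $\{r_1,z_{\{i,j\}}\}$ are used, so $|S'\cap E_{11}|\ge\binom{k}{2}$; for each unordered pair $\{i,j\}$ the paths $P^1_{i,j},P^1_{j,i}$ use $E_{12}$-edges incident to $z_{\{i,j\}}$, and $E_{12}$-edges for different pairs are distinct, so $|S'\cap E_{12}|\ge\binom{k}{2}$; the $E_{13}$-edges of the $P^1_{i,j}$'s are incident to the pairwise-distinct vertices $x_{u_{i,j},j}$, so $|S'\cap E_{13}|\ge k(k-1)$; symmetrically $|S'\cap E_{21}|\ge k$, $|S'\cap E_{22}|\ge k$, $|S'\cap E_{23}|\ge k(k-1)$; each $l_{i,j}$ is incident to at least one used $E_{xl}$-edge and these are distinct over the $k(k-1)$ $l$-vertices, so $|S'\cap E_{xl}|\ge k(k-1)$; and each ``extra'' demand between two $l$-vertices forces its own $E_{ll}$-edge, so $|S'\cap E_{ll}|\ge |H|-2k(k-1)-\mathds{1}_{\{r_1,r_2\}\in H}$. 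Multiplying by the edge lengths ($1$ for $E_{11},\dots,E_{23}$, $4$ for $E_{xl}$, $7$ for $E_{ll}$) and summing yields exactly $7|H|-7k^2+9k-7\cdot\mathds{1}_{\{r_1,r_2\}\in H}=g(H)$ (the same bookkeeping as in Lemma~\ref{lem:bipartiteCost}). Since $\mathrm{cost}(S)\le g(H)$, every one of these inequalities is in fact an equality.

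Finally I would extract the clique from tightness. Equality $|S'\cap E_{22}|=k$ forces, for each color $i$, that all paths $P^2_{i,j}$ ($j\ne i$) use the \emph{same} $E_{22}$-edge $\{y_i,y_{v_i}\}$, which defines a vertex $v_i\in C_i$. Equality $|S'\cap E_{xl}|=k(k-1)$ forces each $l_{i,j}$ to be incident to exactly one used $E_{xl}$-edge, so $u_{i,j}=u'_{i,j}=v_i$; hence $P^1_{i,j}$ passes through $x_{v_i,j}$, which by the rigid form means $e_{i,j}$ is incident to $v_i$ and its other endpoint has color $j$. Equality $|S'\cap E_{12}|=\binom{k}{2}$ forces $P^1_{i,j}$ and $P^1_{j,i}$ to use the same $E_{12}$-edge, i.e.\ $e_{i,j}=e_{j,i}$; but $e_{j,i}$ is incident to $v_j$ with other endpoint of color $i$, and since $c(v_i)=i\ne j=c(v_j)$ the only possibility is $e_{i,j}=\{v_i,v_j\}\in E$. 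This holds for all $1\le i<j\le k$ with $v_i\in C_i$, so $\{v_1,\dots,v_k\}$ is a multi-colored clique of size $k$ in $G$, proving the lemma. The main obstacle is the middle step: establishing the rigid path structure (the analogues of Claims~\ref{claim:leafPath}--\ref{claim:singlePath} for both sides simultaneously) and, above all, the careful ``which edges may coincide'' accounting that makes the lower bound add up to \emph{exactly} $g(H)$; once that is in place, the canonical structure and the clique are forced by tightness.
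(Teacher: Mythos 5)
Your proof is correct and follows the same overall strategy as the paper: establish the rigid path structure in $G_{2,k}$, lower-bound the cost so that it matches $g(H)$ exactly, and then extract the clique from the forced equalities. The one noticeable difference is the bookkeeping: you bound each edge class $E_{11},E_{12},E_{13},E_{21},E_{22},E_{23},E_{xl},E_{ll}$ separately (using that the relevant endpoints $z_{\{i,j\}}$, $x_{u,j}$, $y_i$, $l_{i,j}$ are pairwise distinct over demands), whereas the paper lumps $E_{11}\cup E_{12}\cup E_{13}\cup E_{xl}$ together via the total length of the $\mathcal{P}_1$-paths and separately bounds $E_{21}\cup E_{22}\cup E_{23}$; both sum to $g(H)$. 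Your finer decomposition makes the ``equality everywhere'' step somewhat cleaner (each of the eight inequalities becomes tight, immediately pinning down $v_i$, then $u_{i,j}=u'_{i,j}=v_i$, then $e_{i,j}=e_{j,i}=\{v_i,v_j\}$), but the extraction logic and the use of the length-$7$ budget to force the rigid path form are the same as in the paper's proof, so this is a refinement of presentation rather than a different argument.
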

\begin{proof}
For each $i,j\in[k]$ where $i\ne j$, let $P_{1,i,j}\subseteq S$ be a (arbitrarily chosen) path between $r_1$ and $l_{i,j}$ with length at most $7$, and $P_{2,i,j}\subseteq S$ be a (arbitrarily chosen) path between $r_2$ and $l_{i,j}$ with length at most $7$. Let $\mathcal{P}_1=\{P_{1,i,j}\mid i,j\in[k],i\ne j\}$, and $\mathcal{P}_2=\{P_{2,i,j}\mid i,j\in[k],i\ne j\}$. As in lemma \ref{lem:disjointClique}, we first show that some edges must be shared by multiple paths by calculating the total cost.

In order to satisfy the demand for each $\{l_{i,j},l_{i',j'}\}\in H\setminus(B\cup\{\{r_1,r_2\}\})$, the only way is to use the edge between $l_{i,j}$ and $l_{i',j'}$ in $E_{ll}$. Otherwise, suppose the path has more than one edge, since the only edges incident on any $l_{i,j}$ have length either $4$ or $7$, the cost of two of these edges already exceeds the length bound. Thus the total cost of the edges in $S\cap E_{ll}$ is at least $7|H|-7|B|-7\cdot\mathds{1}_{\{r_1,r_2\}\in H}=7|H|-14k^2+14k-7\cdot\mathds{1}_{\{r_1,r_2\}\in H}$.

We can see that each of the paths in $\mathcal{P}_1\cup\mathcal{P}_2$ must have exactly one edge between every two adjacent levels, and they cannot have any other edges because of the length bound. Thus, each path $P_{1,i,j}\in\mathcal{P}_1$ must have form $r_1$ -- $z_{\{i,j\}}$ -- $z_{\{u,v\}}$ -- $x_{u,j}$ -- $l_{i,j}$ for some $\{u,v\}\in E$ with $u\in C_i$ and $v\in C_j$, and each path in $P_{2,i,j}\in\mathcal{P}_2$ must have form $r_2$ -- $y_i$ -- $y_{v}$ -- $x_{v,j}$ -- $l_{i,j}$ for some $v\in C_i$.

By looking at the form of paths in $\mathcal{P}_1$, we can see that these paths are almost disjoint, except that $P_{1,i,j}$ and $P_{1,j,i}$ may share edge $\{r_1,z_{\{i,j\}}\}\in E_{11}$ and edge $\{z_{\{i,j\}},z_e\}\in E_{12}$. Since paths in $\mathcal{P}_1$ only contain edges in $E_{11}\cup E_{12}\cup E_{13}\cup E_{xl}$, the cost of edges in $S\cap(E_{11}\cup E_{12}\cup E_{13}\cup E_{xl})$ must be at least $7\cdot k(k-1)-\tbinom{k}{2}-\tbinom{k}{2}=6k^2-6k$, even if every $P_{1,i,j}$ and $P_{1,j,i}$ do share edge $\{r_1,z_{\{i,j\}}\}$ and edge $\{z_{\{i,j\}},z_e\}$.

We then look at the form of paths in $\mathcal{P}_2$.  We can see that the first $3$ hops of these paths only contain edges in $E_{21}\cup E_{22}\cup E_{23}$. In addition, these paths are all disjoint on edges in $E_{23}$. Moreover, in order to reach all $l_{i,j}$ from $r_2$ within length $7$, these paths should contain all edges in $E_{21}$ and at least $k$ edges in $E_{22}$. Therefore, the total cost of edges in $S\cap(E_{21}\cup E_{22}\cup E_{23})$ should be at least $k(k-1)+k+k=k^2+k$.

By summing up all these edges, the total cost of edges in $S$ is already at least $7|H|-14k^2+14k-7\cdot\mathds{1}_{\{r_1,r_2\}\in H}+6k^2-6k+k^2+k=7|H|-7k^2+9k-7\cdot\mathds{1}_{\{r_1,r_2\}\in H}=g(H)$, which means $S$ cannot contain any edge that has not been counted before.

Therefore, $S$ must contain exactly $k$ edges in $E_{22}$, and each of these edges must have a different $y_i$ as an endpoint. We let these edges be $\{y_1,y_{v_1}\},\mathellipsis,\{y_k,y_{v_k}\}$, where $v_i\in C_i$ for all $i\in[k]$. We claim that $v_1,\mathellipsis,v_k$ forms a (multicolored) clique in $G$.

For each $1\le i<j\le k$, by looking at the form of paths in $\mathcal{P}_2$, we know that the path $P_{2,i,j}$ must be $r_2$ -- $y_i$ -- $y_{v_i}$ -- $x_{v_i,j}$ -- $l_{i,j}$. Because of the total cost limitation, the edge $\{x_{v_i,j},l_{i,j}\}\in P_{2,i,j}\cap E_{xl}$ must also appear in some path in $\mathcal{P}_1$. By looking at the form of the paths in $\mathcal{P}_1$, the only possible path is $P_{1,i,j}$. Similarly, path $P_{2,j,i}$ must share edge $\{x_{v_j,i},l_{j,i}\}$ with $P_{1,j,i}$. Again by looking at the form of the paths in $\mathcal{P}_1$, the edge in $\{z_{\{i,j\}},z_e\}\in S\cap E_{12}$ which is shared by $P_{1,i,j}$ and $P_{1,j,i}$ must have $e=\{v_i,v_j\}$, which means $\{v_i,v_j\}\in E$.

Therefore, $v_1,\mathellipsis,v_k$ forms a clique in $G$.
\end{proof}
}
\ifprocs
\subsubsection{Cases \ref{case:one}, \ref{case:two}, \ref{case:matching}, and \ref{case:bipartite}:}

For Cases \ref{case:one}, \ref{case:two}, and \ref{case:matching}, we can use essentially the same reduction as in Case \ref{case:disjoint}. For Case \ref{case:one}, we just need to add a new demand $\{r,y_0\}$, and do some extra analysis to show that adding this demand does not change anything. For Case \ref{case:two}, we similarly add another demand $\{r,y_k\}$. Case \ref{case:matching} requires only adding another layer of vertices and edges before the root $r$. The details are in Appendix \ref{sec:H234}.  Case \ref{case:bipartite} is a variant of the reduction in \cite{feldmann2017complexity}, and we prove this case in Appendix \ref{sec:bipartite}.
\else
\casess
\fi

\subsubsection{Case \ref{case:all}: $\mathcal{H}_k$}\label{sec:all}

We now want to construct an \textsc{SLSN} instance for a demand graph $H\in\mathcal{H}_k$ from an \textsc{MCC} instance $(G=(V,E),c)$ with parameter $k$.  By the definition of $\mathcal H_k$, for some $t\in[5]$ there is a graph $H^{(t)}$ of Case $t$ that is an induced subgraph of $H$. We use Lemma \ref{lem:hk} to find the graph $H^{(t)}$. Let $(G^{(t)},L)$ be the \textsc{SLSN} instance obtained by applying our reduction for Case $t$ to the \textsc{MCC} instance $(G, c)$, and let the corresponding function be $g^{(t)}$.

We now want to transform the \textsc{SLSN} instance $(G^{(t)}, L)$ with demand graph $H^{(t)}$ into a new \textsc{SLSN} instance $(G',L)$ with demand graph $H$, so that instance $(G^{(t)},L,H^{(t)})$ has a solution with cost $g^{(t)}(H^{(t)})$ if and only if instance $(G',L,H)$ has a solution with cost $g(H)=g^{(t)}(H^{(t)})+L\cdot(|H|-|H^{(t)}|)$. If there is such a construction which runs in polynomial time, then there is a multi-colored clique of size $k$ in $G$ if and only if instance $(G',L,H)$ has a solution with cost $g(H)$. This will then imply Theorem \ref{thm:reduct}.

The graph $G'$ is basically just $G^{(t)}$ with some additional vertices and edges from $H\setminus H^{(t)}$. For each vertex $v$ in $H$ but not in $H^{(t)}$, we add a new vertex $v$ to $G'$. For each edge $\{u,v\}\in H\setminus H^{(t)}$, we add an $L$-hop path between $u$ and $v$ to $G'$.

The construction still takes $poly(|V||H|)$ time, because the construction for the previous cases takes $poly(|V||H^{(t)}|)$ time and the construction for Case \ref{case:all} takes $poly(|G^{(t)}||H|)$ time. Here $|H^{(t)}|\le|H|$, and we know that $|G^{(t)}|$ is polynomial in $|V|$ and $|H^{(t)}|$.

\begin{lemma}
\textsc{SLSN} instance $(G^{(t)},L,H^{(t)})$ has a solution with cost $g^{(t)}(H^{(t)})$ if and only if instance $(G',L,H)$ has a solution with cost $g(H)=g^{(t)}(H^{(t)})+L\cdot(|H|-|H^{(t)}|)$.
\end{lemma}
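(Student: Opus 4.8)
The plan is to prove the two implications separately, leaning on two structural observations about the construction. First, since $H^{(t)}$ is an \emph{induced} subgraph of $H$, every edge of $H\setminus H^{(t)}$ has at least one endpoint among the vertices that were freshly added to $G'$ (if both endpoints lay in $V(H^{(t)})$, the edge would already be in $H^{(t)}$). Second, for each edge $\{u,v\}\in H\setminus H^{(t)}$ the attached $L$-hop path $Q_{\{u,v\}}$ has $L-1$ brand-new internal vertices, each of degree exactly $2$ in $G'$; moreover these paths are pairwise internally disjoint and edge-disjoint from $G^{(t)}$, so their edges number exactly $L\cdot(|H|-|H^{(t)}|)$ and partition $E(G')$ together with $E(G^{(t)})$.

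For the forward direction I would start from a feasible solution $S^{(t)}$ of $(G^{(t)},L,H^{(t)})$ of cost $g^{(t)}(H^{(t)})$ and set $S'=S^{(t)}\cup\bigcup_{\{u,v\}\in H\setminus H^{(t)}}E(Q_{\{u,v\}})$. By the second observation $c(S')=g^{(t)}(H^{(t)})+L\cdot(|H|-|H^{(t)}|)=g(H)$, and feasibility is immediate: demands of $H^{(t)}$ are routed by $S^{(t)}$ inside $G^{(t)}\subseteq G'$, and each demand $\{u,v\}\in H\setminus H^{(t)}$ is routed along $Q_{\{u,v\}}$, which has length exactly $L$.

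For the reverse direction I would take a feasible $S'$ for $(G',L,H)$ of cost $g(H)$ and first establish the key claim that $Q_{\{u,v\}}$ is the \emph{unique} $u$--$v$ path of length at most $L$ in $G'$, for every $\{u,v\}\in H\setminus H^{(t)}$. Given any $u$--$v$ path $P$: if $P$ touches an internal vertex of $Q_{\{u,v\}}$ then the degree-$2$ property forces it to traverse all of $Q_{\{u,v\}}$ and we are done; otherwise, by the first observation one endpoint, say $v$, is a fresh vertex all of whose neighbours are first internal vertices of paths $Q_{\{v,w\}}$, so $P$ must reach $v$ through some $Q_{\{v,w\}}$ with $w\ne u$, traverse it entirely (length $L$), and still cover a nonempty $u$--$w$ portion — impossible. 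Hence $S'$ contains every $E(Q_{\{u,v\}})$, accounting for cost exactly $L\cdot(|H|-|H^{(t)}|)$, so $S^{(t)}:=S'\cap E(G^{(t)})$ has cost at most $g(H)-L\cdot(|H|-|H^{(t)}|)=g^{(t)}(H^{(t)})$. Finally, the same degree-$2$ argument shows the path realising any demand $\{s_i,t_i\}\in H^{(t)}$ in $S'$ cannot use a fresh internal vertex — doing so would force it through a full length-$L$ path $Q_{\{u,v\}}$, leaving no budget to reach $s_i$ or $t_i$, since $\{s_i,t_i\}\notin H\setminus H^{(t)}$ — so this path lies in $G^{(t)}$ and hence in $S^{(t)}$, making $S^{(t)}$ feasible for $(G^{(t)},L,H^{(t)})$.

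I expect the main obstacle to be this reverse direction: one must argue simultaneously that any cheap $S'$ is \emph{forced} to purchase each attached $L$-hop path in full and that the demands of $H^{(t)}$ cannot be rerouted through those paths. Both facts rest on exactly the same two ingredients — the degree-$2$ internal vertices and the tightness of the bound $L$ — so once this structural claim is nailed down the remaining cost bookkeeping is routine.
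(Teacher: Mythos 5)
Your proposal is correct and follows essentially the same approach as the paper's proof: a straightforward union for the forward direction, and for the reverse direction observing that each $L$-hop attachment $P_e$ is the unique length-$\le L$ path for its demand and that no path serving a demand of $H^{(t)}$ can enter a fresh attachment without overshooting the length bound $L$. The paper states these two structural facts rather tersely; your degree-$2$/endpoint argument simply supplies the justification the paper leaves implicit.
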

\begin{proof}
If instance $(G^{(t)},L,H^{(t)})$ has a solution with cost $g^{(t)}(H^{(t)})$. Let the solution be $S^{(t)}$. For each $e=\{u,v\}\in H\setminus H^{(t)}$, let the new $L$-hop path between $u$ and $v$ in $G'$ be $P_e$. Then $S^{(t)}\cup\bigcup_{e\in H\setminus H^{(t)}}P_e$ is a solution to $G'$ with cost $g^{(t)}(H^{(t)})+L\cdot(|H|-|H^{(t)}|)$.

If instance $(G',L,H)$ has a solution with cost $g^{(t)}(H^{(t)})+L\cdot(|H|-|H^{(t)}|)$, let the solution be $S$. Since for each $e=\{u,v\}\in H\setminus H^{(t)}$, the only path between $u$ and $v$ in $G'$ within the length bound is the new $L$-hop path $P_e$, any valid solution must include all these $P_e$, which has total cost $L\cdot(|H|-|H^{(t)})$. In addition, for each demand $\{u,v\}$ which is also in $H$, any path between $u$ and $v$ in $G'$ within the length bound will not include any new edge, because otherwise it will strictly contain an $L$-hop path, and have length more than $L$. Therefore, $S\setminus\bigcup_{e\in H\setminus H^{(t)}}P_e$ is a solution to $G^{(t)}$ with cost $g^{(t)}(H^{(t)})$.
\end{proof}

Therefore Theorem \ref{thm:reduct} is proved.

\subsection{Proof of Theorem \ref{thm:hard}:}\label{sec:analysis}


If $\mathcal{C}$ is a recursively enumerable class, and $\mathcal{C}\nsubseteq\mathcal{C}_\lambda\cup \mathcal{C}^*$ for any constant $\lambda$, then for every $k\ge2$, let $H_k$ be the first graph in $\mathcal{C}$ where $H_k$ is not a star and has at least $2k^{10}$ edges. The time for finding $H_k$ is $f(k)$ for some function $f$. From Lemma \ref{lem:hk} we know that $H_k\in\mathcal{H}_k$, so that we can use Theorem \ref{thm:reduct} to construct the $\textsc{SLSN}_\mathcal{C}$ instance with demand $H_k$.

The parameter $p=|H_k|$ of the instance is a function just of $k$, and the construction time is \textsc{FPT} from Theorem \ref{thm:reduct}. Therefore this is a \textsc{FPT} reduction from the \textsc{MCC} problem to the unit-length unit-cost $\textsc{SLSN}_\mathcal{C}$ problem. Thus Theorem \ref{thm:mcc} implies that the unit-length unit-cost $\textsc{SLSN}_\mathcal{C}$ problem is \textsc{W}$[1]$-hard with parameter $p$.
\qed

\ifprocs
\section{Overview of General Length and Cost Settings}\label{sec:overview}

As discussed in Section \ref{sec:result}, we extended our results from the unit-length setting to the general length setting.  \ifprocs Due to page limits we defer all detailed results to Appendices \ref{sec:approxAlgorithm} and \ref{sec:approxHard}, and instead give only a brief overview of our results and techniques.\fi

\subsection{Upper bounds.}

Recall that we cannot have an exact \textsc{FPT} algorithm for $\textsc{SLSN}_{\mathcal{C}^*}$ and $\textsc{SLSN}_{\mathcal{C}_\lambda}$ since even if there is only a single demand the problem becomes the \textsc{Restricted Shortest Path} problem, which is known to be NP-hard~\cite{hassin1992approximation}. But since \textsc{Restricted Shortest Path} admits an \textsc{FPT}AS~\cite{hassin1992approximation,lorenz2001simple}, it is natural to instead try to give a $(1+\varepsilon)$-approximation algorithm for both problems. We show that with some modifications of the algorithms in the unit-length case, we can give an \textsc{FPT}AS for arbitrary-length arbitrary-cost  $\textsc{SLSN}_{\mathcal{C}_\lambda}$, and can give a $(1+\varepsilon)$-approximation algorithm in \textsc{FPT} time for arbitrary-length arbitrary-cost  $\textsc{SLSN}_{\mathcal{C}^*}$.

For $\textsc{SLSN}_{\mathcal{C}_\lambda}$, Lemma \ref{lem:intersect} still holds, so we can still guess how the paths in the solution intersect with each other and what the endpoints of maximum shared subpaths are. However, we cannot guess the length of a subpath in this setting, since there are too many possibilities. We instead guess the cost of all the subpaths. Because we are aiming to find an approximation solution, we are allowed to have $(1+\varepsilon)$ error on the cost of each subpath, so this allows us to reduce the search space.  However, this is still not enough: if the space of the possible values is too large, then $\log_{1+\varepsilon}$ of it is still too large. So we then use an additional procedure from \cite{lorenz2001simple} which gives valid upper bound $U$ and lower bound $L$ on the optimal solution such that $U/L \leq  n^2$. This sufficiently decreases the space of possible guesses so that we get a $(1+\varepsilon)$-approximation in polynomial time. The full algorithm is Algorithm \ref{alg:approxConst} in the appendices, and the analysis is in Appendix \ref{sec:approxConst}.

For the star demand graph, we cannot do the same reduction as in Section \ref{sec:star} because with arbitrary lengths the natural layered graph used in the reduction to \textsc{DSN} can have exponential layers.  However, similar to \textsc{Steiner Tree} and $\textsc{DST}$, one can prove that the optimal solution for $\textsc{SLSN}_{\mathcal{C}^*}$ is always a tree. Therefore we look at the original \textsc{FPT} algorithm for  \textsc{Steiner Tree} and $\textsc{DST}$ and attempt to modify it to work in our setting.  Given a star demand graph where the center is $s$ and the leaf set is $T$, both algorithms use dynamic programming to solve the subproblems $f(v,R)$, which are to find the minimum cost tree with root $v\in V$ that contains $R\subseteq T$, starting from $|R|=1$ to $|R|=|T|$. The base case when $|R|=1$ is essentially a shortest path algorithm. Then we can build up larger trees since a tree with more than two leaves can always be partitioned to two subtrees and a path from the root.

We use a similar approach, first discretizing the possible costs to be powers of $(1+\varepsilon)$. We define the subproblem $d(v,j,R)$ to be the smallest height of a tree (with the given edge lengths), such that the root is $v\in V$, the total cost is at most $j$, and it contains all the vertices in $R\subseteq T$. Then, we find the smallest $j$ for which $d(s,j,T)$ is at most the length bound $L$, and this $j$ is actually a good approximation to the optimal solution. The full algorithm and analysis are in Appendix \ref{sec:approxStar}.

\subsection{Lower bounds.}

For the lower bound on  $\textsc{SLSN}_{\mathcal{C}}$ with $\mathcal{C}\nsubseteq(\mathcal{C}_\lambda\cup \mathcal{C}^*)$, the same reduction as in Section \ref{sec:hard} already shows that it is \textsc{W}$[1]$-hard to obtain a $\left(1+\frac{1}{O(p^2)}\right)$-approximation. However, we would like a stronger hardness of approximation, one which would rule out good approximations (like we gave for $\textsc{SLSN}_{\mathcal{C}^*}$ and $\textsc{SLSN}_{\mathcal{C}_\lambda}$) even for large $p$. With some modifications of the cost of some edges in the instance constructed in Section \ref{sec:hard}, and a stronger assumption of Gap-ETH, we can show that there is no $(\frac{5}{4}-\varepsilon)$-approximation for $\textsc{SLSN}_{\mathcal{C}}$ which runs in \textsc{FPT} time, even for the unit-length polynomial-cost setting.

Consider the reduction in Section \ref{sec:hard}.  We showed that if there is a low-cost solution to the \textsc{SLSN} instance that we created, then the paths satisfying the demands must share some specific edges with each other, and the existence of these edges implies the existence of a clique in the given \textsc{MCC} instance. For the polynomial-cost setting, we reduce from a different problem known as the \textsc{Multi-Colored Densest} $k$\textsc{-Subgraph}, which is a gap version of the \textsc{MCC} instance. Under the assumption of Gap-ETH, a corollary of \cite{chitnis2017parameterized} shows that for any constant $0 < \varepsilon < 1$, no \textsc{FPT} algorithm can distinguish between the case that there is a multi-colored $k$-clique and the case that every subgraph with $k$ vertices has at most $\varepsilon\cdot\binom{k}{2}$ edges. By modifying the cost of some edges and making a slightly delicate inclusion-exclusion argument, we can show that if the cost of the \textsc{SLSN} solution is not too large then many edges still need to be shared by different paths, which ensures that a subgraph with $k$ vertices and $\varepsilon\cdot\binom{k}{2}$ edges must exist. The entire reduction and the correctness proof is in Appendix \ref{sec:approxHard}.

\bibliography{SLSN}

\appendix

\section{Algorithms for Unit-Length Arbitrary-Cost \textsc{SLSN}}\label{appendix:algorithms}

In this section we discuss the easy cases of \textsc{SLSN}.  We begin in Section \ref{asec:const} by proving Theorem \ref{thm:const} (giving a good algorithm when the number of demands is constant), and then in Section \ref{asec:star} we prove Theorem \ref{thm:star} (giving a good algorithm when the demands form a star).

\subsection{Constant Number of Demands}\label{asec:const}

For any constant $\lambda$, we will show that there is a polynomial time algorithm that solves $\textsc{SLSN}_{\mathcal{C}_\lambda}$. Before we introduce the algorithm, we first prove the structural Lemma \ref{lem:intersect} from Section \ref{sec:const}, which will let us limit the structure of the optimal solution in a way that will allow us to actually find it. This lemma works not only for the unit-length case, but also for the arbitrary-length case.

\subsubsection{Proof of Lemma \ref{lem:intersect}:}

\intersectproof\qed

\subsubsection{Proof of Theorem \ref{thm:const}:}

\constproof

\subsection{Star Demand Graphs ($\textsc{SLSN}_{\mathcal{C}^*}$)}\label{asec:star}

\starproof

\section{Proofs in Section \ref{sec:hard}}

\subsection{Proof of Lemma \ref{lem:hk}}\label{sec:hk}

\hkproof\qed

\subsection{Proof of Claims in Case \ref{case:disjoint}}\label{sec:claim}

\subsubsection{Proof of Claim \ref{claim:leafPath}}

\leafpathproof\qed

\subsubsection{Proof of Claim \ref{claim:singlePath}}

For the path connecting $y_0$ and $y_k$, we first prove another claim.

\excludeclaim

\singlepathproof\qed

\casess

\fi

\section{Algorithms for Arbitrary-Length Arbitrary-Cost \textsc{SLSN}}\label{sec:approxAlgorithm}

The idea for the algorithms for arbitrary-length arbitrary-cost \textsc{SLSN} is the same as that for unit-length arbitrary cost. However, the arbitrary-lengths increase the difficulty of the problem. For example, we cannot use Bellman-Ford algorithm to find the lowest cost path within a certain distance bound. We also cannot go over all the possible lengths of the paths in the dynamic programming algorithm, because it will take exponential time.

In order to recover from this, we utilize some techniques in the $(1+\varepsilon)$-approximation algorithm for the \textsc{Restricted Shortest Path} problem, where the problem is the special case of \textsc{SLSN} with $p=1$.

\subsection{Preliminaries}\label{app:pre}

In this section, we will introduce two algorithms: the first one gives a bound on the optimal solution of the \textsc{SLSN} problem, and the second one finds the shortest path under some flexible cost constraint. The algorithms are similar to the algorithms for the \textsc{Restricted Shortest Path} problem.

Given an \textsc{SLSN} instance, the first algorithm $OptLow$ orders all the edges in $E$ by cost and starts from the lowest one. In each iteration $i$, let $e_i$ be the edge considered, and let $G_{e_i}$ be the graph that contains all edges in $E$ with cost at most $c(e_i)$. The algorithm checks if $G_{e_i}$ contains a feasible solution of the \textsc{SLSN} instance, and returns $C=c(e_i)$ if a solution exists. The pseudocode is in Algorithm \ref{alg:opt}.

\begin{algorithm}
\caption{$OptLow(G=(V,E),c,l,H)$}\label{alg:opt}
\begin{algorithmic}
\State Order all the edges in $E$ by the cost and get $c(e_1)\le c(e_2)\le\mathellipsis\le c(e_{|E|})$
\For{$i=1,\mathellipsis,|E|$}
	\State $G_{e_i}\gets$ the graph that contains all edges in $E$ with cost at most $c(e_i)$
	\If{$G_{e_i}$ has a feasible solution}
		\State $C\gets c(e_i)$
		\State {\bf break}
	\EndIf 
\EndFor
\\\Return $C$
\end{algorithmic}
\end{algorithm}

\begin{lemma}\label{lem:opt}
Let $C$ be the solution returned by Algorithm \ref{alg:opt} and $OPT$ be the cost of the optimal solution for the \textsc{SLSN} instance $(G=(V,E),c,l,H)$. Then $C\le OPT\le n^2C$, and the running time is polynomial in $n$.
\end{lemma}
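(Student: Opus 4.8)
The plan is to prove the two inequalities $C \le OPT$ and $OPT \le n^2 C$ separately, and then observe that the loop in Algorithm~\ref{alg:opt} runs for at most $|E|$ iterations, each of which is a polynomial-time feasibility test. For the lower bound $C \le OPT$, I would take an optimal solution $S^*$ and let $e^*$ be a maximum-cost edge appearing in $S^*$. Since every edge of $S^*$ has cost at most $c(e^*)$, the subgraph $G_{e^*}$ (the edges of cost at most $c(e^*)$) contains $S^*$, hence $G_{e^*}$ admits a feasible solution. Writing $e^* = e_j$ in the sorted order used by the algorithm, the loop must therefore break at some iteration $i \le j$, so $C = c(e_i) \le c(e_j) = c(e^*)$. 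Since edge costs are positive and $e^* \in S^*$, we have $c(e^*) \le \sum_{e \in S^*} c(e) = OPT$, and hence $C \le OPT$.

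For the upper bound, the key point I would use is that when the loop breaks at iteration $i$, the \emph{entire} subgraph $G_{e_i}$ is already a feasible solution of the \textsc{SLSN} instance: feasibility is monotone under adding edges (extra edges can only shorten shortest-path distances), so if $G_{e_i}$ contains any feasible subgraph then $G_{e_i}$ itself is feasible --- and this is exactly what the algorithm's feasibility test verifies. Consequently $OPT \le \sum_{e \in E(G_{e_i})} c(e)$. Every edge of $G_{e_i}$ has cost at most $c(e_i) = C$, and $|E(G_{e_i})| \le |E| \le \binom{n}{2} \le n^2$, so $OPT \le n^2 C$. (One could instead invoke Lemma~\ref{lem:intersect} to replace $G_{e_i}$ by a union of $p$ simple $s_i$--$t_i$ paths, but that union is again merely a subgraph of $G$, so this does not improve the $n^2$ bound and is not needed here.)

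Finally, for the running time: sorting the edges takes $O(|E| \log |E|)$ time, and in each of the at most $|E|$ iterations we build $G_{e_i}$ (incrementally) and test feasibility by running a shortest-path computation --- Dijkstra suffices since all lengths are positive --- from each source $s_j$ and comparing $d_{G_{e_i}}(s_j, t_j)$ with $L$; there are $p \le \binom{n}{2}$ such demands. All of these quantities are polynomial in $n$, so the algorithm runs in $\mathrm{poly}(n)$ time. I do not expect a genuine obstacle here; the only points that need care are getting the direction of the monotonicity right --- i.e. that ``$G_{e_i}$ has a feasible solution'' is equivalent to ``$G_{e_i}$ is itself feasible'' --- and using positivity of the costs in the $C \le OPT$ argument.
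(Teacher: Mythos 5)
Your proof is correct and takes essentially the same approach as the paper: observe that removing all edges of cost at least $C$ makes the instance infeasible (giving $C \le OPT$), observe that $G_{e_i}$ itself is a feasible solution with at most $\binom{n}{2}$ edges each of cost at most $C$ (giving $OPT \le n^2C$), and note the loop is bounded by $|E|$ polynomial-time iterations. Your write-up is merely more explicit, in particular in spelling out the monotonicity of feasibility under adding edges and why the algorithm must break by the iteration corresponding to the most expensive edge of an optimal solution.
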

\begin{proof}
Since a graph in which all edges have cost less than $C$ does not have a feasible solution, we know that every feasible solution contains an edge which has cost at least $C$, so the optimal solution has cost at least $C$.

For the upper bound, because graph $G_{e_i}$ contains a feasible subgraph, so there is a feasible solution with cost at most $n^2C$, thus the optimal solution has cost at most $n^2C$.

Because we can use a standard shortest path algorithm for each pair of demands to test the feasibility, we can see that the running time is polynomial in $n$.
\end{proof}

Following is the second algorithm, which is aiming to find a low cost path under certain distance bound. Algorithm \ref{alg:minDist} is essentially a dynamic programming algorithm on graph $G$, which first defined new costs $c_e^*=\left\lceil\frac{n\cdot c(e)}{\varepsilon C}\right\rceil$ for all $e\in E$, and then calculate the shortest path from $s$ to $t$ with bounded new cost $\left\lfloor\frac{n}{\varepsilon}\right\rfloor$.

\begin{algorithm}
\caption{$MinDist(G=(V,E),c,l,s,t,\varepsilon,C)$}\label{alg:minDist}
\begin{algorithmic}
\For{$e\in E$}
	\State $c_e^*\gets\left\lceil\frac{n\cdot c(e)}{\varepsilon C}\right\rceil$
\EndFor
\For{$v\in V\setminus\{s\}$}
	\State $d(v,0)\gets\infty$
\EndFor
\State $d(s,0)\gets0$
\For{$i=1,\mathellipsis,\left\lfloor\frac{n}{\varepsilon}\right\rfloor$}
	\For{$v\in V$}
		\State $d(v,i)\gets\min_{e=(u,v)\in E,c_e^*\le i}l(e)+d(u,i-c_e^*)$
	\EndFor
\EndFor
\State $C^*\gets\arg\min_{i\in[\left\lfloor\frac{n}{\varepsilon}\right\rfloor]}d(t,i)$
\If{$d(t,C^*)<\infty$}
\\\quad\ \ \Return the corresponding path for $d(t,C^*)$
\Else
\\\quad\ \ \Return $\varnothing$
\EndIf
\end{algorithmic}
\end{algorithm}

\begin{lemma}\label{lem:minDist}
Given a graph $G=(V,E)$, a cost function $c$, a length function $l$, a pair of vertices $(s,t)$, a constant $\varepsilon>0$, and a cost bound $C$, if there exists a path between $s$ and $t$ with cost at most $(1-2\varepsilon)C$ and distance $D$, then Algorithm \ref{alg:minDist} returns a path between $s$ and $t$ with cost at most $C$ and distance at most $D$. The running time is polynomial in $\frac{n}{\varepsilon}$.
\end{lemma}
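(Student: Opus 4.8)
The plan is to adapt the standard cost-scaling/rounding argument used for \textsc{Restricted Shortest Path}~\cite{lorenz2001simple}. First I would record two routine facts. The dynamic program in Algorithm~\ref{alg:minDist} computes, for each vertex $v$ and budget $i$, the minimum length of an $s$–$t$-type path (an $s$–$v$ path) whose total scaled cost $\sum_e c_e^*$ is (at most) $i$; this is an easy induction on $i$, and because every $c_e^* \ge 1$ and all lengths are strictly positive, a minimum-length such path is automatically simple, so the object reconstructed at the end is genuinely an $s$–$t$ path and its scaled cost is at most $C^*$. I would also note that one may assume $\varepsilon < 1/2$, since otherwise $(1-2\varepsilon)C \le 0$ and the hypothesis is vacuous.

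The first real step is to show that the hypothesized good path is not lost by the rounding. Let $P^*$ be the assumed $s$–$t$ path with $\sum_{e \in P^*} c(e) \le (1-2\varepsilon)C$ and $\mathrm{length}(P^*) = D$; deleting cycles only decreases both cost and length, so I may take $P^*$ simple, hence with at most $n-1$ edges. Using $c_e^* = \lceil n c(e)/(\varepsilon C)\rceil \le n c(e)/(\varepsilon C) + 1$,
\[
\sum_{e \in P^*} c_e^* \;\le\; \frac{n}{\varepsilon C}\sum_{e \in P^*} c(e) + |P^*| \;\le\; \frac{n(1-2\varepsilon)}{\varepsilon} + (n-1) \;<\; \frac{n}{\varepsilon} - n \;\le\; \left\lfloor \frac{n}{\varepsilon} \right\rfloor .
\]
Thus $P^*$ is among the paths accounted for by $d\bigl(t,\lfloor n/\varepsilon\rfloor\bigr)$, so $d\bigl(t,\lfloor n/\varepsilon\rfloor\bigr) \le D$, and since $C^* = \arg\min_i d(t,i)$ we get $d(t,C^*) \le D < \infty$. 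Hence the algorithm does return a path, and that path has length at most $D$.

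The second step is to bound the real cost of the returned path $P$. Its scaled cost satisfies $\sum_{e \in P} c_e^* \le C^* \le \lfloor n/\varepsilon\rfloor \le n/\varepsilon$. Since $c_e^* \ge n c(e)/(\varepsilon C)$, i.e.\ $c(e) \le (\varepsilon C/n)\,c_e^*$, summing gives $\sum_{e \in P} c(e) \le (\varepsilon C/n)\sum_{e\in P} c_e^* \le (\varepsilon C/n)(n/\varepsilon) = C$, which is exactly what is needed. For the running time, computing the $c_e^*$ is immediate, the table has $n\cdot(\lfloor n/\varepsilon\rfloor + 1)$ entries each filled with $O(|E|)$ work, and tracing back the path is linear, so the whole procedure is polynomial in $n/\varepsilon$.

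I do not expect a genuine obstacle here; the one place that needs care is the arithmetic in the display above. The ceiling in $c_e^*$ contributes an error of at most $1$ per edge, hence at most $n-1$ in total, and it is precisely the factor $(1-2\varepsilon)$ rather than $(1-\varepsilon)$ in the hypothesis that leaves room to absorb this error: $(1-2\varepsilon)C\cdot\frac{n}{\varepsilon C} + n = \frac{n}{\varepsilon} - n$, which still fits under the budget $\lfloor n/\varepsilon\rfloor$. Everything else — the correctness induction for the DP and the symmetric rounding estimate for the output cost — is routine.
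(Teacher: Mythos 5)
Your proof is correct and matches the paper's own argument essentially line for line: the same induction establishing that the DP table $d(v,i)$ records shortest lengths under the scaled cost budget, the same calculation showing the hypothesized path survives the ceiling-rounding because its scaled cost stays under $\lfloor n/\varepsilon\rfloor$, and the same reverse bound $c(e)\le(\varepsilon C/n)c_e^*$ to recover the real cost bound $C$ on the output. The only differences are cosmetic refinements — you explicitly take the witness path simple (at most $n-1$ edges rather than the paper's looser $n$), you note that $\varepsilon<1/2$ may be assumed without loss, and you observe that the returned object is automatically a simple path — none of which change the structure of the argument.
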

\begin{proof}
The running time of this algorithm is clearly polynomial in $\frac{n}{\varepsilon}$ because there is only $poly(\frac{n}{\varepsilon})$ slots of $d(v,i)$.

We first claim that $d(v,i)$ is the minimal length of a path from $s$ to $v$ under new cost $i$. This can be proven easily by induction. The base case is $i=0$, where $d(s,0)=0$, and for other $v\in V\setminus\{s\}$, the minimal length is infinity. For the inductive step, we consider the last edge $\{u,v\}$ of the shortest path from $s$ to $v$ under new cost $i$. By removing this edge from the path, it must be a path from $s$ to $u$ under new cost $i-c_{\{u,v\}}^*$. Therefore our claim holds.

If there exists a path between $s$ and $t$ with cost at most $(1-2\varepsilon)C$ and distance $D$, let the edge set of this path be $S$. Then because $S$ has at most $n$ edges, we know that the new total cost of $S$ is
\[\sum_{e\in S}c_e^*=\sum_{e\in S}\left\lceil\frac{n\cdot c(e)}{\varepsilon C}\right\rceil\le\frac{n\cdot\sum_{e\in S}c(e)}{\varepsilon C}+n\le\frac{n(1-2\varepsilon)C}{\varepsilon C}+n\le\left\lfloor\frac{n}{\varepsilon}\right\rfloor.\]
Thus the algorithm must return a non-empty set $S'$, which connects $s$ and $t$ with distance at most $D$.

Now we can calculate the original cost of path $S'$. We know that
\[\left\lfloor\frac{n}{\varepsilon}\right\rfloor\ge\sum_{e\in S'}c_e^*=\sum_{e\in S'}\left\lceil\frac{n\cdot c(e)}{\varepsilon C}\right\rceil>\frac{n\cdot\sum_{e\in S'}c(e)}{\varepsilon C}.\]
Thus
\[\sum_{e\in S'}c(e)<\left\lfloor\frac{n}{\varepsilon}\right\rfloor\cdot\frac{\varepsilon C}{n}\le\frac{n}{\varepsilon}\cdot\frac{\varepsilon C}{n}=C.\]

Therefore, the algorithm returns a path $S'$, which has cost at most $C$ and distance at most $D$.
\end{proof}

\subsection{Constant Number of Demands}\label{sec:approxConst}

With the algorithms in Section \ref{app:pre}, we can now introduce our \textsc{FPTAS} algorithm for arbitrary-length arbitrary-cost $\textsc{SLSN}_{\mathcal{C}_\lambda}$. Since Lemma \ref{lem:intersect} still holds for the arbitrary length case, we can use the same idea as in Algorithm \ref{alg:const}. However, in the arbitrary length case we can not guess the exact length of each subpath, because there are too many possible lengths. We even can not guess an approximate length for each subpath, because any violation on the length bound may make the solution infeasible. Therefore, we switch to guess the approximate cost of each subpath to solve this problem.

The algorithm for arbitrary-length arbitrary-cost $\textsc{SLSN}_{\mathcal{C}_\lambda}$ first bound the optimal cost using Algorithm \ref{alg:opt}. Then guess the endpoint set $Q$ and the set $E'$ which intuitively represents how the vertices in $Q\cup\bigcup_{i=1}^p\{s_i,t_i\}$ connected to each other, the same way as in Algorithm \ref{alg:const}. After that, the algorithm switch to guess the cost $c'$ of each subpath, basically up to a $(1+\varepsilon)$ error. Finally, the algorithm connect each pair of $u,v\in V$ where $\{u,v\}\in E'$ by shortest paths with restricted cost $c'(\{u,v\})$ using Algorithm \ref{alg:minDist}, check the feasibility, and output the optimal solution. The detailed algorithm is in Algorithm \ref{alg:approxConst}.

\begin{algorithm}
\caption{arbitrary-length arbitrary-cost $\textsc{SLSN}_{\mathcal{C}_\lambda}$}\label{alg:approxConst}
\begin{algorithmic}
\State $C\gets OptLow(G=(V,E),c,l,H)$
\State $M\gets\sum_{e\in E}c(e)$
\State $S\gets E$
\For{$Q\subseteq V$ where $|Q|\le p(p-1)$}
	\State $Q'\gets Q\cup\bigcup_{i=1}^p\{s_i,t_i\}$
	\For{$E'\subseteq\{\{u,v\}\mid u,v\in Q',u\ne v\}$, $c':E'\rightarrow\{-\left\lceil2\log_{1+\varepsilon}n\right\rceil,\mathellipsis,-1,0,1,\mathellipsis,\left\lceil2\log_{1+\varepsilon}n\right\rceil+3\}$}
		\State $T\gets\varnothing$
		\For{$\{u,v\}\in E'$}
			\State $T\gets T\cup MinDist(G,c,l,u,v,\varepsilon,(1+\varepsilon)^{c'(\{u,v\})}C)$
		\EndFor
		\If{$T$ is a feasible solution and $\sum_{e\in T}c(e)<M$}
			\State $M\gets\sum_{e\in T}c(e)$
			\State $S\gets T$
		\EndIf
	\EndFor
\EndFor
\\
\Return $S$
\end{algorithmic}
\end{algorithm}

\begin{claim}\label{claim:approxConstTime}
The running time for Algorithm \ref{alg:approxConst} is is $(\frac{n}{\varepsilon})^{O(p^4)}$.
\end{claim}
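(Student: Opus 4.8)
The plan is to bound the running time by the standard approach: multiply the number of iterations of each nested loop by the cost of the work performed inside, just as in the proof of Claim \ref{claim:constTime}, but now accounting for the new cost-guessing loop and for the two subroutines $OptLow$ and $MinDist$.

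First I would dispose of the preprocessing: the single call to $OptLow$ runs in $poly(n)$ time by Lemma \ref{lem:opt}, so it is absorbed into the final bound. For the outer loop there are at most $n^{p(p-1)}$ choices of $Q$, and for each of them $|Q'| \le p(p-1)+2p = p(p+1)$, so the number of unordered pairs drawn from $Q'$ is at most $\binom{p(p+1)}{2} = O(p^4)$; hence there are at most $2^{O(p^4)}$ choices of $E'$. For the cost-guessing loop, the codomain of $c'$ has size $2\lceil 2\log_{1+\varepsilon}n\rceil + 4 = O(\log_{1+\varepsilon}n) = O\!\left(\tfrac{\log n}{\varepsilon}\right)$, using $\ln(1+\varepsilon) = \Theta(\varepsilon)$ for small $\varepsilon$, so with $|E'| \le O(p^4)$ there are at most $O\!\left(\tfrac{\log n}{\varepsilon}\right)^{O(p^4)}$ choices of $c'$. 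Inside the innermost block there are $|E'| = O(p^4)$ calls to $MinDist$, each running in $poly\!\left(\tfrac{n}{\varepsilon}\right)$ time by Lemma \ref{lem:minDist}, followed by a feasibility check via shortest-path computations in $poly(n)$ time.

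Collecting these, the total running time is at most
\[
poly(n) + n^{p(p-1)}\cdot 2^{O(p^4)}\cdot O\!\left(\tfrac{\log n}{\varepsilon}\right)^{O(p^4)}\cdot O(p^4)\cdot poly\!\left(\tfrac{n}{\varepsilon}\right).
\]
Since $\tfrac{\log n}{\varepsilon}\le\tfrac{n}{\varepsilon}$, $2\le\tfrac{n}{\varepsilon}$, $n^{p(p-1)} = n^{O(p^2)}\le\left(\tfrac{n}{\varepsilon}\right)^{O(p^4)}$, and $poly\!\left(\tfrac{n}{\varepsilon}\right) = \left(\tfrac{n}{\varepsilon}\right)^{O(1)}$, every factor is bounded by $\left(\tfrac{n}{\varepsilon}\right)^{O(p^4)}$, and their product is still $\left(\tfrac{n}{\varepsilon}\right)^{O(p^4)}$, as claimed. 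There is no genuine obstacle in this argument; the only points requiring a moment of care are verifying that $\log_{1+\varepsilon}n = O\!\left(\tfrac{\log n}{\varepsilon}\right)$ so the $c'$-loop does not blow up, and invoking Lemma \ref{lem:minDist} to get the $poly\!\left(\tfrac{n}{\varepsilon}\right)$ (rather than $poly(n)$) bound on each $MinDist$ call.
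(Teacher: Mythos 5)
Your proposal is correct and follows essentially the same approach as the paper's proof: bound the number of iterations of each nested loop ($n^{p(p-1)}$ choices of $Q$, $2^{O(p^4)}$ choices of $E'$, $O(\log_{1+\varepsilon}n)^{O(p^4)}$ choices of $c'$), invoke Lemma \ref{lem:opt} for the single $OptLow$ call and Lemma \ref{lem:minDist} for each $MinDist$ call, and multiply. The only differences are cosmetic (you write $+4$ for the codomain size where the paper writes $+3$, and you make the step $\log_{1+\varepsilon}n = O(\log n/\varepsilon)$ explicit), neither of which affects the bound.
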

\begin{proof}
We know that $OptLow$ can be done in polynomial time by Lemma \ref{lem:opt}. Similar to Algorithm \ref{alg:const}, we can also see that there are at most $n^{p(p-1)}$ possible $Q$, and for each $Q$ there are at most $2^{(p(p-1)+2p)^2}$ possible $E'$, and at most $(2\cdot\left\lceil2\log_{1+\varepsilon}n\right\rceil+3)^{(p(p-1)+2p)^2}$ possible $c'$. The algorithm \ref{alg:minDist} in the inner loop takes $poly(\frac{n}{\varepsilon})$ running time. Thus the total running time is at most $n^{p(p-1)}\cdot 2^{(p(p+1))^2}\cdot(\frac{5\log n}{\varepsilon})^{(p(p+1))^2}\cdot poly(\frac{n}{\varepsilon})+poly(n)$.
\end{proof}

\subsubsection{Proof of Theorem \ref{thm:approxConst}:}

The running time has been proven in Claim \ref{claim:approxConstTime}, which is polynomial in $\frac{n}{\varepsilon}$ if $\lambda\ge p$ is a constant. The correctness is also similar to Algorithm \ref{alg:const}. Because the algorithm only returns feasible solution, so we only need to show that this algorithm returns a solution with cost at most the cost of the optimal solution.

We define $S^*$, $P_i^*$, $Q^*$, $Q'^*$, $l'^*$, $P_{\{u,v\}}^*$ the same as in the proof of Theorem \ref{thm:const}. We further define $c'^*(\{u,v\})$ as the cost of $P_{\{u,v\}}^*$ for each $\{u,v\}\in E'$.

Since the algorithm iterates over all possibilities for $Q$, $E'$ and $c'$, there is some iteration in which $Q=Q'^*$, $E'=E'^*$, and  $c'\equiv\max\left\{\left\lceil\log_{1+\varepsilon}\frac{c'^*}{(1-2\varepsilon)C}\right\rceil,-\left\lceil2\log_{1+\varepsilon}\frac{n^2}{\varepsilon}\right\rceil\right\}$. The reason that this $c'$ must have been iterated is because of Lemma \ref{lem:opt}. We can see that, $c'^*(\{u,v\})\le OPT\le n^2C$ for every $\{u,v\}\in E'^*$, and so that
\[\left\lceil\log_{1+\varepsilon}\frac{c'^*(\{u,v\})}{(1-2\varepsilon)C}\right\rceil\le\left\lceil\log_{1+\varepsilon}\frac{n^2C}{(1-2\varepsilon)C}\right\rceil\le\left\lceil2\log_{1+\varepsilon}n\right\rceil+3.\]
We will show that the algorithm also must find a feasible solution in this iteration.

For each $i\in[p]$, the path $P_i^*$ is partitioned to edge-disjoint subpaths by $Q'^*$. Let $q_i$ be the number of subpaths, and let the endpoints be $s_i=v_{i,0},v_{i,1},\mathellipsis,v_{i,q_i-1},v_{i,q_i}=t_i$. We further let these subpaths be $P_{\{s_i,v_{i,1}\}}^*,P_{\{v_{i,1},v_{i,2}\}}^*,\mathellipsis,P_{\{v_{i,q_i-1},t_i\}}^*$. By the definition of $l'^*$ and $c'^*$, for each $j\in[q_i]$, there must be a path between $v_{i,j-1}$ and $v_{i,j}$ with length at most $l'^*(\{v_{i,j-1},v_{i,j}\})$ and cost at most
\[c'^*(\{v_{i,j-1},v_{i,j}\})\le(1-2\varepsilon)\cdot(1+\varepsilon)^{\{c'(v_{i,j-1},v_{i,j}\})}C.\]

Therefore, by Lemma \ref{lem:minDist} we know that the edge set $T$ in this iteration must contains a path between $u$ and $v$ with length at most $l'^*(\{v_{i,j-1},v_{i,j}\})$ and cost at most
\[(1+\varepsilon)^{c'(\{v_{i,j-1},v_{i,j}\})}C\le\max\left\{\frac{c'^*(\{v_{i,j-1},v_{i,j}\})}{1-2\varepsilon},\frac{\varepsilon C}{n^2}\right\}.\]
Because the summation $\sum_{j=1}^{q_i}l'^*(\{v_{i,j-1},v_{i,j}\})$ is at most $L$, we know that the edge set $T$ in this iteration must satisfies demand $\{s_i,t_i\}$ for each $i\in[p]$. Therefore it is a feasible solution.

We can also see that the cost of the edge set $T$ in this iteration is at most
\begin{align}
\sum_{\{u,v\}\in E'^*}\max\left\{\frac{c'^*(\{u,v\})}{1-2\varepsilon},\frac{\varepsilon C}{n^2}\right\}&\le\frac{1}{1-2\varepsilon}\sum_{e\in E'^*}c'^*(\{u,v\})+\varepsilon C\nonumber\\
&\le\frac{OPT}{1-2\varepsilon}+\varepsilon C\label{eqn:c}\\
&\le\frac{OPT}{1-2\varepsilon}+\varepsilon OPT\label{eqn:C}\\
&\le(1+4\varepsilon)OPT.\nonumber
\end{align}

Equation (\ref{eqn:c}) is because the all the $P_{u,v}^*$ are edge-disjoint, and thus we have $OPT=\sum_{(u,v)\in E'^*}c'^*(\{u,v\})$. Equation (\ref{eqn:C}) is because we know that $OPT\ge C$ by Lemma \ref{lem:opt}.

Therefore, the algorithm outputs a $(1+4\varepsilon)$-approximation of the optimal solution. By replacing $\varepsilon$ with $\frac{\varepsilon}{4}$ in the whole algorithm, we get a $(1+\varepsilon)$-approximation.
\qed

\subsection{Star Demand Graphs ($\textsc{SLSN}_{\mathcal{C}^*}$)}\label{sec:approxStar}

For the case that the demand graph is a star, let $s=s_1=s_2=\mathellipsis=s_p$, and $T=\{t_1,\mathellipsis,t_p\}$.

We first bound the optimal cost using Algorithm \ref{alg:opt}, and then assign a new cost for each edge depending our bound of the optimal cost. Finally, we use a dynamic programming algorithm which is similar to the algorithm for \textsc{DST} to solve the problem under the new edge costs, and we can show that it is a $(1+\varepsilon)$-approximation to the optimal solution in the original edge cost. The detailed algorithm is in Algorithm \ref{alg:approxStar}.

Here we are aiming to set $d(v,R,j)$ as the smallest height of a tree, such that the root is $v$, the total new cost is at most $j$, and it contains all the vertices in $R$. Then, we can find the minimal $j$ which makes $d(v,T,j)\le L$, and this $j$ is the minimal cost of a feasible solution under the new cost. Note that we only need to consider the height of trees because the optimal solution in this case is always a tree.

\begin{algorithm}
\caption{arbitrary-length arbitrary-cost $\textsc{SLSN}_{\mathcal{C}^*}$}\label{alg:approxStar}
\begin{algorithmic}
\State $C\gets OptLow(G=(V,E),c,l,H)$
\For{$e\in E$}
	\State $c_e^*\gets\left\lceil\frac{n\cdot c(e)}{\varepsilon C}\right\rceil$
\EndFor
\For{$v\in V,R\subseteq T$, and $j\in[\left\lceil\frac{n^3(1+\varepsilon)}{\varepsilon}\right\rceil]$}
	\State $d(v,R,j)\gets\begin{cases}0,&\mbox{if }|R|=1\mbox{ and }v\in R,\mbox{ or }R=\varnothing\\\infty,&\mbox{otherwise}\end{cases}$
\EndFor
\For{$j=1,\mathellipsis,\left\lceil\frac{n^3(1+\varepsilon)}{\varepsilon}\right\rceil$}
	\For{$i=1,\mathellipsis,p$}
		\For{$v\in V,R\subseteq T$ with $|R|=i$}
			\State $d(v,R,j)\gets\min_{v'\in V,R'\subseteq R,k\le j-c_{\{v,v'\}}^*}(l(\{v,v'\})+\max\{d(v',R'\setminus\{v\},k),d(v',R\setminus R'\setminus\{v\},j-c_{\{v,v'\}}^*-k)\})$
		\EndFor
	\EndFor
\EndFor
\For{$j=1,\mathellipsis,\left\lceil\frac{n^3(1+\varepsilon)}{\varepsilon}\right\rceil$}
	\If{$d(s,T,j)\le L$}
		\\\quad\quad\quad\Return the corresponding tree for $d(s,T,j)$
	\EndIf
\EndFor
\end{algorithmic}
\end{algorithm}

\begin{lemma}\label{lem:tree}
The optimal solution $S^*$ of the $\textsc{SLSN}_{\mathcal{C}^*}$ problem is always a tree.
\end{lemma}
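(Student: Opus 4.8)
The plan is to take an arbitrary optimal solution $S^*$ and extract from it a shortest-path tree rooted at the common source $s = s_1 = \dots = s_p$, then argue this tree is itself an optimal solution (in fact equal to $S^*$). Throughout, recall that $l$ is a positive length function and $c$ is a nonnegative cost function.

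First I would clean up $S^*$: in the subgraph $(V,S^*)$, every edge must lie in the connected component containing $s$. Indeed, any other connected component contains no demand endpoint, since if it contained some $t_i$ then $s$ could not reach $t_i$ within $S^*$ at all, contradicting feasibility; hence deleting such a component preserves feasibility while not increasing cost, and (using positivity of $c$) optimality of $S^*$ forbids its existence. So $(V,S^*)$ has a single nontrivial component, containing $s$ and all the $t_i$.

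Next, inside the component of $s$ I would build a shortest-path tree $T \subseteq S^*$ rooted at $s$ with respect to $l$ (e.g.\ a Dijkstra tree; positivity of $l$ is all that is needed). By the defining property of a shortest-path tree, for every vertex $v$ reachable from $s$ the unique $s$--$v$ path in $T$ has length exactly $d_{S^*}(s,v)$, the distance from $s$ to $v$ in $S^*$. In particular, for each $i \in [p]$ the $s$--$t_i$ path in $T$ has length $d_{S^*}(s,t_i) \le L$ because $S^*$ is feasible, so $T$ is a feasible solution. Since $T \subseteq S^*$ and $c \ge 0$, we get $c(T) \le c(S^*)$, so $T$ is also optimal; and if $c$ is strictly positive, $c(T) = c(S^*)$ forces $T = S^*$, i.e.\ $S^*$ was already a tree (otherwise one can simply take $T$ as the promised optimal tree).

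The only (minor) obstacle is being careful about definitions: I need the shortest-path tree guarantee that ``tree distance from the root equals graph distance'' holds for \emph{every} vertex, and I need the preliminary step ruling out extraneous components so that $T$ is genuinely a subgraph of $S^*$ of no greater cost. If one only assumes costs are nonnegative rather than positive, the statement should be read as ``there is an optimal solution that is a tree,'' which is exactly what the dynamic program in Algorithm~\ref{alg:approxStar} relies on.
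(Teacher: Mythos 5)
Your proof is correct, but it takes a genuinely different route from the paper. The paper's proof is a short corollary of the structural Lemma~\ref{lem:intersect}: it picks paths $P_1,\dots,P_p$ as in that lemma (so that any two paths agree on the subpath between any two shared vertices), observes that an optimal $S^*$ contains no edge outside $\bigcup_i P_i$, and argues that a cycle would force two of these paths to reach some vertex $v \neq s$ by different routes, contradicting the lemma. You instead sidestep Lemma~\ref{lem:intersect} entirely: after discarding components of $(V,S^*)$ not containing $s$ (which optimality and strict positivity of $c$ forbid), you extract a shortest-path tree $T \subseteq S^*$ rooted at $s$, note that $T$ preserves all distances from $s$ and is therefore feasible, and conclude from $T \subseteq S^*$, $c > 0$, and optimality of $S^*$ that $T = S^*$. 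Both proofs are valid. Your approach is more self-contained and elementary, needing only the standard shortest-path-tree property; the paper's approach is one line once Lemma~\ref{lem:intersect} is available and fits the paper's theme of reusing that structural lemma across both the algorithmic and hardness sections. One small remark: since the paper declares $c : E \to \mathbb{R}^+$ (strictly positive costs), your stronger conclusion ``$T = S^*$'' is the relevant one, and the caveat you add about merely nonnegative costs, while correct, is not needed in this setting.
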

\begin{proof}
We can assign path $P_1,\mathellipsis P_p$ as in the Lemma \ref{lem:intersect}. Because $S^*$ is an optimal solution, it will not contain any edge other than the edges in $P_1,\mathellipsis P_p$. If there is a cycle in $S^*$, then there are two paths $P_i$ and $P_j$ intersect at a vertex $v$ other than the root $s$, and the paths to $v$ are different, which contradict with Lemma \ref{lem:intersect}. Therefore $S^*$ is always a tree.
\end{proof}

We again first prove the running time.

\begin{claim}\label{claim:approxStarTime}
Algorithm \ref{alg:approxStar} runs in time $O(4^p\cdot poly(\frac{n}{\varepsilon}))$.
\end{claim}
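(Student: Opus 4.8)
The plan is to bound the running time of Algorithm~\ref{alg:approxStar} phase by phase. First, the call to $OptLow$ runs in $poly(n)$ time by Lemma~\ref{lem:opt}, and computing the rescaled costs $c_e^* = \lceil n\, c(e)/(\varepsilon C)\rceil$ over all $e \in E$ also takes $poly(n)$ time. The dynamic-programming table has one cell $d(v,R,j)$ for each $v \in V$, each $R \subseteq T$, and each $j \in \{1,\dots,\lceil n^3(1+\varepsilon)/\varepsilon\rceil\}$; since $|T| = p$, there are $n \cdot 2^p \cdot \lceil n^3(1+\varepsilon)/\varepsilon\rceil = 2^p \cdot poly(\frac{n}{\varepsilon})$ cells, so the initialization loop stays within that bound.

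The heart of the argument is the cost of filling in the table. I would argue that evaluating the recurrence for a single cell $d(v,R,j)$ costs at most $2^p \cdot poly(\frac{n}{\varepsilon})$ time: the minimization ranges over $v' \in V$ (at most $n$ choices), over $R' \subseteq R$ (at most $2^{|R|} \le 2^p$ choices), and over the split point $k$ with $0 \le k \le j - c^*_{\{v,v'\}} \le \lceil n^3(1+\varepsilon)/\varepsilon\rceil$ (at most $poly(\frac{n}{\varepsilon})$ choices), while each of the two look-ups $d(v',R'\setminus\{v\},k)$ and $d(v',R\setminus R'\setminus\{v\},j-c^*_{\{v,v'\}}-k)$ is $O(1)$. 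Multiplying the $2^p \cdot poly(\frac{n}{\varepsilon})$ cells by this $2^p \cdot poly(\frac{n}{\varepsilon})$ per-cell cost yields the claimed $O(4^p \cdot poly(\frac{n}{\varepsilon}))$ bound for this phase. (If one wants a sharper count, note that $\sum_{R \subseteq T} 2^{|R|} = 3^p$, so the same reasoning actually gives $3^p \cdot poly(\frac{n}{\varepsilon})$, but $4^p$ is all we need.) Finally, the closing loop that scans $j$ and returns the tree realizing $d(s,T,j) \le L$ runs for $poly(\frac{n}{\varepsilon})$ iterations, and reconstructing that tree from back-pointers stored during the DP adds only $poly(n)$, so neither affects the asymptotics; summing over all phases gives $O(4^p \cdot poly(\frac{n}{\varepsilon}))$.

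There is no genuine obstacle here — the statement is a routine accounting exercise — but the one place to be careful is precisely the per-cell bound: the nested enumeration of subsets $R' \subseteq R$ sitting inside the outer enumeration of $R \subseteq T$ is exactly where the $4^p$ factor is born, so the write-up should make that product explicit. It is also worth remarking in passing that every cell referenced on the right-hand side of the recurrence has a strictly smaller budget parameter $j$ (since $c^*_e \ge 1$ for every edge, as $c(e) > 0$ and $C > 0$), so those values are already available when a cell is processed and the loop order of Algorithm~\ref{alg:approxStar} suffices; this is a correctness remark, but it confirms that no hidden recomputation inflates the time bound.
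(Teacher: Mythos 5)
Your proof is correct and follows essentially the same accounting as the paper: bound the number of DP cells $d(v,R,j)$ by $2^p \cdot poly(\frac{n}{\varepsilon})$ and the per-cell evaluation (ranging over $v'$, $R'\subseteq R$, and $k$) by $2^p \cdot poly(\frac{n}{\varepsilon})$, which multiplies to the claimed $O(4^p\cdot poly(\frac{n}{\varepsilon}))$. Your sharper observation that $\sum_{R\subseteq T}2^{|R|}=3^p$ and the remark about the decreasing budget parameter $j$ are pleasant extras not present in the paper's terse proof, but the core argument is the same.
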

\begin{proof}
Because running algorithm $OptLow$ and setting new costs runs in polynomial time, we only need to prove the time of the dynamic programming part. We can see that, $d$ has at most $n\cdot 2^p\cdot\left\lceil\frac{n^3(1+\varepsilon)}{\varepsilon}\right\rceil$ slots, and filling each of them takes at most $n\cdot 2^p$ time, so the total running time is $O(4^p\cdot poly(\frac{n}{\varepsilon}))$.
\end{proof}

We then prove the correctness of the dynamic programming part.

\begin{lemma}\label{lem:dp}
For each $v\in V$, $S\subseteq T$, and $j\in[\left\lceil\frac{n^3(1+\varepsilon)}{\varepsilon}\right\rceil]$, the $d(v,R,j)$ stores the smallest height of a tree, such that the root is $v$, the total new cost is at most $j$, and it contains all the vertices in $R$.
\end{lemma}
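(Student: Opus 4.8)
The plan is to prove the slightly stronger statement that $d(v,R,j)=h(v,R,j)$, where $h(v,R,j)$ denotes the minimum, over all subtrees $\mathcal{T}\subseteq G$ that are rooted at $v$, contain every vertex of $R$, and satisfy $\sum_{e\in\mathcal{T}}c^*_e\le j$, of the (length-weighted) height $\mathrm{ht}(\mathcal{T})$ of $\mathcal{T}$, with $h(v,R,j)=\infty$ when no such subtree exists. Since the cost function is strictly positive, $c^*_e=\lceil nc(e)/(\varepsilon C)\rceil\ge 1$ for every edge $e$, so each table lookup on the right-hand side of the recurrence in Algorithm \ref{alg:approxStar} has third coordinate strictly less than $j$. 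Hence the statement can be proved by induction on $j$ (the inner loop over $|R|$ is then not needed for well-foundedness). The cases $R\subseteq\{v\}$, together with the case in which $j$ is too small to afford any edge, are precisely the entries fixed by the initialization step, where both $d$ and $h$ equal $0$, witnessed by the one-vertex tree; this is the base of the induction.

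For the inductive step I would establish the two inequalities separately. To show $d(v,R,j)\le h(v,R,j)$, fix a height-minimizing subtree $\mathcal{T}$ attaining $h(v,R,j)$ (there is nothing to prove if this is $\infty$) and root it at $v$. Since $R\not\subseteq\{v\}$, the vertex $v$ has a child; I would choose a child $v'$ of $v$ and a partition into $R'$ and $R\setminus R'$ of the terminals of $\mathcal{T}$ lying strictly below the edge $\{v,v'\}$, so that $\mathcal{T}$ is displayed as the edge $\{v,v'\}$ together with two edge-disjoint subtrees $\mathcal{T}_1,\mathcal{T}_2$ hanging at $v'$ and spanning $R'\setminus\{v\}$ and $(R\setminus R')\setminus\{v\}$ respectively. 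Because the two subtrees are edge-disjoint (and disjoint from $\{v,v'\}$), their $c^*$-costs sum to at most $j-c^*_{\{v,v'\}}$, so there is a $k$ for which one has $c^*$-cost at most $k$ and the other at most $j-c^*_{\{v,v'\}}-k$. Applying the induction hypothesis to each subtree (each has third coordinate strictly below $j$) and substituting into the recurrence gives $d(v,R,j)\le l(\{v,v'\})+\max\{\mathrm{ht}(\mathcal{T}_1),\mathrm{ht}(\mathcal{T}_2)\}\le \mathrm{ht}(\mathcal{T})$. The special case $|R|=1$ with the single terminal different from $v$ is just the observation that, unwound, the recurrence performs a cost-bounded shortest-path computation, and is covered by the same argument.

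For the reverse inequality $h(v,R,j)\le d(v,R,j)$, suppose $d(v,R,j)=H<\infty$ and let $v',R',k$ realize the minimum in the recurrence. By the induction hypothesis there are genuine subtrees $\mathcal{T}_1$ rooted at $v'$ spanning $R'\setminus\{v\}$ with $c^*$-cost at most $k$ and $\mathrm{ht}(\mathcal{T}_1)\le d(v',R'\setminus\{v\},k)$, and $\mathcal{T}_2$ rooted at $v'$ spanning $(R\setminus R')\setminus\{v\}$ with $c^*$-cost at most $j-c^*_{\{v,v'\}}-k$ and $\mathrm{ht}(\mathcal{T}_2)\le d(v',(R\setminus R')\setminus\{v\},j-c^*_{\{v,v'\}}-k)$. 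The subgraph $\mathcal{T}_1\cup\mathcal{T}_2$ together with the edge $\{v,v'\}$ is connected, contains $v$ and all of $R$, and has $c^*$-cost at most $j$; extracting a shortest-path tree rooted at $v$ and discarding branches not leading to $R$ gives a subtree of height at most $l(\{v,v'\})+\max\{\mathrm{ht}(\mathcal{T}_1),\mathrm{ht}(\mathcal{T}_2)\}\le H$ and of no larger cost, certifying $h(v,R,j)\le H$.

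I expect the $\le$ direction to be the main obstacle: one must choose the peeled child $v'$ and the terminal partition so that the budgets available to the two recursive subproblems actually suffice and, crucially, so that the re-rooting forced by the recurrence (both subtrees are attached at $v'$, not at $v$) does not push the combined height above $\mathrm{ht}(\mathcal{T})$. Arguing that such a choice always exists — exploiting the edge-disjointness of the two parts of a tree and the freedom in where along the relevant path of $\mathcal{T}$ to split — is the delicate part; the rest is a routine induction together with the fact that pruning a connected subgraph to a shortest-path tree can only decrease height and cost.
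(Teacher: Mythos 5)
There is a genuine gap in your inductive step for the ``$\le$'' direction, and it stems from a misreading of the recurrence. The pseudocode minimizes over $v'\in V$, and the paper's proof explicitly uses the case $v'=v$ (a ``split'' move with $l(\{v,v\})=c^*_{\{v,v\}}=0$) to handle optimal trees in which the root $v$ has degree at least~$2$. You read the recurrence as always traversing a genuine edge $\{v,v'\}\in E$; under that reading, your decomposition --- peel one edge $\{v,v'\}$ and display the whole tree $\mathcal{T}$ as that edge together with two subtrees hanging at $v'$ --- simply does not apply when $v$ branches. A concrete failure: if $v$ has two children $a,b$ with $R=\{a,b\}$, the optimal tree is the star at $v$; you cannot write it as $\{v,a\}$ plus two subtrees rooted at $a$, since $b$ is not reachable from $a$ within $\mathcal{T}$. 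You flag this re-rooting issue at the end as ``the delicate part'' but offer no resolution, and in fact no resolution exists without the $v'=v$ move. The paper's proof handles it by splitting the tree at $v$ into two trees sharing root $v$ (the $v'=v$ case) and only using an edge traversal when $v$ has degree~$1$.

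This also invalidates your well-foundedness argument. You claim induction on $j$ alone suffices because $c^*_e\ge 1$ forces the third coordinate to strictly decrease; but the $v'=v$ split leaves $j$ unchanged and only shrinks $|R|$. That is precisely why the paper's proof (and the algorithm's loop structure) orders the dynamic program by $j$ first and then by $|R|$, and why both quantities must be part of the induction. Your ``$\ge$'' direction (union the two witness subtrees with the edge, prune to a tree rooted at $v$) is fine as written, and is essentially what the paper does; the problem is localized to the $\le$ direction and the termination argument, both of which need the $v'=v$ case you omitted.
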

\begin{proof}
We prove the lemma using induction. The base case is that $R=\varnothing$ or $R=\{v\}$, which has already been initialized.

The algorithm fill all the $d(v,R,j)$ with the ascending order of $j$ and then ascending order of $|R|$, so for any $v'\in V$, $R'\subseteq R$, and $k\le j-c_{\{v,v'\}}^*$, we know that $d(v',R'\setminus\{v\},k)$ and $d(v',R\setminus R'\setminus\{v\},j-c_{\{v,v'\}}^*-k)$ must have already been filled before filling $d(v,R,j)$. We will show that when updated, $d(v,R,j)$ is at most and at least the smallest height of a tree, such that the root is $v$, the total new cost is at most $j$, and it contains all the vertices in $R$.

For the ``at most'' part, let $S$ be the lowest tree, such that the root is $v$, the total new cost is at most $j$, and it contains all the vertices in $R$. If $S$ is not in the base case, then either $v$ has degree $1$, or $v$ has degree more than $1$ in $S$.

If $v$ has degree $1$, then there must be a $v'$ which is adjacent to $v$, and the tree rooted at $v'$ is the lowest height tree, which contains all the vertices in $R\setminus\{v\}$, and the total cost is at most $j-c_{\{v,v'\}}^*$. This case is already considered in the algorithm by setting $R'=R$ and $k=j-c_{\{v,v'\}}^*$, thus in this case $d(v,R,j)$ is at most the height of $S$.

If $v$ has degree more than $1$, then $S$ can be split to two trees $S_1$ and $S_2$ with the same root $v$. Let $R_1=R\cap S_1$, then the height of $S$ is at least the lowest possible height of $S_1$, and also at least the lowest possible height of $S_2$. This case is considered in the algorithm by setting $v'=v$, $R'=R_1$ and $k$ be the cost of $S_1$, thus in this case $d(v,R,j)$ is also at most the height of $S$.

For the ``at least'' part, we only need to show that there exist a tree with height $d(v,R,j)$ such that the root is $v$, the total new cost is at most $j$, and it contains all the vertices in $R$. Let $v^*$, $R^*$, and $k^*$ be the value of $v'$, $R$, and $k$ which gives the minimum value of $d(v,R,j)$. Then after removed redundant edges, the union of the edge $\{u,v\}$, the tree for $d(v^*,R^*\setminus\{v\},k^*)$, and the tree for $d(v^*,R\setminus R^*\setminus\{v\},j-c_{\{v,v^*\}}^*-k^*)$ is a tree which the root is $v$, the total new cost is at most $j$, and it contains all the vertices in $S$, with height $d(v,R,j)$.

Therefore $d(v,R,j)$ is correctly set to what we want.
\end{proof}


No we can finally prove our Theorem \ref{thm:approxStar}.

\subsubsection{Proof of Theorem \ref{thm:approxStar}}

The running time has already been proven in Claim \ref{claim:approxStarTime}. Now we prove the correctness.

Let $S^*$ be the optimal solution and let $OPT=\sum_{e\in S^*}c(e)$. Then, the new cost of this solution is at most
\begin{align*}
\sum_{e\in S^*}c_e^*&=\sum_{e\in S^*}\left\lceil\frac{n\cdot c(e)}{\varepsilon C}\right\rceil\\
&\le\frac{n\cdot\sum_{e\in S^*}c(e)}{\varepsilon C}+n\\
&\le\frac{n\cdot OPT}{\varepsilon C}+\frac{n\cdot\varepsilon OPT}{\varepsilon C}\\
&=\frac{n}{\varepsilon C}(1+\epsilon)OPT\\
&\le\left\lceil\frac{n^3(1+\varepsilon)}{\varepsilon}\right\rceil,
\end{align*}
because from Lemma \ref{lem:tree} we know that $|S^*|\le n$ and from Lemma \ref{lem:opt} we know that $C\le OPT\le n^2C$.

Since $S^*$ is a tree with height at most $L$, such that the root is $s$, the total new cost is at most $\frac{n}{\varepsilon C}(1+\epsilon)OPT$, and it contains all the vertices in $T$, from Lemma \ref{lem:dp} and the last section of Algorithm \ref{alg:approxStar} we know that the algorithm must returns a tree $S$ with height at most $L$, the total new cost is at most $\frac{n}{\varepsilon C}(1+\epsilon)OPT$, and it contains all the vertices in $T$, which is a feasible solution.

Now we calculate the original cost of $S$. Because
\[\frac{n}{\varepsilon C}(1+\epsilon)OPT\ge\sum_{e\in S}c_e^*=\sum_{e\in S}\left\lceil\frac{n\cdot c(e)}{\varepsilon C}\right\rceil\ge\frac{n\cdot\sum_{e\in S}c(e)}{\varepsilon C},\]
we know that $\sum_{e\in S}c(e)\le(1+\epsilon)OPT$, which is a $(1+\varepsilon)$ approximation to the optimal solution.
\qed

\section{Hardness for Unit-Length Polynomial-Cost \textsc{SLSN}}\label{sec:approxHard}

\subsection{Preliminaries}\label{sec:approxPre}

In this section, we will do a \textsc{FPT} reduction from the \textsc{Multi-Colored Densest} $k$\textsc{-Subgraph} (\textsc{Multi-Colored D}$k$\textsc{S}) problem to the unit-length polynomial-cost $\textsc{SLSN}_\mathcal{C}$ problem with a $\mathcal{C}\nsubseteq\mathcal{C}_\lambda\cup \mathcal{C}^*$. Here is the definition of the \textsc{Multi-Colored D}$k$\textsc{S} problem.

\begin{definition}[\textsc{Multi-Colored Densest} $k$\textsc{-Subgraph}]
Given a graph $G=(V,E)$, a number $k\in\mathbb{N}$, a coloring function $c:V\rightarrow[k]$, and a factor $\alpha<1$. The objective of the \textsc{Multi-Colored D}$k$\textsc{S} problem is to distinguish the following two cases:
\begin{itemize}
\item There is a $k$-clique in $G$, where each vertex has different color.
\item Every subgraph of $G$ induced by $k$ vertices contains less than $\alpha\cdot\binom{k}{2}$ edges.
\end{itemize}
\end{definition}

Previously, it has been proven that, assume Gap-ETH holds, then there is no \textsc{FPT} algorithm for \textsc{Multi-Colored D}$k$\textsc{S} even with $\alpha=o(1)$. Formally, the theorem is as follows.

\begin{theorem}[\cite{chitnis2017parameterized}, Corollary 24]\label{thm:dks}
Assuming (randomized) Gap-ETH, for any function $h(k)=o(1)$ and any function $f$, there is no $f(k)\cdot n^{O(1)}$-time algorithm that solves \textsc{Multi-Colored D}$k$\textsc{S} with factor $\alpha=k^{-h(k)}$.
\end{theorem}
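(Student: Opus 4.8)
The plan is to obtain this as an \textsc{FPT} reduction from the gap version of \textsc{3SAT} furnished by Gap-ETH \cite{chalermsook2017gap}, via the ``birthday repetition''/distributed-PCP paradigm to produce a colored graph, followed by color coding to enforce the multi-colored structure; this is the route taken in \cite{chitnis2017parameterized}, and I sketch it here. Fix the target function $h(k)=o(1)$. Start from a \textsc{3SAT} instance $\phi$ on $n$ variables that Gap-ETH asserts cannot be distinguished between satisfiable and $(1-\varepsilon)$-satisfiable in $2^{o(n)}$ time; after sparsification we may assume $\phi$ has $m=O(n)$ clauses, and by a bounded number of rounds of standard soundness amplification of the associated $2$-CSP we obtain a $2$-CSP $\Psi$ on $O(n)$ variables over a constant-size alphabet whose soundness has been driven below a target value $\sigma$ that we will later take to be roughly $k^{-h(k)}$. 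Crucially $\sigma$ need only be $k^{-o(1)}$ (barely sub-constant), so this amplification costs only $o(\log n)$ rounds and keeps $|\Psi| = n^{1+o(1)}$.

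Next build the colored graph $H=H(\Psi,k)$: partition the variables of $\Psi$ into $k$ blocks of $O(n/k)$ variables each, let the vertices in color class $i\in[k]$ be the $2^{O(n/k)}$ assignments to block $i$, and join two vertices from blocks $i\ne j$ by an edge iff their partial assignments jointly satisfy every constraint of $\Psi$ with one variable in block $i$ and the other in block $j$. A globally satisfying assignment restricts to one vertex per block forming a rainbow $k$-clique, which gives completeness. For soundness, choose $k\approx\sqrt{m}$ so that by the birthday paradox almost every pair of blocks carries at least one constraint; then for any rainbow $k$-subset the corresponding partial assignments jointly violate at least a $(1-\sigma)$ fraction of the constraints, these violations are spread (by a second-moment argument) over a $(1-\sigma-o(1))$ fraction of the block-pairs, and each such pair is a non-edge, so a rainbow $k$-subset spans at most $(\sigma+o(1))\binom{k}{2}\le k^{-h(k)}\binom{k}{2}$ edges. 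Non-rainbow $k$-subsets are handled separately: since color classes are independent sets they waste pair-slots, and one additionally controls the density between distinct classes --- by a light, gap-preserving sparsification of each class (or by bounding the degree of each partial assignment) --- so that \emph{every} $k$-subset, rainbow or not, spans at most $\alpha\binom{k}{2}$ edges with $\alpha=k^{-h(k)}$. Finally, taking $k=k(n)\to\infty$ slowly enough (depending on a hypothetical algorithm's time bound $f$, so that $f(k(n))=2^{o(n)}$ while $n/k(n)=o(n)$), the instance has $N=2^{O(n/k)}$ vertices, so an $f(k)\cdot N^{O(1)}$ algorithm for \textsc{Multi-Colored D}$k$\textsc{S} with factor $\alpha=k^{-h(k)}$ would run in $f(k)\cdot 2^{O(n/k)}=2^{o(n)}$ and decide the gap-\textsc{3SAT} instance, contradicting Gap-ETH. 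This yields Theorem~\ref{thm:dks}.

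The main obstacle is the tension between the instance size and the strength of the gap, coupled with making the soundness hold for \emph{every} $k$-vertex subgraph rather than only the rainbow ones. Birthday repetition resolves the size issue through the quadratic saving $k\approx\sqrt{m}$ (turning $2^n$ into $2^{O(n/k)}$), but it interacts delicately with how much CSP soundness amplification one can afford while keeping $|\Psi|=n^{1+o(1)}$; letting $k=k(n)$ grow is precisely what makes the argument cover \emph{every} $h(k)=o(1)$ at once. The other delicate point --- ruling out dense non-rainbow windows --- is why the construction needs color classes that are independent sets of controlled cross-density rather than an arbitrary $k$-coloring, and verifying that a gap-preserving sparsification actually achieves this uniform $k$-subset bound is the step that requires the most care.
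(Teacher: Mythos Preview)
The paper does not prove this theorem at all: it is quoted verbatim as Corollary~24 of \cite{chitnis2017parameterized} and used purely as a black box (the only thing the present paper derives from it is the one-line Corollary~\ref{col:dks}, by plugging in $h(k)=\log_k\frac{1}{\alpha}$). So there is no ``paper's own proof'' to compare against; you have supplied a reconstruction of the argument from the cited reference, which is more than what is required here.

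As for the reconstruction itself, the overall shape---Gap-ETH $\Rightarrow$ gap-$2$CSP $\Rightarrow$ block partition with vertices encoding partial assignments, birthday-type coverage of constraint pairs, then a diagonalization over $k$ against a hypothetical $f$---is indeed the route of \cite{chitnis2017parameterized}. Two places in your sketch are loose. First, the separate ``soundness amplification of the associated $2$-CSP'' before the block construction is not how the argument goes: the birthday repetition \emph{is} the amplification, and inserting extra rounds beforehand would blow up the alphabet and spoil the $2^{O(n/k)}$ vertex bound. Second, your handling of non-rainbow $k$-subsets (``gap-preserving sparsification of each class'') is a placeholder rather than an argument; in \cite{chitnis2017parameterized} this is dealt with by the structure of the construction (color classes are independent, and the cross-density bound follows from the same soundness analysis applied to any choice of one representative per occupied class), not by an additional sparsification step. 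Neither issue is fatal to the outline, but as written those two steps would not compile into a proof.
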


We can easily get a weaker version of this theorem which $\alpha=O(1)$.

\begin{corollary}\label{col:dks}
For any constant $0 < \alpha < 1$, for any function $f$, assuming (randomized) Gap-ETH there is no $f(k)\cdot n^{O(1)}$-time algorithm that solves \textsc{Multi-Colored D}$k$\textsc{S} with factor $\alpha$.
\end{corollary}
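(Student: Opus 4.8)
The plan is to obtain Corollary~\ref{col:dks} as an immediate specialization of Theorem~\ref{thm:dks}: a \emph{constant} factor $\alpha$ is just one particular instance of the functional form $\alpha = k^{-h(k)}$ with $h(k) = o(1)$, so there is essentially nothing to prove beyond exhibiting the right $h$ and checking that it is admissible.

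Concretely, fix a constant $0 < \alpha < 1$. For $k \ge 2$ define $h(k) = \ln(1/\alpha)/\ln k$ (the value of $h$ at $k=1$ is irrelevant, since a single value of $k$ does not affect the asymptotics and the problem is trivial for bounded $k$). Two things need to be checked. First, $h(k) = o(1)$: the numerator $\ln(1/\alpha)$ is a fixed positive constant while $\ln k \to \infty$, so $h(k) \to 0$. Second, $k^{-h(k)} = \alpha$ for every $k \ge 2$, since $k^{-h(k)} = \exp\!\big(-h(k)\ln k\big) = \exp\!\big(-\ln(1/\alpha)\big) = \alpha$. Hence the \textsc{Multi-Colored D}$k$\textsc{S} problem ``with factor $\alpha$'' is literally the same promise problem as the one ``with factor $k^{-h(k)}$'' for this particular $h$.

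Applying Theorem~\ref{thm:dks} with this $h$ and the given function $f$ then yields exactly Corollary~\ref{col:dks}: assuming (randomized) Gap-ETH, there is no $f(k)\cdot n^{O(1)}$-time algorithm solving \textsc{Multi-Colored D}$k$\textsc{S} with factor $k^{-h(k)} = \alpha$. Since $f$ and $\alpha$ were arbitrary, this is precisely the statement we want.

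There is no real obstacle here; the only points worth a remark are (i) confirming that constant $\alpha$ is genuinely the \emph{hard} regime rather than an artificially easy one — which is automatic once we see it as an instance of the theorem, but can also be checked directly from monotonicity of the promise in $\alpha$, namely that a ``YES or $\alpha$-NO'' instance is also a ``YES or $\alpha'$-NO'' instance whenever $\alpha \le \alpha'$, so an algorithm for a larger factor solves all instances of any smaller-factor problem — and (ii) noting that Theorem~\ref{thm:dks} quantifies over \emph{all} functions $h = o(1)$, so our slowly shrinking choice $h(k) = \Theta(1/\ln k)$ is permitted.
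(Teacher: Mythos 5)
Your proof is correct and is essentially the paper's own argument: the paper's one-line proof sets $h(k)=\log_k\frac{1}{\alpha}$, which is exactly your $h(k)=\ln(1/\alpha)/\ln k$. You simply spell out the verification that $h=o(1)$ and $k^{-h(k)}=\alpha$, which the paper leaves implicit.
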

\begin{proof}
We can set $h(k)=\log_k\frac{1}{\alpha}$ in Theorem \ref{thm:dks}.
\end{proof}

\subsection{Reduction}\label{sec:approxReduct}

\begin{theorem}\label{thm:approxReduct}
Let $1\ge\epsilon > 0$ be an arbitrary constant, and let $G=(V,E)$, coloring function $c : V \rightarrow [k]$, and factor $\varepsilon$ be a \textsc{Multi-Colored D}$k$\textsc{S} instance.  Let $H \in \mathcal H_k$.  Then we can construct a unit-length polynomial-cost \textsc{SLSN} instance $(G',L)$ with demand graph $H$ in $poly(|V||H|)$ time, and there exists a function $g$ (computable in time $poly(|H|)$) such that
\begin{itemize}
\item If there is a $k$-clique in $G$, where each vertex has different color, then the \textsc{SLSN} instance has a solution with cost $g(H)$.
\item If every subgraph of $G$ induced by $k$ vertices contains less than $\varepsilon\cdot\binom{k}{2}$ edges, then the optimal cost of the \textsc{SLSN} instance is at least $\left(\frac{5}{4}-\varepsilon\right)g(H)$.
\end{itemize}
\end{theorem}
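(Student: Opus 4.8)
The plan is to adapt the reduction from Section~\ref{sec:hard} (Theorem~\ref{thm:reduct}) by carefully re-weighting the edges so that the ``sharing'' phenomenon which previously forced a clique now forces a \emph{dense} $k$-vertex subgraph, and so that the cost gap between the dense case and the clique case is a constant factor. As in Section~\ref{sec:all}, it suffices to handle the five base cases $H^{(t)} \in \{H_{k,0}^*, H_{k,1}^*, H_{k,2}^*, H_{k,k}\} \cup \mathcal H_{2,k}$ and then lift to general $H \in \mathcal H_k$ by the same device: add an $L$-hop path for each edge in $H \setminus H^{(t)}$ (which must be bought in full by any feasible solution) and set $g(H) = g^{(t)}(H^{(t)}) + L\cdot(|H|-|H^{(t)}|)$. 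Since the lifting is purely additive and the additive term is identical in both the completeness and soundness cases, it preserves any multiplicative gap up to the same factor, so I will focus on a single base case, say $H_{k,0}^*$ (the others are analogous, as in Appendix~\ref{sec:H234} and \ref{sec:bipartite}).

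First I would recall the completeness argument: given a multi-colored $k$-clique $v_1,\dots,v_k$, Lemma~\ref{lem:disjointCost} builds a feasible \textsc{SLSN} solution of cost exactly $g^{(t)}(H^{(t)})$, and that construction goes through verbatim here (a genuine clique lets all the demand paths share exactly the edges they are ``supposed'' to share). So the completeness bullet is immediate once $g$ is defined. The real work is the soundness bullet. Here I would re-examine the counting in Lemma~\ref{lem:disjointClique}: the cost of a feasible solution is the total cost of the demand paths $P_{i,j}$ (for $r$--$l_{i,j}$) and $P_y$ (for $y_0$--$y_k$), minus the savings from shared edges. In the unit-cost analysis, $P_{i,j}$ and $P_{j,i}$ can share the edge $\{r,z_{\{i,j\}}\}$ (and $\{z_{\{i,j\}},z_e\}$) \emph{only if} the zig-zag paths extracted from $P_y$ correspond to vertices $v_i,v_j$ with $\{v_i,v_j\}\in E$; each realized edge of $G$ among the chosen $\{v_1,\dots,v_k\}$ yields a bounded, fixed saving. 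The idea is to scale the costs of the ``shareable'' edges on the $r$-to-$z$ and $z$-to-$z_e$ segments (i.e., edges in $E_1 \cup E_2$, equivalently the corresponding hop-paths in $G'$) up to a large polynomial value $W = \mathrm{poly}(n)$, leaving the rest of the graph at unit cost. Then the saving per realized clique-edge is $\Theta(W)$, so if the chosen vertex set spans only $m$ edges of $G$, the solution cost is at least $g^{(t)} + (\binom{k}{2} - m)\cdot\Theta(W)$ (roughly). Choosing $W$ so that $\binom{k}{2}\cdot\Theta(W)$ is a constant fraction of $g^{(t)}$ — concretely so that losing an $\varepsilon$-fraction of possible clique edges costs at least $\frac14 g^{(t)}$ — gives: if every $k$-subset of $V$ spans fewer than $\varepsilon\binom{k}{2}$ edges, then $m < \varepsilon\binom{k}{2}$, so the optimal cost is at least $(\frac54-\varepsilon)g(H)$. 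The demand graph is unchanged (still $H_{k,0}^*$, hence in $\mathcal H_k$), lengths remain unit, and costs are polynomial in $n$, as required; the reduction time is still $\mathrm{poly}(|V||H|)$.

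The main obstacle I anticipate is making the soundness counting robust: in Lemma~\ref{lem:disjointClique} the argument that $P_y$ decomposes into $k$ vertex-indexed zig-zags and that $P_{i,j}$ then forces the edge $\{v_i,v_j\}$ relied on an \emph{exact} cost budget ($S$ contains no uncounted edge). Once we only assume cost at most $(\frac54-\varepsilon)g(H)$ rather than exactly $g(H)$, the solution may contain slack, so I must argue that slack of size $< \frac14 g(H)$ cannot ``pay'' to avoid even a single missing clique-edge once those edges carry weight $W$ — this is where the inclusion-exclusion over which $P_{i,j},P_{j,i}$ pairs actually share their heavy prefix edges becomes delicate, and is presumably the ``slightly delicate inclusion-exclusion argument'' alluded to in Section~\ref{sec:overview}. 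I would handle it by: (i) fixing, for the solution $S$, the zig-zag decomposition of $P_y$ and hence the vertex tuple $(v_1,\dots,v_k)$; (ii) lower-bounding the cost contributed by $E_1\cup E_2$ edges as $W$ times the number of \emph{distinct} heavy prefixes used, which is at least $\binom{k}{2}$ minus the number of index-pairs $\{i,j\}$ for which $P_{i,j}$ and $P_{j,i}$ can legitimately collapse onto one prefix, and then (iii) showing that collapse for pair $\{i,j\}$ implies $\{v_i,v_j\}\in E$ exactly as in Lemma~\ref{lem:disjointClique}, so the number of collapses is at most $m < \varepsilon\binom{k}{2}$. Combining with the (essentially unchanged, unit-cost) lower bound on the rest of the edges gives the claimed bound after choosing $W$ appropriately. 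Finally I would note the one subtlety in the $\mathcal H_{2,k}$ case, where there are two roots and two families of demand paths; there the heavy edges should be placed on the $E_{12}$ segment (the $z_{\{i,j\}}$-to-$z_e$ edges) whose sharing between the $r_1$-side and $r_2$-side encoded the clique, and the same scaling argument applies.
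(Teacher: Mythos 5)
Your proposal has two genuine gaps, one in the base-case soundness argument and one in the lifting step.

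\textbf{Gap 1: which edges carry the heavy weight.} You place weight $W$ only on the $E_1 \cup E_2$ segments (the $r$-to-$z_{\{i,j\}}$ and $z_{\{i,j\}}$-to-$z_e$ layers). But recall how Lemma~\ref{lem:disjointClique} actually deduces $\{v_i,v_j\}\in E$: the vertices $v_1,\dots,v_k$ are defined by the zig-zag decomposition of $P_y$, which lives in $E_4\cup E_{yx}\cup E_{xx}\cup E_{xy}$, and the argument first shows that $P_y$'s $E_4$-edge $\{x_{v_i,j},x_{v_i,j}'\}$ must also lie on $P_{i,j}$ (this is what ties $P_{i,j}$'s fourth-layer vertex to $v_i$), and \emph{only then} that the shared $E_2$-edge of $P_{i,j}$ and $P_{j,i}$ must be $\{z_{\{i,j\}},z_{\{v_i,v_j\}}\}$. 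If you only make the $E_1\cup E_2$ edges expensive, a solution with slack on the order of $\frac14 g(H)$ can freely buy extra cheap $E_4$ edges: it can route $P_{i,j}$ through some $x_{u,j}$ with $u\in C_i$, $u\ne v_i$, share an $E_2$ prefix $\{z_{\{i,j\}},z_{\{u,u'\}}\}$ with $P_{j,i}$ for some edge $\{u,u'\}\in E$, and never touch $P_y$. Then every pair $\{i,j\}$ ``collapses'' onto one heavy prefix without certifying any edge among $\{v_1,\dots,v_k\}$, and your step (iii) is false. The paper instead makes the edges in $E_2\cup E_4$ expensive (cost $4k^4$) and does a joint count of $|S\cap E_2|$ and $|S\cap E_4|$: bounding $|S\cap E_4|$ forces many $P_{i,j}$ to overlap $P_y$ (defining the set $T$), and bounding $|S\cap E_2|$ then forces many of those to actually collapse in pairs, yielding the set $R\subseteq E$. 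Your omission of $E_4$ breaks the chain. (Putting weight on $E_1$ is also wasted, since every $P_{i,j}$ and $P_{j,i}$ always share the $\{r,z_{\{i,j\}}\}$ edge; that contribution is the same whether or not a dense subgraph exists.) The same issue recurs in the $\mathcal H_{2,k}$ case: weighting only $E_{12}$ misses the $E_{22}$ and $E_{xl}$ layers that tie the $r_2$-side paths to the vertex choices, and the paper weights all three (with $E_{22}$ even heavier, at $4k^4(k-1)$).

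\textbf{Gap 2: the lifting is not gap-preserving as stated.} You claim that adding the same additive term $L\cdot(|H|-|H^{(t)}|)$ to both the completeness and soundness costs ``preserves any multiplicative gap.'' It does not: if completeness costs $g^{(t)}$ and soundness costs $\bigl(\frac54-\varepsilon'\bigr)g^{(t)}$, adding $A$ to both gives a ratio $\frac{(5/4-\varepsilon')g^{(t)}+A}{g^{(t)}+A}$ which tends to $1$ as $A$ grows. Since $|H|$ can be arbitrarily larger than $|H^{(t)}|$, the additive term $L(|H|-|H^{(t)}|)$ may dwarf $g^{(t)}(H^{(t)})$. The paper repairs this by first \emph{scaling} all costs in $G^{(t)}$ by $\bigl\lceil L|H|/\varepsilon\bigr\rceil$ (so the additive term becomes at most an $\varepsilon$-fraction of the base cost) and correspondingly sets $g(H)=\bigl\lceil L|H|/\varepsilon\bigr\rceil\cdot g^{(t)}(H^{(t)}) + L\cdot(|H|-|H^{(t)}|)$; this is also why the base-case soundness is established with the slightly stronger constant $\frac54-\frac{\varepsilon}4$, leaving room for the erosion in the lifting step to still land at $\frac54-\varepsilon$.

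Apart from these two issues, the high-level strategy (re-weight the shareable edges to polynomial values, reduce from Multi-Colored D$k$S, recover a dense $k$-subgraph from the sharing pattern, lift via dummy $L$-hop paths) is the one the paper uses.
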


As in the unit-length unit-cost setting, we will first design a reduction for demand graphs $H\in\{H_{k,0}^*,H_{k,1}^*,H_{k,2}^*,H_{k,k}\}\cup\mathcal{H}_{2,k}$ first, and then consider the general $H \in \mathcal{H}_k$.

\subsubsection{Case \ref{case:disjoint}: $H_{k,0}^*$}

Let $G=(V,E)$ with coloring function $c : V \rightarrow [k]$ and factor $\varepsilon$ be a \textsc{Multi-Colored D}$k$\textsc{S} instance. We create a unit-length and polynomial-cost \textsc{SLSN} instance $G'$ with demand graph $H_{k,0}^*$ as following.

We again use the length-weighted graph $G_k^*$ constructed in Section \ref{sec:disjoint}. We change the cost of edges in $E_2\cup E_4$ to $4k^4$, while keeping the cost equal to the length for the rest of the edges.

$G'$ is again a graph that each edge $e\in E_k^*$ is replaced by a $length(e)$-hop path. Where the cost of edges is divided equally for each hop. The demands are the same as the demands in Section \ref{sec:disjoint}, and $L$ is still $4k^2$. The construction still takes $|V||H_{k,0}^*|$. The function $g$ is slightly different, where $g(H_{k,0}^*)=6k^6-6k^5+3k^4+k$, this function is also computable in $poly(H_{k,0}^*)$ time.

Using the same solution as in the proof of Lemma \ref{lem:disjointCost}, we can see that, If there is a multi-colored clique of size $k$, then the \textsc{SLSN} instance has a solution with cost
\[4k^4-4k^3+\frac{3}{2}k^2+\frac{5}{2}k+\left(\binom{k}{2}+k(k-1)\right)\cdot(4k^4-1)=6k^6-6k^5+3k^4+k,\] because the cost of $\binom{k}{2}+k(k-1)$ edges in this solution is changed from $1$ to $4k^4$.

The other direction for the correctness is the following lemma.

\begin{lemma}
Let $S$ be an optimal solution for the \textsc{SLSN} instance $(G',L)$ with demand graph $H_{k,0}^*$. If $S$ has cost at most $\left(\frac{5}{4}-\frac{\varepsilon}{4}\right)g(H_{k,0}^*)$, then there is a subgraph of $G$ with $\varepsilon\cdot\binom{k}{2}$ edges.
\end{lemma}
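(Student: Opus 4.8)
The plan is to prove this as the gap analogue of Lemma~\ref{lem:disjointClique}, following that argument but tracking costs much more carefully, since now every edge of $E_2\cup E_4$ costs $4k^4$. As there, for each ordered pair $(i,j)$ with $i\ne j$ I would fix a path $P_{i,j}\subseteq S$ from $r$ to $l_{i,j}$ of length at most $4k^2$, and fix a path $P_y\subseteq S$ from $y_0$ to $y_k$ of length at most $4k^2$. Claims~\ref{claim:leafPath} and~\ref{claim:singlePath} apply verbatim, so each $P_{i,j}$ has the form $r$ -- $z_{\{i,j\}}$ -- $z_{e_{ij}}$ -- $x_{u_{ij},j}$ -- $x_{u_{ij},j}'$ -- $l_{i,j}$ for some $u_{ij}\in C_i$ and some $e_{ij}\in E$ with one endpoint equal to $u_{ij}$ and the other lying in $C_j$; in particular $P_{i,j}$ traverses exactly one edge of $E_2$, namely $\{z_{\{i,j\}},z_{e_{ij}}\}$, and exactly one edge of $E_4$, namely $\{x_{u_{ij},j},x_{u_{ij},j}'\}$. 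Likewise $P_y$ decomposes into zig-zag subpaths $P_{v_1},\dots,P_{v_k}$ with $v_i\in C_i$, where $P_{v_i}$ contains no edge of $E_2$ and contains precisely the $k-1$ edges $\{x_{v_i,j},x_{v_i,j}'\}$ of $E_4$ with $j\in[k]\setminus\{i\}$; since the $v_i$ have distinct colors they are distinct vertices, so $P_y$ uses exactly $k(k-1)$ distinct edges of $E_4$ and no edge of $E_2$.

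The next step is to count the distinct edges of $E_2\cup E_4$ inside $S$, each worth $4k^4$. Let $\mathcal{X}=\{\{i,j\}: e_{ij}=e_{ji}\}$ be the set of unordered pairs whose two leaf-paths share their $E_2$ edge, and let $\mathcal{Y}=\{(i,j): u_{ij}=v_i\}$ be the set of ordered pairs whose leaf-path reuses an $E_4$ edge already charged to $P_y$. Since the edge $\{z_{\{i,j\}},z_{e_{ij}}\}$ uniquely identifies the pair $\{i,j\}$, the family $\{P_{i,j}\}$ uses at least $k(k-1)-|\mathcal{X}|$ distinct $E_2$ edges; and since the slot $(u_{ij},j)$ determines $(i,j)$ and, when $u_{ij}\ne v_i$, is never a slot used by $P_y$, the set $S$ contains at least $k(k-1)+(k(k-1)-|\mathcal{Y}|)$ distinct $E_4$ edges. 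Hence $S$ contains at least $6\binom{k}{2}-|\mathcal{X}|-|\mathcal{Y}|$ distinct edges of cost $4k^4$. The crucial point is that if $\{i,j\}\in\mathcal{X}$ and both $(i,j),(j,i)\in\mathcal{Y}$, then $e_{ij}=e_{ji}$, its endpoint in $C_i$ is $u_{ij}=v_i$, and its endpoint in $C_j$ is $u_{ji}=v_j$, so $\{v_i,v_j\}=e_{ij}\in E$.

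Now put $m:=|E(G[\{v_1,\dots,v_k\}])|$. As the colors are distinct, distinct pairs $\{i,j\}$ give distinct candidate edges $\{v_i,v_j\}$, so at most $m$ pairs can satisfy ``$\{i,j\}\in\mathcal{X}$ and $(i,j),(j,i)\in\mathcal{Y}$''. Letting $d_{ij}\in\{0,1,2,3\}$ count how many of the three conditions $\{i,j\}\in\mathcal{X}$, $(i,j)\in\mathcal{Y}$, $(j,i)\in\mathcal{Y}$ fail, this says at least $\binom{k}{2}-m$ pairs have $d_{ij}\ge1$, hence $\sum_{\{i,j\}}d_{ij}\ge\binom{k}{2}-m$; on the other hand $\sum_{\{i,j\}}d_{ij}=(\binom{k}{2}-|\mathcal{X}|)+(k(k-1)-|\mathcal{Y}|)=3\binom{k}{2}-|\mathcal{X}|-|\mathcal{Y}|$, so $|\mathcal{X}|+|\mathcal{Y}|\le2\binom{k}{2}+m$. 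Substituting, $S$ contains at least $6\binom{k}{2}-(2\binom{k}{2}+m)=4\binom{k}{2}-m$ distinct edges of cost $4k^4$, so the cost of $S$ is at least $4k^4(4\binom{k}{2}-m)$. If no induced subgraph of $G$ on $k$ vertices has $\ge\varepsilon\binom{k}{2}$ edges, then in particular $m<\varepsilon\binom{k}{2}$, and the cost of $S$ would exceed $4k^4(4-\varepsilon)\binom{k}{2}=(8-2\varepsilon)(k^6-k^5)$; since $g(H_{k,0}^*)$ has leading term $6k^6$, for $\varepsilon<1$ (and $k$ large enough that the lower-order terms do not interfere) this is strictly larger than $(\tfrac{5}{4}-\tfrac{\varepsilon}{4})g(H_{k,0}^*)$, contradicting the hypothesis on $S$. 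Therefore $m\ge\varepsilon\binom{k}{2}$, and $G[\{v_1,\dots,v_k\}]$ is the desired subgraph.

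I expect the main obstacle to be the edge-accounting in the two middle paragraphs: one must be certain that every feasible way of serving the demands is captured by the $P_{i,j}/P_y$ decomposition (which is exactly what Claims~\ref{claim:leafPath} and~\ref{claim:singlePath} guarantee), that the \emph{only} ways an edge of $E_2$ or $E_4$ can be shared between two demands are those recorded by $\mathcal{X}$ and $\mathcal{Y}$, and that the deficiency bound $\binom{k}{2}-m$ is applied without double-counting the $E_2$-savings against the $E_4$-savings. The closing comparison of $4k^4(4-\varepsilon)\binom{k}{2}$ with $(\tfrac{5}{4}-\tfrac{\varepsilon}{4})g(H_{k,0}^*)$ is routine once $g(H_{k,0}^*)$ is written out explicitly, but it is where the constant $\tfrac{5}{4}$ gets pinned down; as is standard for \textsc{FPT} reductions from a gap problem, only large $k$ need be handled.
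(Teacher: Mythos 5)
Your proposal is correct, and at its core it is the same argument as the paper's: both isolate, from Claims~\ref{claim:leafPath} and~\ref{claim:singlePath}, the two kinds of ``sharings'' (the $E_2$ edge shared between $P_{i,j}$ and $P_{j,i}$, and the $E_4$ edge shared between $P_{i,j}$ and $P_y$), both prove that a pair $\{i,j\}$ achieving all three sharings forces $\{v_i,v_j\}\in E$, and both translate the cost bound into a count of distinct $4k^4$-cost edges via the exact same identity (distinct $E_2\cup E_4$ edges $\ge 3k(k-1)-|\mathcal{X}|-|\mathcal{Y}|$, which matches the paper's $x+y$ and yields $\ge 2k(k-1)-m$ after your deficiency step). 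The only difference is packaging: the paper defines $R\subseteq E(G[\{v_1,\dots,v_k\}])$ directly, bounds $|T|\ge 2k(k-1)-x$ and $|R|\ge|T|-y$ by inclusion--exclusion, and then plugs in the cost bound to lower-bound $|R|$; you run the same counting in the contrapositive direction, turning ``$m$ small'' into ``$|\mathcal{X}|+|\mathcal{Y}|$ small'' via your $\sum d_{ij}$ trick and deriving a cost contradiction. The numerics also match: your leading comparison $(8-2\varepsilon)(k^6-k^5)$ vs.\ $(7.5-1.5\varepsilon)(k^6-k^5)+O(k^4)$ is exactly why $\tfrac54$ is the right constant, and the paper's step $\lfloor (\tfrac54-\tfrac\varepsilon4)g/(4k^4)\rfloor\le(\tfrac{15}{8}-\tfrac{3\varepsilon}{8})k(k-1)$ has the same small-$k$ slack you correctly flagged (both arguments require $k$ larger than some absolute constant, which is harmless in an \textsc{FPT} reduction). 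No gaps; this is a faithful reformulation of the paper's proof.
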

\begin{proof}
For each $i,j\in[k]$ where $i\ne j$, let $P_{i,j}$ be a (arbitrarily chosen) path in $S$ which connects $r$ and $l_{i,j}$ with length at most $L=4k^2$. Let $\mathcal{P}=\{P_{i,j}\mid i,j\in[k],i\ne j\}$ be the set of all these paths. We also let $P_y$ be a (arbitrarily chosen) path in $S$ which connects $y_0$ and $y_k$ with length at most $L$.

From Claim \ref{claim:singlePath}, $P_y$ can be divided to $k$ subpaths, each correlates to a vertex $v_i$. We will show that the induced subgraph on vertex set $\{v_1,\mathellipsis,v_k\}$ has at least $\varepsilon\cdot\binom{k}{2}$ edges. In fact, let $R=\{\{v_i,v_j\}\mid i,j\in[k],i\ne j, P_{i,j}\cap E_2=P_{j,i}\cap E_2,P_y\cap P_{i,j}\cap E_4\ne\varnothing,P_y\cap P_{j,i}\cap E_4\ne\varnothing\}$, we will show that $R\subseteq E$ and $|R|\ge\varepsilon\cdot\binom{k}{2}$.

For any $\{v_i,v_j\}\in R$, by looking at the definition of $G_k^*$ and the form of the path $P_y$ and $P_{i,j}$ in Claim \ref{claim:leafPath} and \ref{claim:singlePath}, we know that if $P_y\cap P_{i,j}\cap E_4$ is not an empty set, then it must contain only one edge $\{x_{v_{i},j},x_{v_{i},j}'\}$. Similarly, $P_y\cap P_{j,i}\cap E_4$ must be the edge $\{x_{v_{j},i},x_{v_{j},i}'\}$. From this, we can see that $P_{i,j}\cap E_2=P_{j,i}\cap E_2$ must be the edge $\{z_{\{i,j\}},z_{\{v_i,v_j\}}\}$, because $z_{\{v_i,v_j\}}$ is the only vertex which is adjacent to both $x_{v_{i},j}$ and $x_{v_{j},i}$. Therefore by the definition of $E_2$, we know that $\{v_i,v_j\}$ is an edge of $E$, which means $R\subseteq E$. Thus the only thing left is to show that $|R|\ge\varepsilon\cdot\binom{k}{2}$.

From Claim \ref{claim:singlePath}, because $P_y$ contains $k$ subpaths, and each subpath contains $k-1$ edges in $E_4$, we know that $|P_y\cap E_4|\ge k(k-1)$. From Claim \ref{claim:leafPath}, because for each $i,j\in[k]$ where $i\ne j$, there is at least one edge in $P_{i,j}\cap E_4$, and they must be different from each other, we know that $\left|\bigcup_{i,j\in[k],i\ne j}P_{i,j}\cap E_4\right|\ge k(k-1)$. Let $x=\left|S\cap E_4\right|=\left|(P_y\cup\bigcup_{i,j\in[k],i\ne j}P_{i,j})\cap E_4\right|$, then
\[\left|P_y\cap\bigcup_{i,j\in[k],i\ne j}P_{i,j}\cap E_4\right|=\left|P_y\cap E_4\right|+\left|\bigcup_{i,j\in[k],i\ne j}P_{i,j}\cap E_4\right|-x\ge2k(k-1)-x.\]
Let $T=\{P_{i,j}\mid i,j\in[k],i\ne j, P_y\cap P_{i,j}\cap E_4\ne\varnothing\}$, then $|T|\ge2k(k-1)-x$, because each $P_{i,j}$ can share at most one edge with $P_y$.

We also know that each edge $\{z_{\{i,j\}},z_{e}\}\in S\cap E_2$ can only appear in at most two different paths, which are $P_{i,j}$ and $P_{j,i}$. Let $y=|S\cap E_2|$. From Claim \ref{claim:leafPath} we know that each path in $T$ must contain at least one edge in $S\cap E_2$, thus there are at least $|T|-y$ edges in $S\cap E_2$ which appear in two different paths in $T$. Therefore $|R|\ge|T|-y$.

Now we calculate the size of $R$. Because every edge in $E_2\cup E_4$ has cost $4k^4$, and the total cost is less than $\left(\frac{5}{4}-\frac{\varepsilon}{4}\right)g(H_{k,0}^*)$, thus we have
\[x+y=|S\cap E_2|+|S\cap E_4|\le\left\lfloor\frac{\left(\frac{5}{4}-\frac{\varepsilon}{4}\right)g(H_{k,0}^*)}{4k^4}\right\rfloor\le\left(\frac{15}{8}-\frac{3\varepsilon}{8}\right)k(k-1),\]
so that
\[|R|\ge|T|-y\ge2k(k-1)-x-y\ge2k(k-1)-\left(\frac{15}{8}-\frac{3\varepsilon}{8}\right)k(k-1)\ge\varepsilon\cdot\binom{k}{2}.\]

Therefore the induced subgraph with vertex set $\{v_1,\mathellipsis,v_k\}$ has at least $\varepsilon\cdot\binom{k}{2}$ edges.
\end{proof}

\subsubsection{Case \ref{case:one}, \ref{case:two}, and \ref{case:matching}:}

In this setting, the construction of Case \ref{case:one}, \ref{case:two} still keep the same as Case \ref{case:disjoint}, and the change for \ref{case:matching} are basically the same as the unit-cost setting.

\text{}\\
\textbf{Case \ref{case:one}}: $H_{k,1}^*$

We use the same $G_k^*$, $G'$ and $L$ in the construction of the \textsc{SLSN} instance for demand graph $H_{k,0}^*$, and also set $g(H_{k,1}^*)=6k^6-6k^5+4k^4+k$. The only difference is the demand graph. Besides the demand of $\{r,l_{i,j}\}$ for all $i,j\in[k]$ where $i\ne j$, and $\{y_0,y_k\}$, there is a new demand $\{r,y_0\}$. Clearly this new demand graph is a star with $(k(k-1)+1)$ leaves, and an edge in which
exactly one of the endpoints is a leaf of the star, so it is isomorphic to $H_{k,1}^*$.

Assume there is a multi-colored clique of size $k$ in $G$. The paths connecting previous demands in the solution of the \textsc{SLSN} instance are the same as Case \ref{case:disjoint}. The path between $r$ and $y_0$ is $r$ -- $z_{\{1,2\}}$ -- $z_{\{v_1,v_2\}}$ -- $x_{v_1,2}$ -- $y_0$. All the edges in this path is already in the previous paths, so the cost remains the same. The length of this path is $2+1+2k^2-2+4=2k^2+5<4k^2$, which satisfies the length bound.

Assume there is a solution for the \textsc{SLSN} instance $(G',L,H_{k,1}^*)$ with total cost less than $\left(\frac{5}{4}-\frac{\varepsilon}{4}\right)g(H_{k,1}^*)$. The proof of existing a subgraph of $G$ with $\varepsilon\cdot\binom{k}{2}$ edges is the same as Case \ref{case:disjoint}.

\text{}\\
\textbf{Case \ref{case:two}}: $H_{k,2}^*$

As in Case \ref{case:one}, only the demand graph changes. The new demand graph is the same as in Case \ref{case:one} but again with a new demand $\{r,y_k\}$. Since $\{r,y_0\}$was already a demand, our new demand graph is a star with $(k(k-1)+2)$ leaves (the $l_{i,j}$'s and $y_0$ and $y_k$), and an edge between two of its leaves ($y_0$ and $y_k$), which is isomorphic to $H_{k,2}^*$.

Assume there is a multi-colored clique of size $k$ in $G$. The paths connecting previous demands in the solution of the \textsc{SLSN} instance are the same as Case \ref{case:one}. The path between $r$ and $y_k$ is $r$ -- $z_{\{k-1,k\}}$ -- $z_{\{v_{k-1},v_k\}}$ -- $x_{v_k,k-1}$ -- $y_k$. All the edges in this path is already in the previous paths, so the cost stays the same. The length of this path is $2+1+2k^2-2+4=2k^2+5<4k^2$, which satisfies the length bound.

Assume there is a solution for the \textsc{SLSN} instance $(G',L,H_{k,2}^*)$ with total cost less than $\left(\frac{5}{4}-\frac{\varepsilon}{4}\right)g(H_{k,2}^*)$. The proof of existing a subgraph of $G$ with $\varepsilon\cdot\binom{k}{2}$ edges is the same as Case \ref{case:disjoint}.

\text{}\\
\textbf{Case \ref{case:matching}}: $H_{k,k}$

In order to get $H_{k,k}$ as our demand graph, we have to slightly change the construction in Case \ref{case:disjoint}. We still first make a weighted graph $G_{k,k}=(V_{k,k},E_{k,k})$ and then transform it to the unit-length graph $G'$. For the vertex set $V_{k,k}$, we add another layer of vertices $V_0=\{l_{i,j}'\mid i,j\in[k],i\ne j\}$ in to $V_k^*$ before the first layer $V_1$. For the edge set $E_{k,k}$, we include all the edges in $E_k^*$, but change the edges in $E_1$ to length $1$ and cost $1$. We also add another edge set $E_0=\{\{l_{i,j}',r\}\mid i,j\in[k],i\ne j\}$. Each edge in $E_0$ has length $1$ and cost $1$.

The demands are $\{l_{i,j}',l_{i,j}\}$ for each $i,j\in[k]$ where $i\ne j$, as well as $\{y_0,y_k\}$. This is a matching of size $k(k-1)+1$, which is isomorphic to $H_{k,k}$. We still set the length bound to be $L=4k^2$, and set $g(H_{k,k})=6k^6-6k^5+3k^4+\frac{k^2}{2}+\frac{k}{2}$.

If there is a multi-colored clique of size $k$ in $G$, the construction for the solution in $G'$ is similar to Case \ref{case:disjoint}. For each $i,j\in[k]$ where $i\ne j$, the paths between $l_{i,j}'$ and $l_{i,j}$ becomes $l_{i,j}'$ -- $r$ -- $z_{\{i,j\}}$ -- $z_{\{v_i,v_j\}}$ -- $x_{v_i,j}$ -- $x_{v_i,j}'$ -- $l_{i,j}$ (i.e., one more layer before the root $r$). It is easy to see
that the length bound and size bound are still satisfied.

Assume there is a solution for the \textsc{SLSN} instance $(G',L,H_{k,k})$ with total cost less than $\left(\frac{5}{4}-\frac{\varepsilon}{4}\right)g(H_{k,k})$. The proof of existing a subgraph of $G$ with $\varepsilon\cdot\binom{k}{2}$ edges is the same as Case \ref{case:disjoint}, except the path between $l_{i,j}'$ and $l_{i,j}$ has one more layer.

\subsubsection{Case \ref{case:bipartite}: $\mathcal{H}_{2,k}$}

For any $\varepsilon>0$, assume there is a demand graph $H\in \mathcal{H}_{2,k}$ and a \textsc{Multi-Colored D}$k$\textsc{S} instance $G=(V,E)$ with coloring function $c$, factor $\varepsilon$, and parameter $k$. We create a unit-length and polynomial-cost \textsc{SLSN} instance $G'$ as following.

We again use the length-weighted graph $G_{2,k}$ constructed in Section \ref{sec:bipartite}. We change the cost of edges in $E_{22}$ to $4k^4(k-1)$, the cost of edges in $E_{12}\cup E_{xl}$ to $8k^4$, while keeping the cost equal to the length for the rest of the edges.

$G'$ is again a graph that each edge $e\in E_{2,k}$ is replaced by a $length(e)$-hop path. Where the cost of edges is divided equally for each hop. The demands are the same as the demands in Section \ref{sec:bipartite}, and $L$ is still $7$. The construction still takes $|V||H|$. The function $g$ is slightly different, where $g(H)=16k^6-16k^5-10k^2+11k+7|H|-7\cdot\mathds{1}_{\{r_1,r_2\}\in H}$.

Using the same solution as in the proof of Lemma \ref{lem:bipartiteCost}, we can see that, If there is a multi-colored clique of size $k$, then the \textsc{SLSN} instance has a solution with cost
\begin{align*}
&7|H|-7k^2+9k-7\cdot\mathds{1}_{\{r_1,r_2\}\in H}+k\cdot(4k^4(k-1)-1)+k(k-1)\cdot(8k^4-1)+\binom{k}{2}\cdot(8k^4-4)\\
=&16k^6-16k^5-10k^2+11k+7|H|-7\cdot\mathds{1}_{\{r_1,r_2\}\in H},
\end{align*}
because the cost of $k$ edges in $E_{22}$ in this solution is changed from $1$ to $4k^4(k-1)$, the cost of $\binom{k}{2}$ edges in $E_{12}$ in this solution is changed from $1$ to $8k^4$, and the cost of $k(k-1)$ edges in $E_{xl}$ in this solution is changed from $4$ to $8k^4$.

The other direction for the correctness is the following lemma.

\begin{lemma}
Let $S$ be an optimal solution for the \textsc{SLSN} instance $(G',L)$ with demand graph $H\in\mathcal{H}_{2,k}$. If $S$ has cost less than $\left(\frac{5}{4}-\frac{\varepsilon}{4}\right)g(H)$, then there is a subgraph of $G$ with $\varepsilon\cdot\binom{k}{2}$ edges.
\end{lemma}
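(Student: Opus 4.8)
The plan is to follow the proof of Lemma~\ref{lem:bipartiteClique} (the unit-cost analogue), but, exactly as in Case~\ref{case:disjoint} of this section, to weaken the conclusion ``all the paths satisfying the demands share the right edges'' to ``enough of them do'', which will suffice to produce a $k$-vertex subgraph of $G$ with $\varepsilon\binom{k}{2}$ edges.

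First I would fix, for each $i\ne j$ and each $r\in\{r_1,r_2\}$, an arbitrary path $P_{r,i,j}\subseteq S$ from $r$ to $l_{i,j}$ of length at most $L=7$. As in Lemma~\ref{lem:bipartiteClique}, the length bound forces $P_{1,i,j}$ into the form $r_1 - z_{\{i,j\}} - z_{\{u,v\}} - x_{u,j} - l_{i,j}$ for some $\{u,v\}\in E$ with $u\in C_i$, $v\in C_j$, and forces $P_{2,i,j}$ into the form $r_2 - y_i - y_w - x_{w,j} - l_{i,j}$ for some $w\in C_i$; write $w_{i,j}\in C_i$ for this $w$. The extra demands $\{l_{i,j},l_{i',j'}\}\in H\setminus(B\cup\{\{r_1,r_2\}\})$ again force $S$ to contain all $|H|-2k(k-1)-\mathds{1}_{\{r_1,r_2\}\in H}$ corresponding $E_{ll}$-edges, and covering all $l_{i,j}$ from $r_1$ and $r_2$ forces $S$ to contain all of $E_{11}$, all of $E_{21}$, and at least $k(k-1)$ edges each of $E_{13}$ and $E_{23}$; these ``cheap'' edges keep their unit-cost costs, so their mandatory cost is $M_c=7|H|-7\mathds{1}_{\{r_1,r_2\}\in H}+O(k^2)$. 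Writing $a=|S\cap E_{22}|$, $b=|S\cap E_{12}|$, $d=|S\cap E_{xl}|$ for the ``expensive'' edges (of cost $4k^4(k-1)$ and $8k^4$), the hypothesis $c(S)<\bigl(\tfrac54-\tfrac\varepsilon4\bigr)g(H)$ together with $c(S)\ge M_c+4k^4(k-1)a+8k^4(b+d)$ and the value of $g(H)$ give, after dividing by $4k^4$ (the $|H|$-dependent part of the budget contributes only an additive constant since $|H|=O(k^4)$), an inequality of the shape
\[
(k-1)a+2b+2d\ \le\ (5-\varepsilon)\,k(k-1)+O(1),
\]
so in particular $a\le(2-\varepsilon)k+o(1)$ and $b+d\le\bigl(2-\tfrac\varepsilon2\bigr)k(k-1)+O(1)$.

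Next comes the inclusion--exclusion. For each ordered pair $(i,j)$ the paths $P_{1,i,j}$ and $P_{2,i,j}$ each contain exactly one edge of $E_{xl}$ incident to $l_{i,j}$, and since $\mathcal{P}_1$ and $\mathcal{P}_2$ each use $k(k-1)$ distinct $E_{xl}$-edges, the number of $(i,j)$ for which these two edges coincide is at least $2k(k-1)-d$; similarly $P_{1,i,j}$ and $P_{1,j,i}$ each use one edge of $E_{12}$ incident to $z_{\{i,j\}}$, and these coincide for at least $2\binom{k}{2}-b$ unordered pairs. Calling $\{i,j\}$ \emph{good} when the $E_{xl}$-coincidence holds for both $(i,j)$ and $(j,i)$ and the $E_{12}$-coincidence holds, a union bound leaves at least $2k(k-1)-b-d\ge\tfrac\varepsilon2 k(k-1)-O(1)$ good pairs. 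For a good pair, tracing the forced path forms through the common vertex $z_{\{w_{i,j},w_{j,i}\}}$ and using that the $x$-vertex neighbours of $z_e$ determine $e$ forces $\{w_{i,j},w_{j,i}\}\in E$, exactly as in Lemma~\ref{lem:bipartiteClique}.

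It remains to gather these edges inside $k$ vertices, one per color. For color $i$ let $v_i\in C_i$ be the most frequent value among $\{w_{i,j}:j\ne i\}$; since these take values among the endpoints of the $E_{22}$-edges of $S$ at $y_i$, and the numbers of such endpoints over all colors sum to $a$, the number of ordered pairs with $w_{i,j}\ne v_i$ is at most $\tfrac{k-1}{2}(a-k)$. Discarding the good pairs spoiled by a non-canonical choice, the surviving good pairs all give edges inside $\{v_1,\dots,v_k\}$ and number at least $2k(k-1)-(b+d)-\tfrac{k-1}{2}(a-k)$; substituting $2(b+d)\le(5-\varepsilon)k(k-1)-(k-1)a+O(1)$ from the budget inequality makes the two appearances of $a$ cancel, leaving at least $\varepsilon\binom{k}{2}-O(1)$ edges among $\{v_1,\dots,v_k\}$, the additive constant being absorbed by the slack built into the constant terms of $g$; padding $\{v_1,\dots,v_k\}$ to exactly $k$ vertices if some colors are unused yields the subgraph. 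The main obstacle is precisely this cancellation: the edge costs $4k^4(k-1)$ and $8k^4$ and the constants in $g$ must be chosen so that the per-unit budget price of an extra $E_{22}$-edge exactly pays for the ``spoilage'' term $\tfrac{k-1}{2}(a-k)$ (equivalently, so that the coefficient of $a$ in the budget bound, divided by that of $b+d$, equals the coefficient of $a$ in the spoilage bound), and so that every lower-order term --- the $|H|$-dependent part of the budget, the $k(k-1)$-versus-$k^2$ discrepancies, the floors, and the rounding in $\varepsilon\binom{k}{2}$ --- lands on the correct side of these inequalities.
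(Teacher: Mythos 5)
Your proposal matches the paper's proof in all essentials: you derive the same forced path forms, perform the same three inclusion--exclusion counts (on $E_{22}$, $E_{xl}$, $E_{12}$, with $a,b,d$ playing the roles of the paper's $x,z,y$), pick $v_i$ by the same ``most frequent'' rule, and obtain the identical final lower bound $\tfrac52 k(k-1)-\tfrac{k-1}{2}a-b-d$ before plugging in the cost budget. Your reordering --- count ``good'' unordered pairs first, then discard the spoiled ones --- is algebraically equivalent to the paper's sequence $T\to T'\to R$, and your explicit tracking of the $O(1)$ slack (rather than asserting the clean $\varepsilon\binom{k}{2}$ directly) is actually more careful than the paper's write-up on this point.
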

\begin{proof}
For each $i,j\in[k]$ where $i\ne j$, let $P_{1,i,j}$ be a (arbitrarily chosen) path in $S$ which connects $r_1$ and $l_{i,j}$, and $P_{2,i,j}$ be a (arbitrarily chosen) path in $S$ which connects $r_2$ and $l_{i,j}$. Let $\mathcal{P}_1=\{P_{1,i,j}\mid i,j\in[k],i\ne j\}$, and $\mathcal{P}_2=\{P_{2,i,j}\mid i,j\in[k],i\ne j\}$. We can see that each of these paths must have exactly one edge between each two levels. Where paths in $\mathcal{P}_1$ have form $r_1$ -- $z_{\{i,j\}}$ -- $z_{\{u,v\}}$ -- $x_{u,j}$ -- $l_{i,j}$ with $c(u)=i$, $c(v)=j$, and $\{u,v\}\in E$. The paths in $\mathcal{P}_2$ have form $r_2$ -- $y_i$ -- $y_{v}$ -- $x_{v,j}$ -- $l_{i,j}$ with $c(v)=i$.

For each color $i\in[k]$, we can see that, in order to connect $r_2$ with all $l_{i,j}$, there must be at least one edge $\{y_i,y_v\}\in S\cap E_{22}$ with $v\in C_i$. Let $v_i=\arg\max_{v\in C_i}\left|\{y_i,y_{v}\}\cap\bigcup_{j\in[k]\setminus\{i\}}P_{2,i,j}\right|$ be the vertex which is in the most number of paths in $\mathcal{P}_2$. We will prove that the induced subgraph with vertex set $\{v_1,\mathellipsis,v_k\}$ has at least $\varepsilon\cdot\binom{k}{2}$ edges.

For each $i\in[k]$, the edge $\{y_i,y_v\}\in S\cap E_{22}$ can only appear in at most $k-1$ different paths, which are $P_{2,i,j}$ where $j\in[k]\setminus\{i\}$. Because $\{y_i,y_{v_i}\}$ appears in most number of paths in $\mathcal{P}_2$, any $\{y_i,y_v\}$ other than $\{y_i,y_{v_i}\}$ can appear in at most $\frac{k-1}{2}$ different paths. Let $x=|S\cap E_{22}|$. Let $T=\{(i,j)\mid i,j\in[k],i\ne j,\{y_i,y_{v_i}\}\in P_{2,i,j}\}$. Then, $|T|\ge k(k-1)-(x-k)\cdot\frac{k-1}{2}=\frac{3}{2}k(k-1)-\frac{k-1}{2}x$.

We know that there is at least one different edge in $E_{xl}$ for each $P_{1,i,j}\in\mathcal{P}_1$ where $(i,j)\in T$, thus $\left|E_{xl}\cap\bigcup_{(i,j)\in T}P_{1,i,j}\right|\ge|T|$. There is also at least one different edge in $E_{xl}$ for each $P_{2,i,j}\in\mathcal{P}_2$, thus $\left|E_{xl}\cap\bigcup_{i,j\in[k],i\ne j}P_{2,i,j}\right|\ge k(k-1)$. Let $y=|S\cap E_{xl}|$, then
\begin{align*}
\left|E_{xl}\cap\bigcup_{(i,j)\in T}P_{1,i,j}\cap\bigcup_{i,j\in[k],i\ne j}P_{2,i,j}\right|
&\ge\left|E_{xl}\cap\bigcup_{(i,j)\in T}P_{1,i,j}\right|+\left|E_{xl}\cap\bigcup_{i,j\in[k],i\ne j}P_{2,i,j}\right|-|S\cap E_{xl}|\\
&\ge|T|+k(k-1)-y=\frac{5}{2}k(k-1)-\frac{k-1}{2}x-y.
\end{align*}
For each $(i,j)\in T$, by looking at the form of $\mathcal{P}_1$ and $\mathcal{P}_2$, we know that $E_{xl}\cap P_{1,i,j}$ can not intersect with $P_{2,i',j'}$ with any $(i',j')\ne(i,j)$. And if $E_{xl}\cap P_{1,i,j}$ do intersect with $P_{2,i',j'}$, the intersection must be exactly one edge $\{x_{v_i,j},l_{i,j}\}$. Therefore, let $T'=\{P_{1,i,j}\mid(i,j)\in T,\{x_{v_i,j},l_{i,j}\}\in E_{xl}\cap P_{1,i,j}\cap P_{2,i,j}\}$, we have $|T'|\ge\frac{5}{2}k(k-1)-\frac{k-1}{2}x-y$.

Let $z=|S\cap E_{12}|$ and $R=\{\{v_i,v_j\}\mid P_{1,i,j}\in T',P_{1,j,i}\in T',P_{1,i,j}\cap E_{12}=P_{1,j,i}\cap E_{12}\}$. Because each path in $T'$ has an edge in $S\cap E_{12}$, and any edge $(z_{c(u),c(v)},z_{\{u,v\}})\in S\cap E_{12}$ can appear in at most two paths $P_{1,c(u),c(v)}$ and $P_{1,c(v),c(u)}$ in $T'$, we know that $|R|\ge|T'|-z$.

For any $\{v_i,v_j\}\in R$, because $\{x_{v_i,j},l_{i,j}\}\in P_{1,i,j}$ and $\{x_{v_j,i},l_{j,i}\}\in P_{1,j,i}$, by looking at the form of $P_{1,i,j}$ and the form of $P_{1,j,i}$, we know that $P_{1,i,j}\cap E_{12}=P_{1,j,i}\cap E_{12}$ can only be the edge $\{z_{\{i,j\}},z_{\{v_i,v_j\}}\}$, which means $\{v_i,v_j\}\in E$. Therefore $R\subseteq E$. Thus the only thing left is to show that $|R|\ge\varepsilon\cdot\binom{k}{2}$.

Because $|H|\le(k(k-1)+2)(k(k-1)+1)$, we have
\[\frac{k-1}{2}x+y+z\le\frac{\left(\frac{5}{4}-\frac{\varepsilon}{4}\right)g(H)}{8k^4}\le\left(\frac{5}{2}-\frac{\varepsilon}{2}\right)k(k-1).\]
Thus we have
\[|R|\ge|T'|-z\ge\frac{5}{2}k(k-1)-\frac{k-1}{2}x-y-z\ge\frac{5}{2}k(k-1)-\left(\frac{5}{2}-\frac{\varepsilon}{2}\right)k(k-1)\ge\varepsilon\cdot\binom{k}{2}.\]

Therefore the induced subgraph with vertex set $\{v_1,\mathellipsis,v_k\}$ has at least $\varepsilon\cdot\binom{k}{2}$ edges.
\end{proof}

\subsubsection{Case \ref{case:all}: $\mathcal{H}_k$}

For any small constant $\varepsilon>0$, we now want to construct a \textsc{SLSN} instance for a demand graph $H\in\mathcal{H}_k$ from a \textsc{Multi-Colored D}$k$\textsc{S} instance $(G=(V,E),c)$, factor $\varepsilon$, and parameter $k$. By definition of $\mathcal H_k$, for some $t\in[5]$ there is a graph $H^{(t)}$ of Case $t$ which is an induced subgraph of $H$. We use Lemma \ref{lem:hk} to find out the graph $H^{(t)}$. Let $(G^{(t)},L)$ be the \textsc{SLSN} instance obtained from applying our reduction for case $t$ from the \textsc{Multi-Colored D}$k$\textsc{S} instance $(G=(V,E),c)$. We want to construct a instance $(G',c',L)$ with demand graph $H$, and makes sure that
\begin{itemize}
\item If the \textsc{SLSN} instance $(G^{(t)},c^{(t)},L)$ has a solution with cost $g^{(t)}(H^{(t)})$, then the \textsc{SLSN} instance $(G',c',L)$ has a solution with cost $g(H)$.
\item If the optimal cost of the \textsc{SLSN} instance $(G',c',L)$ is less than $\left(\frac{5}{4}-\varepsilon\right)g(H)$, then the optimal cost of the \textsc{SLSN} instance $(G^{(t)},c^{(t)},L)$ is less than $\left(\frac{5}{4}-\frac{\varepsilon}{4}\right)g^{(t)}(H^{(t)})$.
\end{itemize}
If there is such a construction, then
\begin{itemize}
\item If there is a multi-colored clique in $G$ with size $k$, then the \textsc{SLSN} instance $(G',c',L)$ has a solution with cost $g(H)$.
\item If the optimal cost of the \textsc{SLSN} instance $(G',c',L)$ is less than $\left(\frac{5}{4}-\varepsilon\right)g(H)$, then there is a induced subgraph of $G$ with $k$ vertices and at least $\varepsilon\cdot\binom{k}{2}$ edges.
\end{itemize}

Which is what we need for Theorem \ref{thm:approxReduct}.

The graph $G'$ is basically graph $G^{(t)}$ with some additional vertices and edges appeared in $H\setminus H^{(t)}$. We first increase the cost for all the edges in $G^{(t)}$ by multiplicative factor $\left\lceil\frac{L|H|}{\varepsilon}\right\rceil$. For each vertex $v$ in $H$ but not in $H^{(t)}$, we add a new vertex $v$ to $G'$. For each edge $\{u,v\}\in H\setminus H^{(t)}$, we add a $L$-hop path between $u$ and $v$ to $G'$, each new edge has cost $1$. We set $g(H)=\left\lceil\frac{L|H|}{\varepsilon}\right\rceil\cdot g^{(t)}(H^{(t)})+L\cdot(|H|-|H^{(t)}|)$. This is computable in $poly(|H|)$ time.

The construction still takes $poly(|V||H|)$ time, because the construction for the previous cases takes $poly(|V||H^{(t)}|)$ time and the construction for Case \ref{case:all} takes $poly(|G^{(t)}||H|)$ time. Here $|H^{(t)}|\le|H|$, and we know that $|G^{(t)}|$ is polynomial in $|V|$ and $|H^{(t)}|$.

If instance $(G^{(t)},L,H^{(t)})$ has a solution with cost $g^{(t)}(H^{(t)})$, let the optimal solution be $S^{(t)}$. For each $e=\{u,v\}\in H\setminus H^{(t)}$, let the new $L$-hop path between $u$ and $v$ in $G'$ be $P_e$. Then $S^{(t)}\cup\bigcup_{e\in H\setminus H^{(t)}}P_e$ is a solution to $G'$ with cost $\left\lceil\frac{L|H|}{\varepsilon}\right\rceil\cdot g^{(t)}(H^{(t)})+L\cdot(|H|-|H^{(t)}|)=g(H)$.

If instance instance $(G',L,H)$ has a solution with cost less than $\left(\frac{5}{4}-\varepsilon\right)g(H)$, let the optimal solution be $S$. Since for each $e=\{u,v\}\in H\setminus H^{(t)}$, the only path between $u$ and $v$ in $G'$ within the length bound is the new $L$-hop path $P_e$. Any valid solution must include all these $P_e$, which in total costs $L\cdot(|H|-|H^{(t)}|)$. In addition, for each demand $\{u,v\}$ which is also in $H^{(t)}$, any path between $u$ and $v$ in $G'$ within the length bound will not include any new edge, because otherwise it will contain a $L$-hop path, and have length more than $L$. Therefore, $S\setminus\bigcup_{e\in H\setminus H^{(t)}}P_e$ is a solution to $G^{(t)}$ with cost less than
\begin{align*}
&\frac{1}{\left\lceil\frac{L|H|}{\varepsilon}\right\rceil}\left(\left(\frac{5}{4}-\varepsilon\right)g(H)-L\cdot(|H|-|H^{(t)}|)\right)\\
=&\frac{1}{\left\lceil\frac{L|H|}{\varepsilon}\right\rceil}\left(\left(\frac{5}{4}-\varepsilon\right)\left(\left\lceil\frac{L|H|}{\varepsilon}\right\rceil\cdot g^{(t)}(H^{(t)})+L\cdot(|H|-|H^{(t)}|)\right)-L\cdot(|H|-|H^{(t)}|)\right)\\
=&\left(\frac{5}{4}-\varepsilon\right)\cdot g^{(t)}(H^{(t)})+\frac{L\cdot(|H|-|H^{(t)}|)\cdot\left(\frac{5}{4}-\varepsilon-1\right)}{\left\lceil\frac{L|H|}{\varepsilon}\right\rceil}\\
\le&\left(\frac{5}{4}-\varepsilon\right)\cdot g^{(t)}(H^{(t)})+\frac{L\cdot|H|\cdot\frac{1}{4}}{\frac{L|H|}{\varepsilon}}\\
\le&\left(\frac{5}{4}-\varepsilon\right)\cdot g^{(t)}(H^{(t)})+\frac{\varepsilon}{4}\\
\le&\left(\frac{5}{4}-\frac{\varepsilon}{4}\right)g^{(t)}(H^{(t)}).\\
\end{align*}

Therefore Theorem \ref{thm:approxReduct} is proved.

\subsection{Proof of Theorem \ref{thm:approxHard}:}\label{app:analysis}


If $\mathcal{C}$ is a recursively enumerable class, and $\mathcal{C}\nsubseteq\mathcal{C}_\lambda\cup \mathcal{C}^*$ for any constant $\lambda$, then for every $k\ge2$, let $H_k$ be the first graph in $\mathcal{C}$ where $H_k$ is not a star and it has at least $2k^{10}$ edges. The time for finding $H_k$ is $f(k)$ for some function $f$. From Lemma \ref{lem:hk} we know that $H_k\in\mathcal{H}_k$, so that we can use Theorem \ref{thm:approxReduct} to construct the $\textsc{SLSN}_\mathcal{C}$ instance with demand $H_k$.

The parameter $p=|H_k|$ of the instance is only related with $k$, and the construction time is \textsc{FPT} from Theorem \ref{thm:approxReduct}. Therefore this is a \textsc{FPT} reduction from the \textsc{Multi-Colored D}$k$\textsc{S} problem with parameter $k$ and factor $\varepsilon$ to the unit-length polynomial-cost $\textsc{SLSN}_\mathcal{C}$ problem with approximation factor $\left(\frac{5}{4}-\varepsilon\right)$. From Corollary \ref{col:dks}, the unit-length polynomial-cost $\textsc{SLSN}_\mathcal{C}$ problem has no $\left(\frac{5}{4}-\varepsilon\right)$-approximation algorithm in $f(p)\cdot poly(n)$ time for any function $f$, assuming Gap-ETH.
\qed

\ifprocs
\else
\bibliography{SLSN}
\fi

\end{document}